\documentclass[acmsmall, screen, nonacm, letterpaper, 10pt]{acmart}

\AtBeginDocument{%
  }

\setcopyright{none}
\copyrightyear{2018}
\acmYear{2018}
\acmDOI{XXXXXXX.XXXXXXX}





\usepackage{bbm}
\usepackage{mathrsfs}
\usepackage{amsmath}
\usepackage{stmaryrd}
\usepackage{amsthm}


\newcommand{\labelSet}{\mathscr{L}}
\newcommand{\wtypeSet}{\mathscr{W}}
\newcommand{\lcSet}{\mathscr{Q}}
\newcommand{\gateSet}{\mathscr{G}}
\newcommand{\circSet}{\mathit{CRL}}
\newcommand{\mvalSet}{\mathit{BVAL}}
\newcommand{\bvalSet}{\mvalSet} 
\newcommand{\mtypeSet}{\mathit{BTYPE}}
\newcommand{\btypeSet}{\mtypeSet} 
\newcommand{\labOne}{\ell} \newcommand{\labTwo}{k} \newcommand{\labThree}{t} \newcommand{\labFour}{q}
\newcommand{\wtypeOne}{w} \newcommand{\wtypeTwo}{r}
\newcommand{\wtypeMultisetOne}{W}
\newcommand{\lcOne}{Q} \newcommand{\lcTwo}{L} \newcommand{\lcThree}{H} \newcommand{\lcFour}{K}
\newcommand{\gateOne}{g} 
\newcommand{\circuitOne}{\mathcal{C}} \newcommand{\circuitTwo}{\mathcal{D}} \newcommand{\circuitThree}{\mathcal{E}} \newcommand{\circuitFour}{\mathcal{F}}
\newcommand{\mtypeOne}{T}
\newcommand{\btypeOne}{\mtypeOne} 
\newcommand{\mtypeTwo}{U}
\newcommand{\btypeTwo}{\mtypeTwo} 
\newcommand{\cidentity}[1]{id_{#1}}
\newcommand{\gateapp}[3]{#1(#2) \to #3}
\newcommand{\unitt}{\mathbbm{1}}
\newcommand{\qubitt}{\mathsf{Qubit}}
\newcommand{\bitt}{\mathsf{Bit}}

\newcommand{\struct}[1]{\bar{#1}}
\newcommand{\bundle}{\struct} 
\newcommand{\concat}[2]{#1 :: #2}
\newcommand{\discarded}[1]{\operatorname{reusable}(#1)}

\newcommand{\natSet}{\mathbb{N}}


\newcommand{\vsubSet}{\mathit{VSUB}}

\newcommand{\termSet}{\mathit{TERM}}
\newcommand{\indexSet}{\mathit{INDEX}}
\newcommand{\valSet}{\mathit{VAL}}
\newcommand{\typeSet}{\mathit{TYPE}}
\newcommand{\ptypeSet}{\mathit{PTYPE}}
\newcommand{\varOne}{x} \newcommand{\varTwo}{y} 
\newcommand{\ivarOne}{i} \newcommand{\ivarTwo}{j} 
\newcommand{\contextOne}{\Gamma}
\newcommand{\pcontextOne}{\Phi}
\newcommand{\icontextOne}{\Theta} \newcommand{\icontextTwo}{\Xi} \newcommand{\icontextThree}{\Delta}
\newcommand{\termOne}{M} \newcommand{\termTwo}{N} 
\newcommand{\indexOne}{I} \newcommand{\indexTwo}{J} \newcommand{\indexThree}{E} \newcommand{\indexFour}{F}  
\newcommand{\valOne}{V} \newcommand{\valTwo}{W} \newcommand{\valThree}{X} \newcommand{\valFour}{Y} \newcommand{\valFive}{Z} \newcommand{\valSix}{S}
\newcommand{\typeOne}{A} \newcommand{\typeTwo}{B} 
\newcommand{\ptypeOne}{P} \newcommand{\ptypeTwo}{R}
 
\newcommand{\emptycontext}{\bullet}

\newcommand{\emptysub}{\emptyset}
\newcommand{\unitv}{*}
\newcommand{\tuple}[2]{\langle#1,#2\rangle}

\newcommand{\abs}[3]{\lambda #1_{#2}.#3}
\newcommand{\app}[2]{#1\,#2}
\newcommand{\liftoperator}{\operatorname{\mathsf{lift}}}
\newcommand{\lift}[1]{\liftoperator #1}
\newcommand{\force}[1]{\operatorname{\mathsf{force}} #1}
\newcommand{\boxoperator}{\operatorname{\mathsf{box}}}
\newcommand{\boxt}[2]{\boxoperator_{#1} #2}
\newcommand{\boxedCirc}[3]{(#1,#2,#3)}
\newcommand{\applyoperator}{\operatorname{\mathsf{apply}}}
\newcommand{\apply}[2]{\applyoperator(#1,#2)}
\newcommand{\letoperator}{\mathsf{let}}
\newcommand{\letin}[3]{\letoperator\; #1 = #2 \;\mathsf{in}\; #3}
\newcommand{\dest}[4]{\letoperator\; \tuple{#1}{#2} = #3 \;\mathsf{in}\; #4}

\newcommand{\returnoperator}{\mathsf{return}}
\newcommand{\return}[1]{\returnoperator\; #1}
\newcommand{\foldoperator}{\operatorname{\mathsf{fold}}}

\newcommand{\nil}{\mathsf{nil}}
\newcommand{\cons}[2]{\mathsf{cons}\; {#1}\; {#2}}

\newcommand{\lineararrow}{\multimap}
\newcommand{\arrowt}[4]{#1 \multimap^{#3}_{#4} #2}
\newcommand{\bang}[2]{\operatorname{!}^{#1}#2}
\newcommand{\circt}[3]{\operatorname{\mathsf{Circ}}^{#1}(#2,#3)}
\newcommand{\tensor}[2]{#1 \otimes #2}
\newcommand{\listt}[2]{\mathsf{List}^{#1}\,{#2}}

\newcommand{\natOne}{n}  
\newcommand{\iplus}[2]{#1 + #2}
\newcommand{\iminus}[2]{#1 - #2}
\newcommand{\imult}[2]{#1 \cdot #2}
\newcommand{\imax}[2]{\mathsf{max}\left(#1,#2\right)}

\newcommand{\iseq}[2]{#1 \varoast #2}
\newcommand{\ipar}[2]{#1 \varoplus #2}
\newcommand{\iseqBounded}[3]{\scalebox{1.5}{$\iseq{}{}$}_{#1<#2}#3}
\newcommand{\iparBounded}[3]{\scalebox{1.5}{$\ipar{}{}$}_{#1<#2}#3}
\newcommand{\iappend}[1]{\mathsf{append}_{#1}}
\newcommand{\igate}[2]{\mathsf{out}_{#1,#2}}
\newcommand{\iid}[1]{\mathsf{id}_{#1}}
\newcommand{\imempty}{\mathsf{e}}
\newcommand{\iwire}[1]{\mathsf{wire}_{#1}}
\newcommand{\iEqJudgment}[4]{{#2} \vDash_{#1} {#3} = {#4}}
\newcommand{\iLeqJudgment}[4]{{#2} \vDash_{#1} {#3} \leq {#4}}
\newcommand{\imaxsug}[1]{\mathsf{max}(#1)}

\newcommand{\config}[2]{(#1,#2)}
\newcommand{\eval}{\Downarrow}
\newcommand{\freshlabelsfunction}{\operatorname{\mathit{freshlabels}}}
\newcommand{\freshlabels}[1]{\freshlabelsfunction(#1)}
\newcommand{\appendfunction}{\operatorname{\mathit{emit}}}
\newcommand{\append}[3]{\appendfunction(#1,#2,#3)}
\newcommand{\sub}[2]{[#1/#2]}
\newcommand{\isub}[2]{\{#1/#2\}}


\newcommand{\vsubOne}{\gamma}
\newcommand{\isubOne}{\vartheta}  \newcommand{\isubThree}{\delta}
\newcommand{\extension}[2]{[#1 \mapsto #2]}


\newcommand{\freelabels}{\mathit{FL}}
\newcommand{\rcount}[1]{\#(#1)}

 


\newcommand{\multiset}[1]{\widetilde{#1}}


\newcommand{\PQ}{\textsf{Proto-Quipper}}

\newcommand{\PQR}{\textsf{Proto-Quipper-R}}
\newcommand{\PQRA}{\textsf{Proto-Quipper-RA}}
\newcommand{\Haskell}{\texttt{Haskell}}

\newcommand{\Quipper}{\texttt{Quipper}}
\newcommand{\Qiskit}{\texttt{Qiskit}}
\newcommand{\Cirq}{\texttt{Cirq}}

\newcommand{\qura}{\texttt{QuRA}}
\newcommand{\qasm}{\texttt{QASM}}
\newcommand{\crl}{\textsf{CRL}}

\newcommand{\rcmOne}{\mathscr{R}}

\newcommand{\carrierOne}{A}
\newcommand{\rcmFunOne}{\mu}
\newcommand{\rcmGatecount}{\mathbf{Gatecount}}
\newcommand{\gatecount}{\operatorname{gatecount}}
\newcommand{\rcmWidth}{\mathbf{Width}}
\newcommand{\width}{\operatorname{width}}
\newcommand{\rcmDepth}{\mathbf{Depth}}
\newcommand{\depth}{\operatorname{depth}}

\newcommand{\icsOne}{\mathfrak{c}}
\newcommand{\icsId}[2]{\mathfrak{id}_{#1}^{#2}}
\newcommand{\icsGate}[3]{\mathfrak{out}_{#1}^{#2,#3}}
\newcommand{\icsAppend}[2]{\mathfrak{append}_{#1}^{#2}}
\newcommand{\icsJudgment}[6]{#2\vdash_{#1}^{s} {#4} : {#3} \to {#5} ; {#6}}
\newcommand{\bundleJudgment}[5]{#2:#3\vdash_{#1}^b #4:#5}
\newcommand{\icsWFJudgment}[2]{#1 \vdash #2}
\newcommand{\gatescheme}[1]{\operatorname{GTS}_{#1}}
\newcommand{\ginputs}[1]{\operatorname{inputs}(#1)}

\newcommand{\icsGatecount}{\mathbf{Gatecount}}
\newcommand{\icsWidth}{\mathbf{Width}}
\newcommand{\icsDepth}{\mathbf{Depth}}

\newcommand{\itesOne}{\mathfrak{t}}

\newcommand{\itesMempty}[1]{\mathfrak{e}_{#1}}
\newcommand{\itesWires}[2]{\mathfrak{wire}_{#1}^{#2}}
\newcommand{\itesSeq}[1]{\mathfrak{seq}_{#1}}
\newcommand{\itesPar}[1]{\mathfrak{par}_{#1}}
\newcommand{\itesCJudgment}[8]{#2;#3:#4;#5 \vdash_{#1}^{c} #6:#7;#8}
\newcommand{\itesVJudgment}[7]{#2;#3:#4;#5 \vdash_{#1}^{v} #6:#7}
\newcommand{\itesIWFJudgment}[2]{#1 \vdash #2}
\newcommand{\itesTWFJudgment}[3]{#1;#2 \vdash #3}
\newcommand{\itesSubJudgment}[5]{#2;#3 \vdash_{#1} #4 <: #5}

\newcommand{\wirecontent}[1]{wc(#1)}
\newcommand{\abswirecontent}[3]{wc(#1;#2;#3)}

\newcommand{\resourcecontent}[1]{\#(#1)}
\newcommand{\blistt}[3]{\textsf{List}_{#1<#2}\,#3}
\newcommand{\mcirct}[4]{\operatorname{\mathsf{Circ}}^{#1}_{#2}(#3,#4)}
\newcommand{\rcons}[2]{\mathsf{rcons}\; {#1}\; {#2}}
\newcommand{\mfold}[4]{\foldoperator_{#1}\,{#2}\,{#3}\,{#4}}
\newcommand{\mboxt}[3]{\boxoperator_{#1,#2} #3}
\newcommand{\itesGatecount}{\mathbf{Gatecount}}
\newcommand{\itesWidth}{\mathbf{Width}}

\newcommand{\labels}{\operatorname{\mathit{labs}}}
\newcommand{\vcorrect}[5]{\mathit{LRC}_{#1}^{#2}(#3,#4,#5)}
\newcommand{\vsemint}[3]{\mathfrak{V}_{#1}^{#2}[#3]}
\newcommand{\tsemint}[4]{\mathfrak{T}_{#1}^{#2}[#3;#4]}
\newcommand{\csemint}[3]{\mathfrak{C}_{#1}^{#2}[#3]}
\newcommand{\isemint}[2]{\mathfrak{I}_{#1}[#2]}
\newcommand{\semCJudgment}[8]{#2;#3:#4;#5 \vDash_{#1}^{c} #6:#7;#8}
\newcommand{\semVJudgment}[7]{#2;#3:#4;#5 \vDash_{#1}^{v} #6:#7}


\renewcommand{\iff}{\Leftrightarrow}
\renewcommand{\implies}{\Rightarrow}

\usepackage{adjustbox}
\usepackage{quantikz}
\usepackage{mathpartir}
\usepackage{listings}
\lstdefinestyle{myStyle}{
	belowcaptionskip=1\baselineskip,
	breaklines=true,
	numbers=left,
  lineskip=-2pt,
	basicstyle=\footnotesize\ttfamily,
	keywordstyle=\bfseries\color{violet!50!black},
	commentstyle=\itshape\color{green!30!black},
	identifierstyle=\color{blue}
}
\lstdefinestyle{myStyleEmbedded}{
	belowcaptionskip=1\baselineskip,
	breaklines=true,
	numbers=none,
  lineskip=-2pt,
	basicstyle=\footnotesize\ttfamily,
	keywordstyle=\bfseries\color{violet!50!black},
	commentstyle=\itshape\color{green!30!black},
	identifierstyle=\color{blue}
}
\lstdefinestyle{myStyleEmbeddedPlain}{
	belowcaptionskip=1\baselineskip,
	breaklines=true,
	numbers=none,
  lineskip=-2pt,
	basicstyle=\footnotesize\ttfamily,
	keywordstyle=\normalshape
}

\begin{document}

\title{Flexible Type-Based Resource Estimation in Quantum Circuit Description Languages}

\author{Andrea Colledan}
\email{andrea.colledan@unibo.it}
\orcid{0000-0002-0049-0391}
\affiliation{%
	\institution{University of Bologna}
	\city{Bologna}
	\country{Italy}
}
\affiliation{%
	\institution{INRIA Sophia Antipolis}
	\city{Valbonne}
	\country{France}
}

\author{Ugo Dal Lago}
\email{ugo.dallago@unibo.it}
\orcid{0000-0001-9200-070X}
\affiliation{%
	\institution{University of Bologna}
	\city{Bologna}
	\country{Italy}
}
\affiliation{%
	\institution{INRIA Sophia Antipolis}
	\city{Valbonne}
	\country{France}
}

\renewcommand{\shortauthors}{Colledan and Dal Lago}

\begin{abstract}
  We introduce a type system for the \Quipper\ language designed to derive upper bounds on the size of the circuits produced by the typed program. This size can be measured according to various metrics, including \emph{width}, \emph{depth} and \emph{gate count}, but also variations thereof obtained by considering only some wire types or some gate kinds. The key ingredients for achieving this level of flexibility are effects and refinement types, both relying on \emph{indices}, that is, generic arithmetic expressions whose operators are interpreted differently depending on the target metric. The approach is shown to be correct through logical predicates, under reasonable assumptions about the chosen resource metric. This approach is empirically evaluated through the \qura\ tool, showing that, in many cases, inferring tight bounds is possible in a fully automatic way.
\end{abstract}

\begin{CCSXML}
	<ccs2012>
	<concept>
	<concept_id>10010583.10010786.10010813.10011726</concept_id>
	<concept_desc>Hardware~Quantum computation</concept_desc>
	<concept_significance>300</concept_significance>
	</concept>
	<concept>
	<concept_id>10011007.10011006.10011050.10011017</concept_id>
	<concept_desc>Software and its engineering~Domain specific languages</concept_desc>
	<concept_significance>500</concept_significance>
	</concept>
	<concept>
	<concept_id>10003752.10010124.10010138.10010142</concept_id>
	<concept_desc>Theory of computation~Program verification</concept_desc>
	<concept_significance>500</concept_significance>
	</concept>
	<concept>
	<concept_id>10003752.10003753.10003754.10003733</concept_id>
	<concept_desc>Theory of computation~Lambda calculus</concept_desc>
	<concept_significance>300</concept_significance>
	</concept>
	<concept>
	<concept_id>10003752.10003777.10003784</concept_id>
	<concept_desc>Theory of computation~Quantum complexity theory</concept_desc>
	<concept_significance>500</concept_significance>
	</concept>
	<concept>
	<concept_id>10003752.10003790.10011740</concept_id>
	<concept_desc>Theory of computation~Type theory</concept_desc>
	<concept_significance>300</concept_significance>
	</concept>
	</ccs2012>
\end{CCSXML}

\ccsdesc[500]{Theory of computation~Program verification}
\ccsdesc[500]{Theory of computation~Quantum complexity theory}
\ccsdesc[300]{Theory of computation~Lambda calculus}
\ccsdesc[300]{Theory of computation~Type theory}
\ccsdesc[500]{Software and its engineering~Domain specific languages}
\ccsdesc[300]{Hardware~Quantum computation}

\keywords{Effects, Refinement types, $\lambda$-calculus, Quantum computing, Quipper}


\maketitle

\section{Introduction}

Since its introduction, the quantum computing paradigm has promised to have a disruptive impact on many areas of computing, ranging from cryptography \cite{shor,regev} to machine learning \cite{wittek}. If, on the one hand, these promises have found clear confirmation on the side of the underlying computational model, i.e. that of quantum circuits, on the other hand the implementation of quantum circuits with acceptable error rates is still considered a difficult task \cite{quantum-challenges,nisq}, even when the amount of quantum data or the number of operations to be executed is relatively small. Despite the great progress recorded in the last five years \cite{osprey,jiuzhang,sycamore}, it seems safe to say that qubits and the ability to perform quantum operations on them are, unlike their classical relatives, very precious resources, which need to be accounted for with great care: as soon as the number of qubits gets nontrivial, coherence problem start to arise \cite{quantum-decoherence}, and the very number of operations that one can perform on qubits before their state becomes too noisy is very low. 

Quantum architectures thus seems to be intrinsically more error-prone than their classical 
counterparts. In spite of this, they are still accessed through high-level 
programming languages, and proposals in this direction 
abound \cite{survey-gay,survey-selinger}. Most quantum programming languages 
are somehow hybrid and allow \emph{both} classical \emph{and} quantum computing 
to happen within the same program. It is thus the job of the compiler, or of 
the interpreter, or of the programmer itself, to split the workload between 
classical, slow, but fault-tolerant hardware and quantum, fast, but faulty 
devices. Some of the  existing programming languages are designed around the 
QRAM model \cite{qram}, in which control is classical and data are quantum. Others, 
instead, take the form of \emph{circuit description languages}, that is, 
programming languages and libraries in which programs are not meant to directly 
execute quantum operations, but rather implement circuit-building 
algorithms, whose output circuits represent the quantum computations later 
executed on a quantum device.

In both QRAM-based and circuit description languages, having a precise idea of how 
many qubits and how many quantum operations are needed to solve a generic 
instance of a computational problem is of paramount importance, and the 
corresponding verification problem has recently received the attention of the 
programming language community~\cite{quantum-weakest,proto-quipper-r}. One might imagine 
this kind of analysis to closely resemble the classical counterpart, and 
rightly so. However, some key differences are at play. Of course, the 
underlying computational model is different, and this needs to be taken into 
account. But there is more: even if we focus on the quantum circuit model 
only, existing hardware architectures are anything but homogeneous, and 
different devices have their own cost-related strengths and weaknesses. For 
example, in some architectures the operation of swapping between qubits is 
extremely expensive, while in others it is not. As another example, circuits 
often contain both classical and quantum wires, and while keeping the latter 
under control is of paramount importance, the same cannot be said about the 
former. In other words, any resource analysis framework for quantum circuits 
should be designed so as to be \emph{as flexible as possible}, accomodating for 
different resource metrics and for varying definitions of the same metric.

In this paper we go precisely towards a flexible and scalable methodology for 
quantum resource estimation, by defining a family of type systems for the 
\Quipper\ language where each type system is capable of deriving upper 
bounds on the size of the circuits produced by a program with respect to a 
specific metric. In fact, all of these type systems can be seen as instances of 
\emph{one} type system, called \PQRA, in which some operators occurring in 
types and dealing with resource estimation are left uninterpreted. The 
aforementioned family is thus indexed on a \emph{resource metric 
interpretation} (RMI), that is, an interpretation of these symbols. Remarkably, 
an instance of \PQRA\ is provably correct with respect to the chosen metric 
once certain reasonable inequalities between the RMI and the 
metric's ground truth are shown to hold.
When coming up with our approach, we were inspired by the recent work of Colledan and 
Dal Lago on width estimation \cite{proto-quipper-r}, substantially 
generalizing it in order to ensure that a wide range of circuit metrics can be 
dealt with: not only \emph{global} metrics such as width, but also metrics that 
are by their nature \emph{local} to wires, such as the depth of a circuit.

Technically, \PQRA\ is based on two fundamental ingredients, namely refinement types \cite{refinement-haskell,refinement-ml} and effects\cite{effect-systems-revisited}. The former allow us to attach relevant quantitative 
information to wire types, i.e. types that represent data in the underlying 
quantum circuit, and thus allow for local metrics, such as depth, to be 
treated. Effects, on the other hand, attribute to each circuit-building 
computation a description of the size of the resulting circuit \emph{as a whole}, 
and as such are suitable for managing global metrics, as already shown by 
Colledan and Dal Lago. What ties the two types of metrics together as the 
vehicle for expressing bounds is a language of \emph{index terms}, i.e. 
arithmetic expressions that contain natural-valued variables, and as such can 
express upper bounds that depend on circuit-building parameters. Index terms 
are well-known to provide a flexible and powerful methodology for injecting arithmetic 
reasoning within a type 
system~\cite{linear-dependent-types-relative-completeness,linear-dependent-types-cbv,linear-dependent-types-privacy,geometry-of-types}.

The most important results of this work are the following:
\begin{itemize}
	\item
	On the one hand, the proof that the definitional apparatus based on \PQRA\ 
	and RMIs is, under mild assumptions, correct by construction. In other 
	words, that the derivable type judgments indeed provide correct upper bounds to the corresponding circuit metric. The proof proceeds by adapting the 
	logical predicates technique to \PQRA.
	\item
	On the other hand, the implementation of the generic \PQRA\ framework as the core of \qura, a tool for the resource verification of circuit description languages. By bringing \PQRA's genericity to \qura, we were able to effortlessly extend its functionality to the verification of gate count, T-count, and several variations of these and other circuit metrics.
\end{itemize}

The rest of the paper is structured as follows. Section \ref{sec:overview} introduces some core concepts underpinning this work: quantum circuits, their size, \Quipper\ and the problem of estimating the size of the circuits built by circuit description languages. Section \ref{sec:the-proto-quipper-language} then provides a brief, informal introduction to the \PQ\ family of calculi \cite{proto-quipper-s, proto-quipper-m,proto-quipper-r}, which our formalism is based on. Section \ref{sec:quantumcircuitmetrics} picks up where Section \ref{sec:overview} left off, formalizing the notions of circuit and circuit metric and introducing a simple parametric type system for the estimation of the size of individual circuits. Section \ref{sec:pqra} raises the level of abstraction, introducing \PQRA, a circuit description language with a type system capable of estimating different circuit metrics depending on an underlying semantic interpretation of resources. \PQRA\ is proven to be correct under mild assumptions in Section \ref{sec:correctness}. Section \ref{sec:implementation} describes the efforts to implement \PQRA\ within \qura. In sections \ref{sec:implementation} and \ref{sec:examples} we show that \PQRA\ can be used to analyze the Quantum Fourier Transform algorithm both in terms of global and local resource consumption. Lastly, sections \ref{sec:related-work} and \ref{sec:conclusion} present the related literature, conclusions, and future work.

\section{Resource Estimation in Circuit Description Languages: an Overview}
\label{sec:overview}

This section is dedicated to giving an overview of the resource estimation 
problem in quantum circuit description languages, and particularly in \Quipper. 
Programs written in these languages use both classical resources, necessary for 
the \emph{construction} of a quantum circuit, and quantum resources, 
necessary for its \emph{execution}. Given their scarcity, quantum resources are the ones we are interested in keeping under control. These means giving bounds on the \emph{size} of circuits. But what do we mean by that?

\subsection{Quantum Circuits and their Size}

Quantum circuits are the standard computational model of quantum computation. A circuit consists of an ordered sequence of \textit{gates} applied to a collection of 
\textit{wires}. The latter represent the individual bits and qubits involved in a 
computation, while gates represent the elementary operations applied to them, such as unitary 
transformations, measurements, initializations, and so on. Quantum circuits are often represented graphically, like in Figure \ref{fig:quantum-circuit}: time flows from left to 
right, horizontal lines represent either qubit (when single) or bit (when double) wires, and the various symbols that lie on them represent 
different operations applied to the corresponding qubits or bits.
Any wire starting from $\ket{\phi}$ is initialized as part of the computation to the known state $\phi$, while all other wires are input to the circuit. Any wire that ends with the ground symbol is discarded as part of the computation, while the others contain the circuit's output.

\begin{figure}
	\fbox{
	\begin{quantikz}[row sep = 4pt]
		 	& & 					& 			& \ctrl{1} & \gate{H} & \meter{} & \setwiretype{c}& \ctrl{2} & \ground{}
		\\
		\setwiretype{n} & \lstick{$\ket{0}$} & \setwiretype{q} &  \targ{} & \targ{} &  & \meter{} & \ctrl{1}\setwiretype{c} & & \ground{}
		\\
		\setwiretype{n} & \lstick{$\ket{0}$} & \gate{H}\setwiretype{q} & \ctrl{-1} & & & & \gate{X} & \gate{Z} & 
		\end{quantikz}
	}
	\caption{An example of quantum circuit: the quantum teleportation circuit}
	\label{fig:quantum-circuit}
\end{figure}
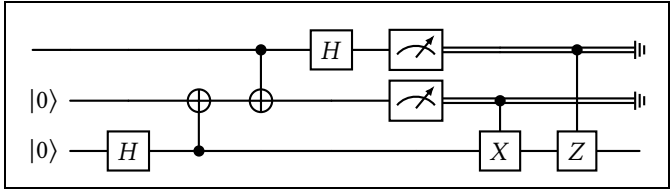

When faced with the question ``what size is a circuit'', there are many ways in which one can answer.
One can first of all consider a measure of \emph{space}, namely the \textit{width} of the circuit. This is defined as the maximum number of wires active at any point in time. By giving an upper bound $n$ to the width of a circuit, we can be sure that no more than $n$ qubits and bits are necessary to execute the circuit, i.e. that any hardware architecture providing at least $n$ logical bits and qubits can evaluate the circuit on any given input.
What if we are rather interested in keeping track of \emph{time}? In this case it makes more sense to consider the \textit{depth} of the circuit, namely the maximum number of gates encountered on any path from an input wire or a wire creation to an output wire or a wire discarding. The resulting metric nicely captures a form of parallel time complexity. Keeping the depth under control is helpful in all those scenarios in which quantum data get progressively less reliable as time passes, or as more operations are applied to them.
A third complexity measure, somehow a mixture of the previous two, is the so 
called \emph{gate count} of a circuit, simply defined as the total number of gates occurring in it. This gives a measure of the total number of 
operations applied to the quantum and classical data during the execution of the circuit. For example, the circuit in Figure 
\ref{fig:quantum-circuit} has a gate count of $8$, a width of $3$ and a depth 
of $6$.

\subsection{From Circuits to \Quipper}

Quantum circuits can be described gate-by-gate, either graphically, like we did in Figure \ref{fig:quantum-circuit}, or with the use of specific languages (like \qasm\ \cite{qasm}). However, because they have fixed input size, individual quantum circuits cannot truly implement algorithms (such as Shor's \cite{shor} or Grover's \cite{grover}), which are by definition capable of dealing with arbitrarily large inputs.
Rather, quantum algorithms are represented by \textit{circuit families}, which contain a different circuit for each input size. But how do we describe circuit families? This is where circuit description languages come into play: by manipulating already built circuits and combining them together in a way that \emph{depends} on the value of one or more (classical) parameters, a single circuit description program can construct an entire family of circuits, and thus implement an algorithm.
This approach to circuit description is nowadays adopted by many high-level quantum circuit description languages and libraries, such as \Qiskit\ \cite{qiskit}, \Cirq\ \cite{cirq} and \Quipper\ \cite{quipper}.

\Quipper, in particular, allows programmers to describe quantum circuits in a very high-level fashion. Besides the gate-by-gate approach, \Quipper\ also supports parametric and hierarchical circuits, promoting a view in which circuits are first-class citizens. \Quipper\ has been shown to be scalable, that is to say, it has been shown to be able to describe complex quantum algorithms that, depending on input parameters, translate to circuits possibly involving trillions of gates applied to millions of qubits \cite{quipper-scalable}.

The already mentioned paper by Colledan and Dal Lago~\cite{proto-quipper-r} introduces a novel methodology by which \Quipper\ programs can be statically analyzed as for the width of the circuits they produce, all this parametrically on the input size. However, notions of size like depth and gate count simply \emph{cannot} be dealt with in this framework. Moreover, the considered notion of width cannot be tuned in any way (e.g. so as to ignore bits). In a sense, our work can be seen as a nontrivial generalization of Colledan and Dal Lago's technique, in which not only depth and gate count, but in fact \emph{any} metric satisfying certain simple conditions can be analyzed.

This section showcases a number of \Quipper\ programs, in order to illustrate how refinements and effects can be used to keep track of the resource information that we are interested in. In doing so, we will find that width and gate count have to be treated in a fundamentally different way than depth. We avoid any formality for now, leaving it for the forthcoming sections.

Let us start with the example of Figure \ref{fig:dumbnot}. The \Quipper\ function on the left builds the quantum circuit on the right: a dumb and expensive implementation of the quantum not operation. The \texttt{dumbNot} function implements negation using a controlled not gate and a temporary qubit \texttt{a}, which is initialized and discarded within the body of the function. This qubit does not appear in the interface of the circuit, but it clearly has an impact on its overall size. Recalling what we said previously, we easily see that this circuit has width $2$, gate count $1$ and depth $1$.

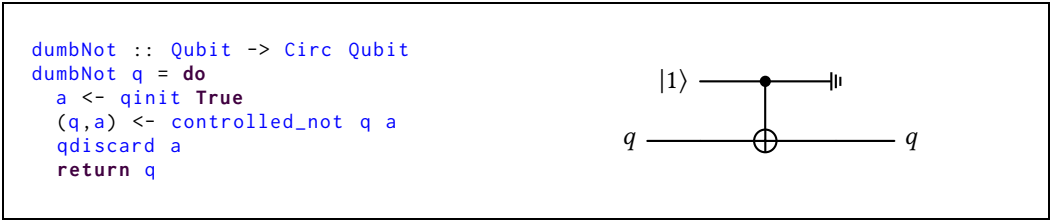
\begin{figure}
	\centering
	\adjustbox{padding=10pt,minipage=\linewidth-20pt,frame}{
	\begin{minipage}{0.49\textwidth}
		\begin{lstlisting}[language=Haskell, style=myStyleEmbedded]
dumbNot :: Qubit -> Circ Qubit
dumbNot q = do
  a <- qinit True
  (q,a) <- controlled_not q a
  qdiscard a
  return q
		\end{lstlisting}
	\end{minipage}
	\begin{minipage}{0.5\textwidth}
		\centering
		\begin{quantikz}[column sep=2em]
			\setwiretype{n}& \lstick{$\ket{1}$} & \ctrl{1} \setwiretype{q} & \ground{} \\
			\lstick{$q$} && \targ & && \rstick{$q$}
		\end{quantikz}
	\end{minipage}
	}
\caption{A \Quipper\ function implementing quantum negation, and the circuit it builds}
\label{fig:dumbnot}
\end{figure}

Consider now the higher-order function in Figure \ref{fig:iter}. This function takes as input a circuit building function \texttt{f}, an integer \texttt{n} and describes the circuit obtained by applying \texttt{f}'s circuit \texttt{n} times to the input qubit \texttt{q}. Knowing the size of the circuit built by \texttt{dumbNot}, what is the size of the circuit built by \texttt{iter dumbNot n}? From a width perspective, it is easy to see that the composition in sequence obeys some form of \textit{maximum} semantics: outputs become inputs and discarded wires are reused when new ones are initialized. Thus, the resulting circuit has still width $2$. What about its gate count and depth? From their perspective, the composition in sequence clearly obeys a form of \textit{addition} semantics, although for different reasons: in the first case, the gate count of two circuits composed together in any way is the sum of their individual gate counts, whereas in the case of depth, two circuits stringed together on the same path have depth equal to the sum of their indivdual depths. Therefore, the circuit built by \texttt{iter dumbNot n} has gate count and depth both equal to $n$.

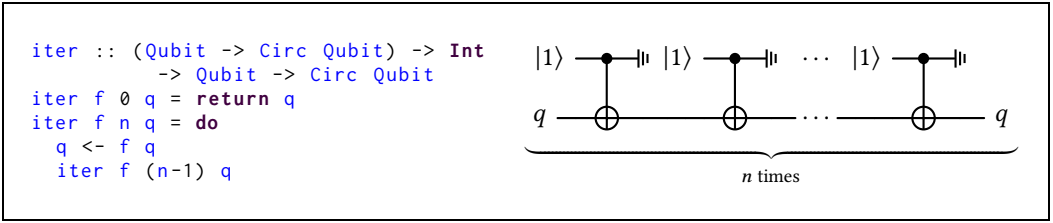
\begin{figure}
	\adjustbox{padding=10pt,minipage=\linewidth-20pt,frame}{
	\centering
	\begin{minipage}{0.49\linewidth}
		\begin{lstlisting}[language=Haskell, style=myStyleEmbedded]
iter :: (Qubit -> Circ Qubit) -> Int
          -> Qubit -> Circ Qubit
iter f 0 q = return q
iter f n q = do
  q <- f q
  iter f (n-1) q
		\end{lstlisting}
	\end{minipage}
	\begin{minipage}{0.5\linewidth}
		\centering
		$\underbrace{
			\begin{quantikz}[column sep=0.7em]
				\setwiretype{n}& \lstick{$\ket{1}$} & \ctrl{1} \setwiretype{q} & \ground{}  & \setwiretype{n} && \lstick{$\ket{1}$} & \ctrl{1} \setwiretype{q} & \ground{} & \setwiretype{n} \ \ldots \ &&&\lstick{$\ket{1}$} & \ctrl{1} \setwiretype{q} & \ground{} \\
				\lstick{$q$} && \targ{} & &&& &\targ{}&& \ \ldots \ &&&& \targ{}&& \rstick{$q$}
			\end{quantikz}
		}_{n \text{ times}}$
	\end{minipage}
}
	\caption{A higher-order \Quipper\ function and the result of its application to \texttt{dumbNot} from Figure \ref{fig:dumbnot}}
	\label{fig:iter}
\end{figure}

\subsubsection{A Type-based Analysis}
Intuitively, this kind of reasoning seems natural, but how can we draw the same conclusions by means of static analysis? We are facing two major hurdles here: higher-order types and input-dependent size. As for the former, we closely follow Colledan and Dal Lago's approach. In their contribution, the arrow type is annotated both with the size of the circuit produced by the corresponding function once applied (a standard technique in effect typing \cite{types-and-effects}) and with information about the wires enclosed in the function's closure (in a way reminiscent of closure types \cite{closure-types}). This last piece of information takes the form of a composite type and is essential to guarantee the correctness of the analysis even in the case of data hiding, since wires have a size, even by themselves. The \texttt{dumbNot} function does not capture anything from its environment, so it has type $\qubitt \to^2_{\unitt} \qubitt$ or $\qubitt \to^1_{\unitt} \qubitt$, depending on the resource being considered.

This brings us to input dependency. It is clear that the aforementioned size 
annotation cannot be a mere \textit{constant}, but should rather \emph{depend} 
on the parameters of the circuit-building function, in this case the number $n$ that we pass to \texttt{iter}. We solve this problem by allowing size annotations to contain 
natural-valued variables, which correspond to these classical parameters.
All in all, knowing that \texttt{dumbNot} has width-annotated type $\qubitt \to^2_\unitt \qubitt$, the system is capable of (correctly) attributing type $\qubitt \to^2_\unitt \qubitt$ to the \texttt{iter dumbNot n} term, by trivially taking the maximum of $n$ copies of $2$. Similarly, knowing that \texttt{dumbNot} has gate-count-annotated type $\qubitt \to^1_\unitt \qubitt$, the system is capable of giving type $\qubitt \to^n_\unitt \qubitt$ to \texttt{iter dumbNot n}, by adding up $n$ copies of $1$. The same goes for depth.

\begin{figure}
	\adjustbox{padding=10pt,minipage=\linewidth-20pt,frame}{
		\centering
		\begin{minipage}{0.5\linewidth}
			\begin{lstlisting}[language=Haskell, style=myStyleEmbedded]
mapHadamard :: [Qubit] -> Circ [Qubit]
mapHadamard [] = return []
mapHadamard (q:qs) = do
  q <- hadamard  q
  qs <- mapHadamard qs
  return (q:qs)
			\end{lstlisting}
		\end{minipage}
		\begin{minipage}{0.5\linewidth}
			\centering
			\begin{quantikz}[column sep=1em, row sep=8pt]
				\lstick{$q_1$} & \gate{H} & & & \rstick{$q_1$}\\
				\lstick{$q_2$} & & \gate{H} & & \rstick{$q_2$}\\
				\wave&&&&\\
				\lstick{$q_n$} & & &\gate{H} & \rstick{$q_n$}
			\end{quantikz}
		\end{minipage}
	}
	\caption{A \Quipper\ function with variable input size, and the circuit it builds on an input of size $n$}
	\label{fig:mapHadamard}
\end{figure}
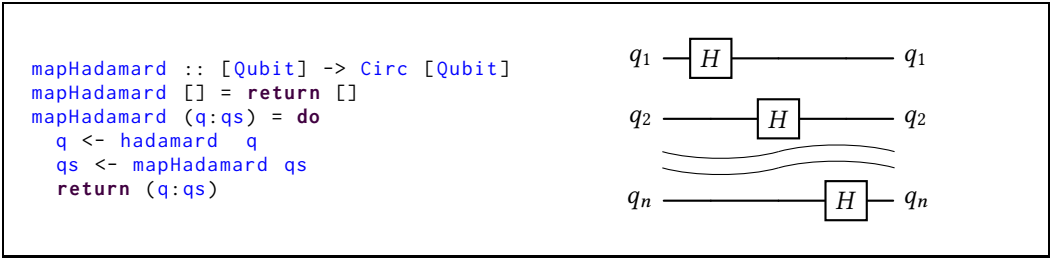

Let us now consider a slightly different example. Figure \ref{fig:mapHadamard} shows a \Quipper\ function that returns a circuit in which the Hadamard gate is applied to $n$ qubits, a common preprocessing step in many quantum algorithms. Knowing that by itself the Hadamard gate has width, gate count and depth equal to $1$, what size is the circuit produced by \texttt{hadamardN} when we apply it to a list of $n$ qubits? This is slightly more complicated than the previous example: at each step we apply a single Hadamard gate, while $n-1$ qubits flow alongside it. As we mentioned eralier, wires have a size of their own. Specifically, a width of $1$ and (trivially) a gate count and a depth of $0$. A bundle of wires of width $n-1$ thus flows \textit{in parallel} to each Hadamard gate. This is to all effects a composition in parallel of two circuits. How do our size metrics behave in this scenario? From the point of view of width, composition in parallel clearly obeys an addition semantics. Gate count also obeys an addition semantics, while depth follows a maximum semantics, but this is fairly irrelevant in this case. We therefore know that each step of \texttt{mapHadamard} has width $n$, gate count $1$ and depth $1$. By iterating $n$ steps in sequence following the same logic from the previous example, we get that \texttt{mapHadamard} applied to a list of $n$ qubits produces a circuit of width $n$, gate count $n$ and depth $n$. Using the same notation for sized lists adopted in \cite{proto-quipper-r}, we can write the type of \texttt{mapHadamard} as $\listt{n}{\qubitt} \to^n_\unitt \listt{n}{\qubitt}$.

\subsubsection{Global and Local Metrics}
Immediately, it is clear that something is not quite right: while we got exact estimates for width and gate count, the estimate for depth is a gross overapproximation. Recall that depth is defined as the maximum number of gates encountered \emph{on any path} from an input to an output of the circuit. It is obvious that every such path in the circuit in Figure \ref{fig:mapHadamard} has only $1$ gate on it, regardless of $n$. However, an overapproximation arises because we have abstracted away information about \emph{which output wires} are matched to \emph{which input wires}, so that when we sequence the $n$ steps, we must assume the worst case scenario, i.e. that the operations of each step are applied to the same wire.

This example highlights a fundamental distinction between metrics such as width and gate count and metrics such as depth. The former are in a way properties of the circuit as a whole, they are \emph{global} and thus amenable to good notions of composition in sequence and parallel. On the other hand, depth is a property of \emph{wires}, rather than \emph{circuits}, and can only be composed adequately at the scale of the individual qubit or bit. The traditional definition of depth of a circuit is thus just an aggregation of the depth of its wires.
This knowledge motivates the most original contribution of this work: instead of keeping track of depth (or other local metrics) via effects, like we do with width and gate count, we do so by directly \emph{refining wire types with depth information}. Let us momentarily ignore global metrics. Let $\qubitt^i$ be the type of a qubit wire at depth $i$ and let \texttt{hadamard} have type $\qubitt^i \to \qubitt^{i+1}$, meaning that the application of a Hadamard gate to a qubit increases its depth by $1$. Then, we can easily derive that \texttt{mapHadamard} has type $\listt{n}{\qubitt^i} \to \listt{n}{\qubitt^{i+1}}$. If we choose $i=0$ and take the max of $n$ copies of $i+1=1$, we correctly conclude that the depth of the circuit built by the function is $1$.
\section{The \PQ\ Language}
\label{sec:the-proto-quipper-language}

As a language, \Quipper\ is embedded in \Haskell, and because of this it lacks a proper formal semantics. However, over the years, a number of calculi have been developed in order to formalize fragments \cite{proto-quipper-s,proto-quipper-m} and extensions \cite{proto-quipper-d,proto-quipper-l,proto-quipper-dyn,proto-quipper-k} of the \Quipper\ language. These calculi are referred to collectively as the \PQ\ language family.
In its most basic form, \PQ\ is a linear lambda calculus equipped with bespoke constructs to build circuits. Circuits are built as a side-effect of a computation, behind the scenes, but more importantly they can also be manipulated as data within the language.

\begin{figure}
	\centering
	\resizebox{!}{!}{\fbox{
		\begin{tabular}{l l r l}
			Types	& $\typeSet$	& $\typeOne,\typeTwo$ & $
			::= \unitt
			\mid \wtypeOne
			\mid \bang{}{\typeOne}
			\mid \tensor{\typeOne}{\typeTwo}
			\mid \typeOne \lineararrow \typeTwo
			\mid \listt{}{\typeOne}
			\mid \circt{}{\mtypeOne}{\mtypeTwo}$\\
			Parameter types \hspace{5pt}		& $\ptypeSet$	& $\ptypeOne,\ptypeTwo$ & $
			::= \unitt
			\mid \bang{}{\typeOne}
			\mid \tensor{\ptypeOne}{\ptypeTwo}
			\mid \listt{}{\ptypeOne}
			\mid \circt{}{\mtypeOne}{\mtypeTwo}$\\
			Bundle types	& $\btypeSet$	& $\mtypeOne,\mtypeTwo$ & $
			::= \unitt
			\mid \wtypeOne
			\mid \tensor{\mtypeOne}{\mtypeTwo}$\\
      \\
      Terms	& $\termSet$	& $\termOne,\termTwo$ & $ ::= \app{\valOne}{\valTwo}
			\mid \dest{\varOne}{\varTwo}{\valOne}{\termOne}
			\mid \force{\valOne}
			\mid \boxt{\mtypeOne}{\valOne}$\\&&&$
			\mid \apply{\valOne}{\valTwo}
			\mid \return{\valOne}
			\mid \letin{\varOne}{\termOne}{\termTwo}$\\
			Values	& $\valSet$	& $\valOne,\valTwo$ & $
			::= \unitv
			\mid \varOne
			\mid \labOne
			\mid \abs{\varOne}{\typeOne}{\termOne}
			\mid \lift{\termOne}
			\mid \boxedCirc{\struct\labOne}{\circuitOne}{\struct{\labTwo}}
			\mid \tuple{\valOne}{\valTwo}$\\&&&$
			\mid \nil
			\mid \cons{\valOne}{\valTwo}
			\mid \mfold{}{\valOne}{\valTwo}{\valThree}$\\
			Wire bundles \hspace{5pt}	& $\bvalSet$	& $\struct\labOne,\struct\labTwo$ & $
			::= \unitv
			\mid \labOne
			\mid \tuple{\struct\labOne}{\struct\labTwo}$
		\end{tabular}
	}}
	\caption{Types and syntax of \PQ}
	\label{fig:pq-types}
\end{figure}

The grammars for \PQ\ types and terms are given in Figure \ref{fig:pq-types}. Speaking at a high level, we can say that \PQ\ employs a linear-nonlinear typing discipline. In particular, wire types ($\qubitt$ and $\bitt$) and arrows are treated linearly. A subset of types, called \textit{parameter types}, represents the values of the language that can be copied. As customary in linear logic, any term of type $\typeOne$ can be \textit{lifted} into a parameter type $\bang{}{\typeOne}$ provided it does not consume linear resources.

Let us focus on the language of terms, and specifically on its domain-specific constructs. On the side of values, we have \textit{labels} and \textit{boxed circuits}.
A label $\labOne$ represents a reference to an output wire of the circuit currently being built and it is unsurprisingly assigned a wire type. Ordered structures of labels form a subset of values called \textit{wire bundles}, which are given \textit{bundle types}.
On the other hand, a boxed circuit $\boxedCirc{\struct\labOne}{\circuitOne}{\struct\labTwo}$ represents a circuit $\circuitOne$ as a datum within the language, together with its input and output wires, given as wire bundles $\struct\labOne$ and $\struct\labTwo$. A boxed circuit is given type $\circt{}{\mtypeOne}{\mtypeTwo}$, where $\mtypeOne$ and $\mtypeTwo$ are respectively the input and output types of the circuit. Boxed circuits can be duplicated and manipulated (e.g. reversed) by primitive functions, but more importantly they can be appended to the underlying circuit via the $\applyoperator$ operator.
This is precisely how circuits are built in \PQ: the $\applyoperator$ operator takes as first argument a boxed circuit $\boxedCirc{\struct\labOne}{\circuitOne}{\struct\labTwo}$ and appends $\circuitOne$ to the underlying circuit $\circuitTwo$. The second argument of $\applyoperator$, a bundle of wires $\struct\labThree$ coming from the free output wires of $\circuitTwo$, tells us \emph{where exactly} to append $\circuitOne$ to $\circuitTwo$.

We expect the language to be equipped with constant boxed circuits corresponding to fundamental circuit operations (e.g. Hadamard, measurement, etc.), but the programmer can also define their own custom circuits via the $\boxoperator$ operator. Intuitively, $\boxoperator$ takes as input a circuit-building function $f$ and executes it in a sandboxed environment, isolated from the current circuit. Therefore, $f$ produces a standalone circuit $\circuitOne$, which is returned by $\boxoperator$ as a boxed circuit of the form $\boxedCirc{\struct\labOne}{\circuitOne}{\struct\labTwo}$.



On the classical side of things, note that \PQ\ does \textit{not} support general recursion. Instead, a limited form of recursion on lists is provided in the form of a primitive $\foldoperator$ constructor, which takes as argument a (duplicable) step function $\valOne$ of type $\bang{}{((\tensor{\typeTwo}{\typeOne}) \lineararrow \typeTwo)}$, an initial accumulator $\valTwo$ of type $\typeTwo$, and folds $\valOne$ over a list $\valThree$ of type $\listt{}{\typeOne}$, to obtain a value of type $\typeTwo$. Folds are not sufficient to recover the full power of general recursion, but it appears that they are expressive enough to implement several quantum algorithms.


To conclude this section, we just remark how all of the \Quipper\ programs shown in Section \ref{sec:overview} can easily be encoded in \PQ. However, \PQ's system of simple types is not meant to reason about resources and it is therefore unable to tell us anything about the size of the circuits produced by these programs. Of course, the option of implementing \texttt{mapHadamard}, running it on a concrete input parameter and checking the size of the circuit produced at run-time is not ruled out, but this approach amounts to \emph{testing}, rather than \emph{verification}, and it is unable to yield general, parametric results on the size of the circuits produced by circuit families, i.e. algorithms.
\section{Circuits and Circuit Resource Metrics}
\label{sec:quantumcircuitmetrics}

So far, we have talked about circuits and their size metrics very informally, at a high level. This sections aims to formalize these two concepts, starting with what a \textit{quantum circuit} is, formally speaking, and proceeding to outline the class of metrics we are interested in analyzing.

\subsection{The Circuit Representation Language}

In the literature \cite{proto-quipper-m,proto-quipper-d,proto-quipper-l}, circuits are usually taken to be 
morphisms in some symmetric monoidal category. This model for circuits has the 
advantage of being abstract and highly compositional, but it is not 
particularly handy for reasoning about \textit{intensional} properties of 
circuits, such as their size. For this reason, we adopt instead the 
concrete circuit model presented in \cite{proto-quipper-r}, called 
\textit{Circuit Representation Language} (\crl) as the target of our notion of circuit description. As we said, individual circuits can be described in a gate-by-gate fashion, which is exactly what is done in \crl:
\begin{equation}
	\circSet \quad \circuitOne,\circuitTwo,\circuitThree,\circuitFour := \cidentity{\lcOne} \mid \circuitOne;\gateapp{\gateOne}{\bundle\labOne}{\bundle\labTwo},
\end{equation}
Here $\lcOne$ is a collection of typed wires, $\struct\labOne,\struct\labTwo$ are wire bundles as introduced in Section \ref{sec:the-proto-quipper-language}, and $\gateOne$ comes from an 
arbitrary, but fixed set $\gateSet$ of elementary operations. In other words, a 
circuit is either the identity on a collection of wires $\lcOne$ or a 
circuit followed by the application of operation $\gateOne$ to the wires in
$\bundle\labOne$. Operation $\gateOne$ outputs the wires in $\bundle\labTwo$. Note that we 
assume $\gateSet$ to include both quantum gates and more general operations 
such as measurements, wire initializations, etc. Also, circuits can be concatenated: we write $\concat{\circuitOne}{\circuitTwo}$ to denote $\circuitOne$, followed by all the elementary operations in $\circuitTwo$.

\subsection{Recursive Circuit Metrics}

Given a \crl\ circuit, how do we measure its size? That is, what is a \textit{circuit metric}? In general, we are interested in metrics that are defined recursively on the structure of a circuit, and this leads us to the following definition.

\begin{definition}[Recursive Circuit Metric]
	A \emph{recursive circuit metric} $\rcmOne$ is a pair
	$(\carrierOne,\rcmFunOne_\rcmOne)$,
	where $\carrierOne$ is a set called the \emph{carrier} of $\rcmOne$ and 
	$\rcmFunOne : \circSet \to \carrierOne$ is the \emph{metric function} 
	of $\rcmOne$, defined by structural recursion on circuits.
\end{definition}

In the wake of Section \ref*{sec:overview}, we distinguish between two fundamental 
classes of recursive circuit metrics: \textit{global} and \textit{local} 
metrics. Global metrics assign a single natural quantity to an entire circuit and in a way formalize the notion of \emph{size} of a circuit. 
Thus, we say that a recursive circuit metric is \textit{global} when 
$\carrierOne = \natSet$. Local metrics, on the other hand, assign a natural 
quantity to each \textit{wire} of a circuit. Thus, a local circuit metric has 
$\carrierOne = (\labelSet \to \natSet) \to (\labelSet \to \natSet)$, where 
$\labelSet$ is the set of wire labels. This signature means that a local 
metric is a higher-order function that, given an assignment of resource values 
to the \textit{input} wires of a circuit, returns an assignment of resource 
values to the \textit{output} wires of the circuit. The following examples 
illustrate how some common circuit metrics fall into this categorization.

\begin{example}[Global Metrics]
\label{ex:global-rcm}
The circuit metrics $\rcmGatecount = (\natSet, \gatecount)$ and $\rcmWidth = (\natSet, \width)$, where
	\begin{align*}
		\gatecount(\cidentity{\lcOne}) &= 0,\\
		\gatecount(\circuitOne;\gateapp{\gateOne}{\bundle\labOne}{\bundle\labTwo})
		 &= \gatecount(\circuitOne) + 1;
    \\\\
    \width(\cidentity{\lcOne}) &= |\lcOne|,\\
    \width(\circuitOne;\gateapp{\gateOne}{\bundle\labOne}{\bundle\labTwo}) &= \width(\circuitOne) + \max(0,(|\bundle\labTwo| - |\bundle\labOne|)-\discarded{\circuitOne}),
	\end{align*}
	are both global metrics. The quantity $(|\bundle\labTwo| - |\bundle\labOne|)$ denotes the number of new qubits or bits initialized by $\gateOne$, while $\discarded{\circuitOne}$ is the number of wires reusable from $\circuitOne$, obtained by subtracting the number of its outputs from its width.
\end{example}

\begin{example}[Local Metrics]
\label{ex:local-rcm}
	The circuit metric $\rcmDepth = ((\labelSet \to \natSet) \to \labelSet \to \natSet, \depth)$, where
	\begin{align*}
		\depth(\cidentity{\lcOne})(in)(\labThree) &= in(t), \\
		\depth(\circuitOne;\gateapp{\gateOne}{\bundle\labOne}{\bundle\labTwo})(in)(\labThree) &= \begin{cases}
			\max \{ \depth(\circuitOne)(in)(\labOne) \mid \labOne \in \freelabels(\bundle\labOne) \} + 1 & \textnormal{if } \labThree\in\freelabels(\bundle\labTwo),\\
			\depth(\circuitOne)(in)(\labThree) & \textnormal{otherwise},
		\end{cases}
	\end{align*}
  is a local metric. $\freelabels(\bundle\labOne)$ is the set of label names occurring in $\bundle\labOne$.
\end{example}

\subsection{Resource Aware Circuit Signatures}

The notions of global and local metrics that we have just given are able to effectively capture several common circuit resource metrics. Our end goal is to reason about these quantities within a type system, so the next step is deciding how to encode these definitions within a typing judgment. We proceed gradually, starting with a type system for circuits and moving up to circuit description languages only in the next section.
Circuits are traditionally amenable to a form of \textit{signature}, which tells us what are the inputs and outputs of a circuit. In the case of \crl, these signatures are derived via a rudimentary typing system which concludes judgments of the form $\circuitOne:\lcOne\to\lcTwo$, where $\lcOne$ and $\lcTwo$ are collections of typed wires.
Including global resource information into these judgments is not hard: it suffices to decorate them with a quantity corresponding to the value of the global metric on $\circuitOne$. This leads to a judgment of the form $\circuitOne:\lcOne\to\lcTwo;\indexOne$, where $\indexOne$ is what we previously called an \textit{index}, that is, an arithmetical expression representing a natural number.

Encoding local resources, on the other hand, is not as simple. This is because a local resource metric is not a natural value, but rather a function (and a higher-order function at that). We need our judgments to encode a function from assignments of \textit{something} to input labels to assignments of \textit{something} to output labels. Now, recall that we said that $\lcOne$ and $\lcTwo$ are ``collections of typed wires''. More formally, these are called \textit{label contexts}, and they are defined as mappings from (a subset of) label names to wire types. It therefore feels natural to rely on these mappings to encode local metrics. We need the assignment of local resource values to \textit{input} labels to be arbitrary, so we decorate the wire types in the codomain of $\lcOne$ with distinct \textit{index variables}. Let $\icontextTwo$ be the set of such variables. It is easy to see that by assigning different natural numbers to the variables in $\icontextTwo$, we find that $\lcOne$ encodes all the possible functions from the set of input label names to $\natSet$. On the other hand, we want the assignment of local resources to \textit{output} labels to depend on the resource values of the input labels, so we decorate the wire types in the codomain of $\lcTwo$ with index expressions built from the same variables in $\icontextTwo$. The precise form of these expressions depends of course on the shape of the circuit. This way, we can effectively encode a function $(\labelSet \to \natSet) \to (\labelSet \to \natSet)$ within a judgment, which now has the following form: 
\begin{equation}
  \icsJudgment{}{\icontextTwo}{\lcOne}{\circuitOne}{\lcTwo}{\indexOne}.
\end{equation}
We call this judgment a \textit{resource-aware circuit signature} (RACS). Before we move on the the derivation rules for these judgments, it is worth discussing exactly \emph{what kind of expressions} we use to annotate judgments and wire types. In previous sections, we said that indices are fundamentally arithmetic expressions, but is it enough to allow just the standard arithmetic operators within their language? On one hand, we saw in Section \ref{sec:overview} that a satisfactory resource analysis can be carried out using comparatively simple operations, such as maxima and additions. On the other hand, we also saw that different metrics require different operations to happen in different places. If we limited the language of indices to standard arithmetic operations, then we would have to provide \emph{distinct syntactical rules} for each metric, even at the basic level of the individual circuit. Because this is unwieldy, we follow a different path. Alongside the standard mathematical operations, we also allow a number of abstract resource operators to occur within indices. These operators represent how the local and global metrics evolve as we build up a \crl\ circuit, but they have no standard arithmetic interpretation by themselves. Rather, they act as placeholders, and are given different meanings depending on the metric under analysis.
How many such operators do we need? Ideally, we want to be able to replicate the recursive definition of circuit metrics as closely as possible. For this reason, we need an operator for each \crl\ constructor and for each metric kind (global and local). Knowing however that, due to its nature, the local metric of the identity circuit is always the identity function, we can make do with just three (families of) operators:
\begin{itemize}
  \item $\iid{\multiset\lcOne}$, which represents the global metric associated with the $\cidentity{\lcOne}$ circuit. Because we do not allow global metrics to depend on local metrics, $\iid{\multiset\lcOne}$ is indexed on the multiset of undecorated wire types in $\lcOne$'s codomain, denoted by $\multiset\lcOne$, as opposed to $\lcOne$ itself. For simplicity, we often write $\iid{\multiset\lcOne}$ as just $\multiset{\lcOne}$, with an abuse of notation.
  \item $\iappend{\gateOne}(\indexOne,\indexTwo,\indexThree,\indexFour)$, indexed on $\gateOne\in\gateSet$, which represents the global metric associated to the application of $\gateOne$ to a circuit of size $\indexOne$. $\indexTwo$ represents the size of the wires passing alongside $\gateOne$, while $\indexThree$ and $\indexFour$ are the size of the wires that go into and come out of $\gateOne$, respectively.
  \item $\igate{\gateOne}{n}(\indexOne_1,\dots,\indexOne_m)$, indexed on $\gateOne\in\gateSet$ and $n\in\natSet$, which represents the local metric value of the $n$-th output of $\gateOne$ as a function of the local resource values of its $m$ inputs.
\end{itemize}

This way, we can define a single set of syntactical rules for the derivation of RACSs, which is the one given in Figure \ref{fig:racs}. These rules in turn represent a \emph{family} of formalisms, indexed on the semantic interpretation of the abstract resource operators. We call such an interpretation a \emph{circuit metric interpretation} (CMI) and indicate it with the metavariable $\icsOne$ in subscript to the turnstile.

\begin{figure}
  \centering
  \fbox{
    \begin{mathpar}
      \inferrule[id]
      {\icsWFJudgment{\icontextTwo}{\lcOne}
        \\
        \iEqJudgment{\icsOne}{}{{\multiset\lcOne}}{\indexOne}}
      {\icsJudgment{\icsOne}{\icontextTwo}{\lcOne}{\cidentity{\lcOne}}{\lcOne}{\indexOne}}
      \and
      \inferrule[append]
      {\icsJudgment{\icsOne}{\icontextTwo}{\lcOne}{\circuitOne}{\lcTwo,\lcThree}{\indexOne}
        \\
        \iEqJudgment{\icsOne}{}{\iappend{\gateOne}{}(\indexOne,{\multiset\lcTwo},{\multiset\lcThree},{\multiset\lcFour})}{\indexTwo}
        \\\\
        \gatescheme{\gateOne} = (\icontextThree,\btypeOne,\btypeTwo)
        \\
        \isubThree \in \isemint{\icontextTwo}{\icontextThree}
        \\\\
        \bundleJudgment{\icsOne}{\icontextTwo}{\lcThree}{\bundle\labOne}{\isubThree(\btypeOne)}
        \\
        \bundleJudgment{\icsOne}{\icontextTwo}{\lcFour}{\bundle\labTwo}{\isubThree(\btypeTwo)}
      }
      {\icsJudgment{\icsOne}{\icontextTwo}{\lcOne}{\circuitOne;\gateapp{\gateOne}{\bundle\labOne}{\bundle\labTwo}}{\lcTwo,\lcFour}{\indexTwo}}
      \\
      \inferrule[unit]{}
      {\bundleJudgment{\icsOne}{\icontextTwo}{\emptycontext}{\unitv}{\unitt}}
      \and
      \inferrule[label]
      {\iEqJudgment{\icsOne}{\icontextTwo}{\indexOne}{\indexTwo}}
      {\bundleJudgment{\icsOne}{\icontextTwo}{\labOne:\wtypeOne^{\indexOne}}{\labOne}{\wtypeOne^{\indexTwo}}}
      \and
      \inferrule[tuple]
      {\bundleJudgment{\icsOne}{\icontextTwo}{\lcOne_1}{\bundle\labOne}{\btypeOne}
        \\
        \bundleJudgment{\icsOne}{\icontextTwo}{\lcOne_2}{\bundle\labTwo}{\btypeTwo}}
      {\bundleJudgment{\icsOne}{\icontextTwo}{\lcOne_1,\lcOne_2}{\tuple{\bundle\labOne}{\bundle\labTwo}}{\tensor{\btypeOne}{\btypeTwo}}}
    \end{mathpar}
  }
  \caption{Rules for RACSs and bundle judgments}
  \label{fig:racs}
\end{figure}

We will formally define what a CMI is in Section \ref{subsec:cmi}. For now, let us unpack the rules. First of all, note that $\lcOne,\lcTwo$ is shorthand for $\lcOne\uplus\lcTwo$, and denotes the union of two contexts with disjoint domains. $\bundleJudgment{\icsOne}{\icontextTwo}{\lcOne}{\struct\labOne}{\btypeOne}$ is what we call a \textit{bundle judgment}, and it is used to give a structured bundle type $\btypeOne$ to a label context $\lcOne$ via a wire bundle $\struct\labOne$. A label $\labOne$ has type $\wtypeOne^\indexOne$ when the local resource value of $\labOne$ is the function of the variables in $\icontextTwo$ described by $\indexOne$. On the other hand, $\iEqJudgment{\icsOne}{\icontextTwo}{\indexOne}{\indexTwo}$ is what we call a \textit{semantic judgment}, stating that for all assignments of natural numbers to the variables in $\icontextTwo$, indices $\indexOne$ and $\indexTwo$ are equal.
Moving on to the the actual derivation of RACSs, the \textsc{id} rule is pretty straightforward: as long as all of the index variables occurring in $\lcOne$ are in $\icontextTwo$ (written $\icsWFJudgment{\icontextTwo}{\lcOne}$), $\cidentity{\lcOne}$ has signature $\lcOne\to\lcOne$. Unsurprisingly, this encodes a local metric corresponding to the identity function, while the global metric is equal to ${\multiset{\lcOne}}$.
The \textsc{append} rule, on the other hand, is slightly harder to parse, and it describes the scenario shown in Figure \ref{fig:append}. Let us begin with the well typedness of the append: given an operation identifier $\gateOne$, we call $\gatescheme{\gateOne}$ the \textit{general type scheme} of $\gateOne$, defined as a triple $(\icontextThree,\btypeOne,\btypeTwo)$, where
\begin{itemize}
    \item $\icontextThree = \{\ivarOne_1,\dots,\ivarOne_n\}$ is a set of index variables,
		\item  $\btypeOne=\wtypeOne_1^{i_1} \otimes \wtypeOne_2^{i_2} \otimes \dots \otimes \wtypeOne_n^{i_n}$ is the input type of $\gateOne$,
		\item $\btypeTwo=\wtypeTwo_1^{\igate{\gateOne}{1}(i_1,i_2,\dots,i_n)} \otimes \wtypeTwo_2^{\igate{\gateOne}{2}(i_1,i_2,\dots,i_n)} \otimes \dots \otimes \wtypeTwo_m^{\igate{\gateOne}{m}(i_1,i_2,\dots,i_n)}$ is the output type of $\gateOne$.
\end{itemize}

Note that $\wtypeOne,\wtypeTwo\in\{\bitt,\qubitt\}$. In other words, a type scheme defines the base input and output types of an elementary operation, which are respectively decorated with fresh variables and abstract expressions describing the generic evolution of the local metric. In this regard, note that $\gatescheme{\gateOne}$ does \emph{not} depend on the current circuit metric interpretation. Rather, it is given alongside $\gateSet$.
A type scheme is instantiated to the appropriate wire annotations via an \textit{index substitution} $\isubThree$. Informally, $\isubThree\in\isemint{\icontextTwo}{\icontextThree}$ means that $\isubThree$ substitutes each of the index variables in $\icontextThree$ in an expression with an index whose free variables are in $\icontextTwo$. On the side of global resources, we have that the global metric value of $\circuitOne;\gateapp{\gateOne}{\bundle\labOne}{\bundle\labTwo}$ is given by $\iappend{\gateOne}{}(\indexOne,{\multiset{\lcTwo}},{\multiset{\lcThree}},{\multiset{\lcFour}})$.

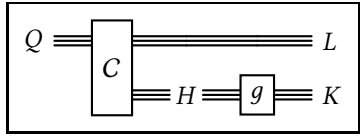
\begin{figure}
	\centering
	\fbox{\begin{quantikz}[classical gap=0.06cm, row sep = 5pt]
			\setwiretype{b} \lstick{$\lcOne$} & \gate[2]{\circuitOne} & & &  \rstick{$\lcTwo$}\\
			\setwiretype{n} & & \setwiretype{b} \ \push{\lcThree} \ & \gate{\gateOne} & \rstick{$\lcFour$}
	\end{quantikz}}
\caption{Routing of wires during the append of an operation $\gateOne$ to circuit $\circuitOne$}
\label{fig:append}
\end{figure}

\subsection{Circuit Metric Interpretations}
\label{subsec:cmi}

As we said earlier, the family of formalisms described by the rules in Figure \ref{fig:racs} is indexed on the semantic interpretation of the abstract resource operators, that is on a CMI $\icsOne$. Formally, $\icsOne$ is nothing more that a triple of (families of) functions $(\icsId{\icsOne}{\wtypeMultisetOne},\icsAppend{\icsOne}{\gateOne},\icsGate{\icsOne}{\gateOne}{\natOne})$, where $\icsId{\icsOne}{\multiset\lcOne} \in \natSet$ interprets $\multiset{\lcOne}$ (i.e. $\iid{\multiset\lcOne}$),
$\icsAppend{\icsOne}{\gateOne} : \natSet^4 \to \natSet$ interprets $\iappend{\gateOne}$, and 
$\icsGate{\icsOne}{\gateOne}{n} : \natSet^{\ginputs{\gateOne}}\to\natSet$ interprets $\igate{\gateOne}{n}$. In order to better understand how a CMI can capture a recursive circuit metric, we provide the following example CMIs for the familiar metrics of gate count, width and depth.

\begin{example}[Gate Count CMI]
  \label{ex:gatecount-cmi}
  In the case of gate count we have trivially
  \begin{align*}
    \icsId{\mathit{\icsGatecount}}{\wtypeMultisetOne} &= 0,\\
    \icsAppend{\mathit{\icsGatecount}}{\gateOne}(n,l,h,k) &= n+1.
  \end{align*}
  In this case, $\icsGate{\mathit{\icsGatecount}}{\gateOne}{n}$ is irrelevant and can be chosen to be any constant function. Note that this interpretation exactly encodes the recursive definition of gate count given earlier.
\end{example}

\begin{example}[Width CMI]
  \label{ex:width-cmi}
  In the case of width we have \begin{align*}
    \icsId{\icsWidth}{\wtypeMultisetOne} &= |\wtypeMultisetOne|,\\
    \icsAppend{\icsWidth}{\gateOne}(n,l,h,k) &= n + \max(0,k+l-n),
  \end{align*}
  while $\icsGate{\icsWidth}{\gateOne}{n}$ is once again irrelevant. Note that this interpretation also encodes the recursive definition of width given earlier, since $k+l-n = (k-h)-(n-l-h)$, where $k-h$ is exactly $|\struct\labTwo| - |\struct\labOne|$ and $n-l-h$ is equal to the width of the circuit minus the number of its outputs, i.e. to $\discarded{\circuitOne}$.
\end{example}

\begin{example}[Depth CMI]
  \label{ex:depth-cmi}
  For depth, we have
  \begin{equation*}
    \icsGate{\icsDepth}{\gateOne}{i}(n_1,\dots,n_k)=\max(n_1,\dots,n_k) + 1,
  \end{equation*}
  for all $\gateOne$ and $1 \leq i \leq \ginputs{\gateOne}$, while $\icsId{\icsDepth}{\wtypeMultisetOne}$ and $\icsAppend{\icsDepth}{\gateOne}$ are irrelevant and can be chosen to be any constant function. Although it is not as easy to check, we can see that this interpretation encodes the recursive definition of depth given earlier: for the wires that the $\gateOne$ is actually appended on, we get a depth of $\max(n_1,\dots,n_k) + 1$, which is equivalent to $\max \{ \depth(\circuitOne)(in)(\labOne) \mid \labOne \in \freelabels(\bundle\labOne) \} + 1$, whereas the depth of the other wires is left unchanged, i.e. stays at $\depth(\circuitOne)(in)(\labThree)$.
\end{example}

\subsection{Soundness of CMIs}

In the last three examples, we informally argued that the CMIs we gave capture the recursive definitions of the metrics they are associated with. We now give a more general definition of what it means for a CMI to be \emph{sound} with respect to a given recursive metric. The case of global metrics is straightforward: a CMI $\icsOne$ is sound with respect to a global metric $\rcmOne=(\natSet,\rcmFunOne_\rcmOne)$ when the derivability of a judgment of the form $\icsJudgment{\icsOne}{\icontextTwo}{\lcOne}{\circuitOne}{\lcTwo}{\indexOne}$ entails $\iLeqJudgment{\icsOne}{}{\rcmFunOne_\rcmOne(\circuitOne)}{\indexOne}$. Note that we do not require a CMI to be \emph{strictly equal} to the underlying recursive metric. At this stage, we are content with a sound overapproximation.
The case of local CMIs is of course more complicated. Recall that, given a local RCM $\rcmOne=((\labelSet\to\natSet)\to(\labelSet\to\natSet),\rcmFunOne_\rcmOne)$, we encode $\rcmFunOne_\rcmOne$ in a judgment $\icsJudgment{\icsOne}{\icontextTwo}{\lcOne}{\circuitOne}{\lcTwo}{\indexOne}$ by encoding the input of $\rcmFunOne_\rcmOne$ in $\lcOne$ and its output in $\lcTwo$. To do so, we annotate the wire types in $\lcOne$ with the index variables in $\icontextTwo$ and the wire types in $\lcTwo$ with index expressions that depend on the same variables. This leads to the following definition:

\begin{definition}[Locally Sound CMI]
	A CMI $\icsOne$ is \emph{sound with respect to a local metric $\rcmOne=((\labelSet\to\natSet)\to(\labelSet\to\natSet),\rcmFunOne_\rcmOne)$} when $\icsJudgment{\icsOne}{\icontextTwo}{(\labOne_1:\wtypeOne_1^{\ivarOne_1},\dots,\labOne_n:\wtypeOne_n^{\ivarOne_n})}{\circuitOne}{(\labTwo_1:\wtypeTwo_1^{\indexOne_1},\dots,\labTwo_m:\wtypeTwo_m^{\indexOne_m})}{\indexTwo}$ entails that for all $j\in\{1,\dots,m\}$, we have $\iLeqJudgment{\icsOne}{\icontextTwo}{\rcmFunOne_\rcmOne(\circuitOne)(f)(\labTwo_j)}{\indexOne_j}$, where $f(\labOne_h) = \ivarOne_p$ for $h\in\{1,\dots,n\}$.
\end{definition}

Because we have designed CMIs as to closely mimic the recursive definitions of circuit metrics, and because the formalism of Figure \ref{fig:racs} is small, it is not hard to prove that a given CMI is sound with respect to a given recursive metric. In fact, we prove that this is the case for the given examples. 

\begin{proposition}
  The $\icsGatecount$, $\icsWidth$ and $\icsDepth$ CMIs given in examples \ref{ex:gatecount-cmi}, \ref{ex:width-cmi} and \ref{ex:depth-cmi} are sound with respect to the corresponding recursive metrics given in examples \ref{ex:global-rcm} and \ref{ex:local-rcm}.
\end{proposition}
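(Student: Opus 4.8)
The plan is to prove all three soundness claims uniformly, by induction on the derivation of the resource-aware circuit signature, exploiting the fact that only the rules \textsc{id} and \textsc{append} of Figure~\ref{fig:racs} can conclude a RACS. Before the induction I would dispatch two bookkeeping facts. The first is a \emph{bundle lemma}: whenever $\bundleJudgment{\icsOne}{\icontextTwo}{\lcOne}{\struct\labOne}{\btypeOne}$ is derivable, the labels occurring in $\struct\labOne$ are exactly those in $\dom(\lcOne)$, each with multiplicity one (so $|\struct\labOne| = |\lcOne|$), and the annotation of each label in $\btypeOne$ is semantically equal to its annotation in $\lcOne$; this is an immediate induction on the bundle judgment. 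The second is a \emph{shape lemma}: if $\icsJudgment{\icsOne}{\icontextTwo}{\lcOne}{\circuitOne}{\lcTwo}{\indexOne}$ is derivable then the \crl\ circuit $\circuitOne$ has input wires exactly $\dom(\lcOne)$ and output wires exactly $\dom(\lcTwo)$, again by a trivial induction on the RACS derivation. Together with the clauses defining each CMI in Examples~\ref{ex:gatecount-cmi}, \ref{ex:width-cmi} and~\ref{ex:depth-cmi}, these let me identify, in the \textsc{append} case, the interpreted quantities $\interpret{\multiset\lcTwo}$, $\interpret{\multiset\lcThree}$, $\interpret{\multiset\lcFour}$ with the wire counts appearing in the recursive definitions of Examples~\ref{ex:global-rcm} and~\ref{ex:local-rcm}, and they tell me that the free output wires of $\circuitOne$ are precisely $\dom(\lcTwo)\uplus\dom(\lcThree)$.

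For gate count and width (both global), the induction is on the single index decorating the judgment. The \textsc{id} cases are immediate: $\gatecount(\cidentity{\lcOne}) = 0 = \icsId{\icsGatecount}{\multiset\lcOne}$ and $\width(\cidentity{\lcOne}) = |\lcOne| = \icsId{\icsWidth}{\multiset\lcOne}$, each matching the interpretation of the decorating index. For the \textsc{append} case of gate count, the induction hypothesis gives $\gatecount(\circuitOne)\leq\interpret{\indexOne}$, whence $\gatecount(\circuitOne;\gateapp{\gateOne}{\struct\labOne}{\struct\labTwo}) = \gatecount(\circuitOne)+1 \leq \interpret{\indexOne}+1 = \icsAppend{\icsGatecount}{\gateOne}(\interpret{\indexOne},\ldots) = \interpret{\indexTwo}$, using monotonicity of $n\mapsto n+1$. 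For width, abbreviate $n=\width(\circuitOne)$, $l=|\lcTwo|$, $h=|\lcThree|$, $k=|\lcFour|$; by the shape lemma $\discarded{\circuitOne} = n-l-h$, so unfolding the recursive definition gives $\width(\circuitOne;\gateapp{\gateOne}{\struct\labOne}{\struct\labTwo}) = n+\max(0,k+l-n) = \max(n,k+l)$, whereas $\interpret{\indexTwo} = \icsAppend{\icsWidth}{\gateOne}(\interpret{\indexOne},l,h,k) = \max(\interpret{\indexOne},k+l)$; since $n\leq\interpret{\indexOne}$ by the induction hypothesis and $\max$ is monotone in its first argument, $\max(n,k+l)\leq\max(\interpret{\indexOne},k+l)$, which is the required bound.

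For depth (local), I would prove by induction on the RACS derivation exactly the statement from the definition of local soundness, keeping the input context in the special form $(\labOne_1:\wtypeOne_1^{\ivarOne_1},\ldots,\labOne_n:\wtypeOne_n^{\ivarOne_n})$, which is preserved by \textsc{append} since that rule leaves the input context untouched; here $f$ assigns to each input label its corresponding index variable. In the \textsc{id} case the output context coincides with the input context and $\depth(\cidentity{\lcOne})(f)(\labOne_j) = f(\labOne_j) = \ivarOne_j$, exactly the annotation on $\labOne_j$. In the \textsc{append} case, write the conclusion circuit as $\circuitOne;\gateapp{\gateOne}{\struct\labOne}{\struct\labTwo}$ with output context $\lcTwo,\lcFour$ and premise output context $\lcTwo,\lcThree$. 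For an output label $\labThree\in\dom(\lcTwo)$ (a wire passing alongside $\gateOne$) the annotation is inherited from the premise and $\labThree\notin\freelabels(\struct\labTwo)$, so $\depth(\circuitOne;\gateapp{\gateOne}{\struct\labOne}{\struct\labTwo})(f)(\labThree) = \depth(\circuitOne)(f)(\labThree)$ and the induction hypothesis closes the case. For the $p$-th output wire of $\gateOne$ (a label in $\dom(\lcFour)$), tracking $\gatescheme{\gateOne} = (\icontextThree,\btypeOne,\btypeTwo)$ and the substitution $\isubThree$ through the bundle judgments for $\struct\labOne$ and $\struct\labTwo$ shows that its annotation is $\igate{\gateOne}{p}(\indexFive_1,\ldots)$, where $\indexFive_1,\ldots$ are the annotations carried by the wires of $\lcThree = \freelabels(\struct\labOne)$, interpreted as $\max(\interpret{\indexFive_1},\ldots)+1$; meanwhile the metric gives $\depth(\circuitOne;\gateapp{\gateOne}{\struct\labOne}{\struct\labTwo})(f)(\labThree) = \max\{\depth(\circuitOne)(f)(\labOne)\mid\labOne\in\freelabels(\struct\labOne)\}+1$, and the induction hypothesis applied to each wire of $\lcThree$ together with monotonicity of $\max(\cdots)+1$ yields the bound.

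The conceptual content is light — the CMIs were designed precisely to mirror the recursive metric definitions — so I expect the main obstacle to be bookkeeping rather than insight. The one genuine subtlety in the \textsc{append} cases is that the recursive definitions refer to the \emph{actual} metric value on $\circuitOne$ (e.g.\ $\discarded{\circuitOne}$, which contains $\width(\circuitOne)$, and $\depth(\circuitOne)$ on the input wires of $\gateOne$), not to the bound $\interpret{\indexOne}$ provided by the induction hypothesis; the argument therefore has to first rewrite each recursive step into a form that is visibly monotone in that actual value ($\max(n,k+l)$ for width, $n+1$ for gate count, $\max(\cdots)+1$ for a gate output in depth) and only then push the induction hypothesis through. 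Arranging the shape and bundle lemmas so that $\interpret{\multiset\lcTwo}$, $|\struct\labOne|$, the free outputs of $\circuitOne$, and the gate-scheme annotations instantiated by $\isubThree$ all line up with the quantities in Examples~\ref{ex:global-rcm} and~\ref{ex:local-rcm} constitutes the bulk of the remaining work.
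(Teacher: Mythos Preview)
Your proposal is correct and follows exactly the approach the paper indicates: induction on the derivation of the RACS, handling the \textsc{id} and \textsc{append} cases for each of the three CMIs. The paper's proof is a single line (``By induction on the derivation of $\icsJudgment{\icsOne}{\icontextTwo}{\lcOne}{\circuitOne}{\lcTwo}{\indexOne}$''), so your version simply fills in the details the paper omits, including the bookkeeping lemmas and the monotonicity step needed to pass from the actual metric value of $\circuitOne$ to the inductive bound $\interpret{\indexOne}$.
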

\begin{proof}
  By induction on the derivation of $\icsJudgment{\icsOne}{\icontextTwo}{\lcOne}{\circuitOne}{\lcTwo}{\indexOne}$, for $\icsOne\in\{\icsGatecount,\icsWidth,\icsDepth\}$.
\end{proof}
\section{From Circuits to Circuit Description Languages: Resource-Aware Proto-Quipper}
\label{sec:pqra}

In this section we take the step from the low-level \crl\ to an actual circuit description language. We present \emph{Resource Aware Proto-Quipper}, or \PQRA\, for short: an extension of the \PQ\ language introduced in Section \ref{sec:the-proto-quipper-language} equipped with a family of type-and-effect systems that supports generic reasoning about resources. Unsurprisingly, this step up in expressiveness introduces a number of new challenges with respect to the previous section. This is because although the object of analysis remains the same (quantum circuits), the structure of programs is fundamentally richer than those of circuits, allowing for greater parametricity and compositionality on one side, but requiring more sophisticated abstractions to keep track of resources on the other.

\subsection{Types and Terms}

\begin{figure}
	\centering
	\fbox{\begin{tabular}{l l r l}
			Types	& $\typeSet$	& $\typeOne,\typeTwo$ & $
			::= \unitt
			\mid \wtypeOne^{{\indexOne}}
			\mid \bang{\indexOne}{\typeOne}
			\mid \tensor{\typeOne}{\typeTwo}
			\mid \arrowt{\typeOne}{\typeTwo}{\indexOne}{{\btypeOne}}
			\mid \blistt{{\ivarOne}}{{\indexOne}}{\typeOne}
			\mid \mcirct{\indexOne}{{\icontextOne}}{\btypeOne}{\btypeTwo}$\\
			Param. types		& $\ptypeSet$	& $\ptypeOne,\ptypeTwo$ & $
			::= \unitt
			\mid \bang{\indexOne}{\typeOne}
			\mid \tensor{\ptypeOne}{\ptypeTwo}
			\mid \blistt{{\ivarOne}}{{\indexOne}}{\ptypeOne}
			\mid \mcirct{\indexOne}{{\icontextOne}}{\btypeOne}{\btypeTwo}$\\
			Bundle types	& $\btypeSet$	& $\btypeOne,\btypeTwo$ & $
			::= \unitt
			\mid \wtypeOne^{{\indexOne}}
			\mid \tensor{\btypeOne}{\btypeTwo}
			\mid \blistt{{\ivarOne}}{{\indexOne}}{\btypeOne}$\\
			&&&\\
			Terms	& $\termSet$	& $\termOne,\termTwo$ & $ ::= \app{\valOne}{\valTwo}
			\mid \dest{\varOne}{\varTwo}{\valOne}{\termOne}
			\mid \force{\valOne}
			\mid \mboxt{{\icontextOne}}{\btypeOne}{\valOne}$\\&&&$
			\mid \apply{\valOne}{\valTwo}
			\mid \return{\valOne}
			\mid \letin{\varOne}{\termOne}{\termTwo}
			\mid {\mfold{\ivarOne}{\valOne}{\valTwo}{\valThree}}$\\
			Values	& $\valSet$	& $\valOne,\valTwo$ & $
			::= \unitv
			\mid \varOne
			\mid \labOne
			\mid \abs{\varOne}{\typeOne}{\termOne}
			\mid \lift{\termOne}
			\mid \boxedCirc{\struct\labOne}{\circuitOne}{\struct{\labTwo}}
			\mid \tuple{\valOne}{\valTwo}$\\&&&$
			\mid \nil
			\mid {\rcons{\valOne}{\valTwo}}$\\
			Wire bundles \hspace{5pt}	& $\bvalSet$	& $\struct\labOne,\struct\labTwo$ & $
			::= \unitv
			\mid \labOne
			\mid \tuple{\struct\labOne}{\struct\labTwo}
			\mid \nil
			\mid {\rcons{\struct\labOne}{\struct\labTwo}}$\\	
			Indices & $\indexSet$ & $\indexOne,\indexTwo$ & $
			::= \natOne
			\mid \ivarOne
			\mid \iplus{\indexOne}{\indexTwo}
			\mid \iminus{\indexOne}{\indexTwo}
			\mid \imult{\indexOne}{\indexTwo}
			\mid \imax{\indexOne}{\indexTwo}$\\&&&$
			\mid \iid{\wtypeMultisetOne}
			\mid \iappend{\gateOne}(\indexOne,\indexTwo,\indexThree,\indexFour)
			\mid \igate{\gateOne}{n}(\indexOne_1,\dots,\indexOne_m)$\\&&&$
			\mid \imempty
			\mid \iwire{\wtypeOne}
			\mid \iseq{\indexOne}{\indexTwo}
			\mid \ipar{\indexOne}{\indexTwo}
			\mid \iseqBounded{\ivarOne}{\indexOne}{\indexTwo}
			\mid \iparBounded{\ivarOne}{\indexOne}{\indexTwo}
			$
	\end{tabular}}
	\caption{Types and syntax of \PQRA}
	\label{fig:pqra-syntax}
\end{figure}

The types and syntax of \PQRA\ are given in Figure \ref{fig:pqra-syntax}. On the side of types, we can see that the standard \PQ\ types have been \emph{refined} with quantitative information useful for resource analysis. Let us take a moment to dissect these refinements. Naturally, we inherit from circuit signatures the annotated wire type $\wtypeOne^\indexOne$, which encodes information about the local metric value of the corresponding label in $\indexOne$. Speaking of circuits, the boxed circuit type $\mcirct{\indexOne}{\icontextOne}{\btypeOne}{\btypeTwo}$ is annotated with $\indexOne$, which carries information about the global metric associated to the circuit. Local metric information is instead encoded in $\btypeOne$ and $\btypeTwo$, using locally-scoped index variables taken from $\icontextOne$. The linear arrow type $\arrowt{\typeOne}{\typeTwo}{\indexOne}{\btypeOne}$ is annotated with index $\indexOne$ and bundle type $\btypeOne$. Like we mentioned in Section \ref{sec:overview}, $\indexOne$ represents the global metric associated with the circuit built by the function once we apply it to a value of type $\typeOne$, while $\btypeOne$ is the type of the wires enclosed in the function's closure. The annotation on $\bang{\indexOne}{\typeOne}$ is similar to the arrow's and tells us that resuming (forcing) a suspended (lifted) computation of that type has the side effect of building a circuit of size at most $\indexOne$.
Lastly, the list type also gets annotated with quantitative information: the type $\blistt{\ivarOne}{\indexOne}{\typeOne}$ represents lists of \emph{exactly} $\indexOne$ elements, where the $\ivarOne$-th element has type $\typeOne$ which can depend on $\ivarOne$. We call this type a \emph{sized dependent list type}. Although this type might seem cumbersome to work with, size and dependency are essential for resource analysis: if we did not have size, then we would have no bound on how many wires are potentially in a list, whereas if we did not have dependency we would only be able to work with lists of wires with \emph{homogeneous} local metrics.

On the side of the language of terms, we have but minor changes from the \PQ\ definition of Section \ref{sec:the-proto-quipper-language}: $\mboxt{\icontextOne}{\btypeOne}{\valOne}$ now declares the set of local variables $\icontextOne$ used to encode local metrics in $\btypeOne$, and the $\foldoperator$ operator binds an index variable $\ivarOne$. This variable appears in the type of the step function and it allows each step of the fold to contribute \emph{differently} to the size of the constructed circuit. Lastly, note that lists now grow towards the right, with $\rcons{\valOne}{\valTwo}$ appending $\valTwo$ at the \emph{end} of list $\valOne$. This allows to work more smoothly with the new dependent aspect of lists.

\subsubsection{Index Terms}

In moving up to a higher level language, we keep the same approach to indices that we established in Section \ref{sec:quantumcircuitmetrics}, equipping their language with abstract operators that are interpreted differently depending on the resource being analyzed. The final language of indices, given in Figure \ref{fig:pqra-syntax}, comprises therefore of natural numbers, index variables and standard arithmetic operators (first row), the cicuit-level abstract operators introduced in section \ref{sec:quantumcircuitmetrics} (second row) and lastly a set of new abstract operators and constants that relate to the particular way that circuits are built in \PQ. In particular, individual circuits are no longer the focus of the resource analysis, but rather \emph{computations}. Annotations at the level of \PQRA\ therefore describe the size associated with the circuits produced by these computations. The new operators include:

\begin{itemize}
  \item $\imempty$, which describes a computation that does not build any circuit.
  \item $\iwire{\wtypeOne}$, which describes the size of a single wire of type $\wtypeOne$.
  \item $\iseq{\indexOne}{\indexTwo}$, which describes the size of the circuit built by the execution \emph{in sequence} of two computations annotated with $\indexOne$ and $\indexTwo$, respectively.
  \item $\ipar{\indexOne}{\indexTwo}$,  which describes the size of the circuit built by the execution \emph{in parallel} of two computations annotated with $\indexOne$ and $\indexTwo$, respectively.
\end{itemize}

Note that $\iseq{}{}$ and $\ipar{}{}$ also have their bounded versions: $\iseqBounded{\ivarOne}{\indexOne}{\indexTwo}$ and $\iparBounded{\ivarOne}{\indexOne}{\indexTwo}$, which describe the composition in sequence (respectivey, in parallel) of a family of $\indexOne$ computations, each annotated with global metric $\indexTwo$, where $\indexTwo$ may depend on $\ivarOne$.

\subsection{Typing Rules}

The idea of a type system in which computations are annotated with information regarding the side effect they produce (in this case, the size of the circuit they build) is not new and takes the name of a \emph{type-and-effect system} \cite{types-and-effects}. In the case of \PQRA, this means that a typing judgments on a term $\termOne$ gets annotated with an index $\indexOne$, denoting an upper bound to the size of the circuits produced by $\termOne$. We say \emph{circuits} because $\termOne$ can describe a \emph{family} of circuits, depending on some natural number parameters encoded as index variables. We are therefore dealing with a computational typing judgment of the form
\begin{equation}
  \itesCJudgment{}{\icontextOne}{\icontextTwo}{\contextOne}{\lcOne}{\termOne}{\typeOne}{\indexOne},
\end{equation}
which informally reads ``for all assignments of natural numbers to the index variables in $\icontextOne$ and $\icontextTwo$, under typing context $\contextOne$ and label context $\lcOne$, term $\termOne$ has type $\typeOne$ and produces a circuit of size at most $\indexOne$''. Here, $\contextOne$ is a traditional linear/nonlinear typing context, while $\lcOne$ is a label context like the ones we introduced in Section \ref{sec:quantumcircuitmetrics}. If $\contextOne$ only contains parameter variables, we write it as $\pcontextOne$. Note that we employ two different, disjoint index variable contexts, $\icontextOne$ and $\icontextTwo$. Why is that? That is because at the level of \PQRA, index variables play two fundamentally distinct roles.
On one hand, the index variables in $\icontextTwo$ are used to capture local metrics, in a way similar RACSs. These variables encode the assignment of local metrics to the outputs of some unspecified underlying circuit $\circuitOne$. $\lcOne$ represents the outputs of $\circuitOne$ that are manipulated by $\termOne$, and as such is annotated with index terms built \emph{exclusively} from the variables in $\icontextTwo$. The wire types occurring in $\typeOne$ are then annotated with indices that encode the assignment of local resource values to the outputs of the circuit built by $\termOne$, and thus may also depend on the variables in $\icontextTwo$.
On the other hand, the variables in $\icontextOne$ represent the natural number parameters that a computation may depend on, and they effectively allow a term $\termOne$ to build a \emph{family} of circuits, as opposed to an individual circuit. Contrary to the index variables in $\icontextTwo$, which are only used to syntactically encode a local metric function, the variables in $\icontextOne$ have an operational value, and as such have to be instantiated before $\termOne$ can be executed. Because different circuits in the same family may have different sizes \emph{and} different local metrics, both $\indexOne$ and the annotations of the wiretypes in $\typeOne$ may depend on the variables in $\icontextOne$.

Besides the main computational judgment, we also work with a judgment for values of the form $\itesVJudgment{}{\icontextOne}{\icontextTwo}{\contextOne}{\lcOne}{\valOne}{\typeOne}$. The latter is identical to the former, save for the absence of an effect annotation, since values are effect-less. That being said, the main typing rules for deriving these two kinds of judgment are given in Figure \ref{fig:ites-rules}, where $\typeOne \isub{\indexOne}{\ivarOne}$ denotes the capture-avoiding substitution of index $\indexOne$ for variable $\ivarOne$ in type $\typeOne$, $\itesIWFJudgment{\icontextOne}{\indexOne}$ means that all of the index variables occurring in $\indexOne$ are in $\icontextOne$, and $\itesTWFJudgment{\icontextOne}{\icontextTwo}{\typeOne}$ means that all of the index variables mentioned in type $\typeOne$ are in $\icontextOne$, with the exception of those occurring in the annotations of wire types, which may come from $\icontextTwo$.

\begin{figure}
	\centering
	\fbox{
		\begin{mathpar}[]
			\inferrule[var]
			{\itesTWFJudgment{\icontextOne}{\icontextTwo}{\pcontextOne,\varOne:\typeOne}}
			{\itesVJudgment{\itesOne,\icsOne}{\icontextOne}{\icontextTwo}{\pcontextOne,\varOne:\typeOne}{\emptycontext}{\varOne}{\typeOne}}
			\and
			\inferrule[lab]
			{\itesTWFJudgment{\icontextOne}{\icontextTwo}{\pcontextOne}
				\\
				\itesIWFJudgment{\icontextTwo}{\indexOne}}
			{\itesVJudgment{\itesOne,\icsOne}{\icontextOne}{\icontextTwo}{\pcontextOne}{\labOne:\wtypeOne^{\indexOne}}{\labOne}{\wtypeOne^{\indexOne}}}
			\\
			\inferrule[abs]
			{\itesCJudgment{\itesOne,\icsOne}{\icontextOne}{\icontextTwo}{\contextOne,\varOne:\typeOne}{\lcOne}{\termOne}{\typeTwo}{\indexOne}
				\\\\
				{\btypeOne=\abswirecontent{\contextOne}{\lcOne}{\termOne}}}
			{\itesVJudgment{\itesOne,\icsOne}{\icontextOne}{\icontextTwo}{\contextOne}{\lcOne}{\abs{\varOne}{\typeOne}{\termOne}}{\arrowt{\typeOne}{\typeTwo}{\indexOne}{\btypeOne}}}
			\and
			\inferrule[app]
			{\itesVJudgment{\itesOne,\icsOne}{\icontextOne}{\icontextTwo}{\pcontextOne,\contextOne_1}{\lcOne_1}{\valOne}{\arrowt{\typeOne}{\typeTwo}{\indexOne}{\btypeOne}}
				\\\\
				\itesVJudgment{\itesOne,\icsOne}{\icontextOne}{\icontextTwo}{\pcontextOne,\contextOne_2}{\lcOne_2}{\valTwo}{\typeOne}}
			{\itesCJudgment{\itesOne,\icsOne}{\icontextOne}{\icontextTwo}{\pcontextOne,\contextOne_1,\contextOne_2}{\lcOne_1,\lcOne_2}{\app{\valOne}{\valTwo}}{\typeTwo}{\indexOne}}
			\and
			\inferrule[circ]
			{\icsJudgment{\icsOne}{\icontextThree}{\lcOne}{\circuitOne}{\lcTwo}{\indexOne}
				\\
				\indexTwo = \imaxsug{\resourcecontent{\lcOne},\indexOne,\resourcecontent{\lcTwo}}
				\\\\
				\itesVJudgment{\itesOne,\icsOne}{\emptyset}{\icontextThree}{\emptycontext}{\lcOne}{\bundle\labOne}{\btypeOne}
				\\
				\itesVJudgment{\itesOne,\icsOne}{\emptyset}{\icontextThree}{\emptycontext}{\lcTwo}{\bundle\labTwo}{\btypeTwo}
        \\
				\itesTWFJudgment{\icontextOne}{\icontextTwo}{\pcontextOne}}
			{\itesVJudgment{\itesOne,\icsOne}{\icontextOne}{\icontextTwo}{\pcontextOne}{\emptycontext}{\boxedCirc{\bundle\labOne}{\circuitOne}{\bundle\labTwo}}{\mcirct{\indexTwo}{\icontextThree}{\btypeOne}{\btypeTwo}}}
			\and
			\inferrule[box]
			{\indexThree=\iseq{(\ipar{\indexOne}{\resourcecontent{\btypeOne}})}{\indexTwo}
      \\\\
      \itesVJudgment{\itesOne,\icsOne}{\icontextOne}{\icontextThree}{\pcontextOne}{\emptycontext}{\valOne}{\bang{\indexOne}{(\arrowt{\btypeOne}{\btypeTwo}{\indexTwo}{\btypeOne'})}}}
			{\itesCJudgment{\itesOne,\icsOne}{\icontextOne}{\icontextTwo}{\pcontextOne}{\emptycontext}{\mboxt{\icontextThree}{\btypeOne}{\valOne}}{\mcirct{\indexThree}{\icontextThree}{\btypeOne}{\btypeTwo}}{\imempty}}
			\and
			\inferrule[apply]
			{\itesVJudgment{\itesOne,\icsOne}{\icontextOne}{\icontextTwo}{\pcontextOne,}{\emptycontext}{\valOne}{\mcirct{\indexOne}{\icontextThree}{\btypeOne}{\btypeTwo}}
			\\\\
			\isubThree \in \isemint{\icontextTwo}{\icontextThree}
			\\
			\itesVJudgment{\itesOne,\icsOne}{\icontextOne}{\icontextTwo}{\pcontextOne,\contextOne}{\lcOne}{\valTwo}{\isubThree(\btypeOne)}}
			{\itesCJudgment{\itesOne,\icsOne}{\icontextOne}{\icontextTwo}{\pcontextOne,\contextOne}{\lcOne}{\apply{\valOne}{\valTwo}}{\isubThree(\btypeTwo)}{\indexOne}}
			\and
			\inferrule[lift]
			{\itesCJudgment{\itesOne,\icsOne}{\icontextOne}{\icontextTwo}{\pcontextOne}{\emptycontext}{\termOne}{\typeOne}{\indexOne}}
			{\itesVJudgment{\itesOne,\icsOne}{\icontextOne}{\icontextTwo}{\pcontextOne}{\emptycontext}{\lift{\termOne}}{\bang{\indexOne}{\typeOne}}}
			\and
			\inferrule[force]
			{\itesVJudgment{\itesOne,\icsOne}{\icontextOne}{\icontextTwo}{\pcontextOne}{\emptycontext}{\valOne}{\bang{\indexOne}{\typeOne}}}
			{\itesCJudgment{\itesOne,\icsOne}{\icontextOne}{\icontextTwo}{\pcontextOne}{\emptycontext}{\force{\valOne}}{\typeOne}{\indexOne}}
			\\
			\inferrule[return]
			{\itesVJudgment{\itesOne,\icsOne}{\icontextOne}{\icontextTwo}{\contextOne}{\lcOne}{\valOne}{\typeOne}}
			{\itesCJudgment{\itesOne,\icsOne}{\icontextOne}{\icontextTwo}{\contextOne}{\lcOne}{\return{\valOne}}{\typeOne}{\resourcecontent{\typeOne}}}
			\and
			\inferrule[let]
			{\indexThree = \iseq{(\ipar{\indexOne}{\resourcecontent{\contextOne_2,\lcOne_2}})}{\indexTwo}
      \\
      \itesCJudgment{\itesOne,\icsOne}{\icontextOne}{\icontextTwo}{\pcontextOne,\contextOne_1}{\lcOne_1}{\termOne}{\typeOne}{\indexOne}
				\\\\
				\itesCJudgment{\itesOne,\icsOne}{\icontextOne}{\icontextTwo}{\pcontextOne,\contextOne_2,\varOne:\typeOne}{\lcOne_2}{\termTwo}{\typeTwo}{\indexTwo}
        }
			{\itesCJudgment{\itesOne,\icsOne}{\icontextOne}{\icontextTwo}{\pcontextOne,\contextOne_1,\contextOne_2}{\lcOne_1,\lcOne_2}{\letin{\varOne}{\termOne}{\termTwo}}{\typeTwo}{\indexThree}}
			\\
      \inferrule[nil]
			{\itesTWFJudgment{\icontextOne}{\icontextTwo}{\pcontextOne}
				\\
				\itesTWFJudgment{\icontextOne,i}{\icontextTwo}{\typeOne}}
			{\itesVJudgment{\itesOne,\icsOne}{\icontextOne}{\icontextTwo}{\pcontextOne}{\emptycontext}{\nil}{\blistt{\ivarOne}{0}{\typeOne}}}
			\and
			\inferrule[rcons]
			{\itesVJudgment{\itesOne,\icsOne}{\icontextOne}{\icontextTwo}{\pcontextOne,\contextOne_1}{\lcOne_1}{\valOne}{\blistt{\ivarOne}{\indexOne}{\typeOne}}
				\\\\
				\itesVJudgment{\itesOne,\icsOne}{\icontextOne}{\icontextTwo}{\pcontextOne,\contextOne_2}{\lcOne_2}{\valTwo}{\typeOne\isub{\indexOne}{\ivarOne}}}
			{\itesVJudgment{\itesOne,\icsOne}{\icontextOne}{\icontextTwo}{\pcontextOne,\contextOne_1,\contextOne_2}{\lcOne_1,\lcOne_2}{\rcons{\valOne}{\valTwo}}{\blistt{\ivarOne}{\iplus{\indexOne}{1}}{\typeOne}}}
      \and
			\inferrule[fold]
			{\itesVJudgment{\itesOne,\icsOne}{\icontextOne,\ivarOne}{\icontextTwo}{\pcontextOne}{\emptycontext}{\valOne}{\bang{\indexOne}{(\arrowt{(\tensor{\typeTwo}{\typeOne\isub{\indexThree-(\ivarOne+1)}{\ivarTwo}})}{\typeTwo\isub{\iplus{\ivarOne}{1}}{\ivarOne}}{\indexTwo}{\btypeOne}})}
				\\
				\itesVJudgment{\itesOne,\icsOne}{\icontextOne}{\icontextTwo}{\pcontextOne,\contextOne_1}{\lcOne_1}{\valTwo}{\typeTwo\isub{0}{\ivarOne}}
				\\
				\itesVJudgment{\itesOne,\icsOne}{\icontextOne}{\icontextTwo}{\pcontextOne,\contextOne_2}{\lcOne_2}{\valThree}{\blistt{\ivarTwo}{\indexThree}{\typeOne}}
				\\
				\indexFour=\imaxsug{\resourcecontent{\typeTwo\isub{0}{\ivarOne}},\iseqBounded{\ivarOne}{\indexThree}{(\ipar{(\iseq{(\ipar{\ipar{\indexOne}{\resourcecontent{\typeTwo}}}{\resourcecontent{\typeOne\isub{\indexThree-(\ivarOne+1)}{\ivarTwo}}})}{\indexTwo})}{\iparBounded{\ivarTwo}{\indexThree-(\ivarOne+1)}{\resourcecontent{\typeOne}}})}}
			}
			{\itesCJudgment{\itesOne,\icsOne}{\icontextOne}{\icontextTwo}{\pcontextOne,\contextOne_1,\contextOne_2}{\lcOne_1,\lcOne_2}{\mfold{\ivarOne}{\valOne}{\valTwo}{\valThree}}{\typeTwo\isub{\indexThree}{\ivarOne}}{\indexFour}}
			\and
			\inferrule[vsub]
			{\itesVJudgment{\itesOne,\icsOne}{\icontextOne}{\icontextTwo}{\contextOne}{\lcOne}{\valOne}{\typeOne}
				\\
				\itesSubJudgment{\itesOne,\icsOne}{\icontextOne}{\icontextTwo}{\typeOne}{\typeTwo}}
			{\itesVJudgment{\itesOne,\icsOne}{\icontextOne}{\icontextTwo}{\contextOne}{\lcOne}{\valOne}{\typeTwo}}
			\and
			\inferrule[csub]
			{\itesCJudgment{\itesOne,\icsOne}{\icontextOne}{\icontextTwo}{\contextOne}{\lcOne}{\termOne}{\typeOne}{\indexOne}
				\\\\
				\itesSubJudgment{\itesOne,\icsOne}{\icontextOne}{\icontextTwo}{\typeOne}{\typeTwo}
				\\
				\iLeqJudgment{\itesOne}{\icontextOne}{\indexOne}{\indexTwo}}
			{\itesCJudgment{\itesOne,\icsOne}{\icontextOne}{\icontextTwo}{\contextOne}{\lcOne}{\termOne}{\typeTwo}{\indexTwo}}
	\end{mathpar}}
	\caption{Main \PQRA\ typing rules}
	\label{fig:ites-rules}
\end{figure}


\subsubsection{Size and Wire Content of Types}

Recall that wires have a size of their own, which at this level of abstraction is represented by the abstract index $\iwire{\wtypeOne}$.
Values may contain wires, which means that even though they do not \emph{actively} build circuits, they still have a size associated with them, corresponding to the aggregated size of their wires. This size has obviously an impact on global resource analysis, which is why it is imperative that we keep track of it in our type system. Fortunately, due to linearity, to each wire in a value $\valOne$ of type $\typeOne$ corresponds a wire type in $\typeOne$, which means that the size of a value can be computed as a function of its type.
We therefore define the following \emph{type size function} $\resourcecontent{\cdot} : \typeSet \to \indexSet$, which turns a type $\typeOne$ into an index that represents the size of the wires enclosed in a value of type $\typeOne$. We call $\resourcecontent{\typeOne}$ the \emph{size} of $\typeOne$.
\begin{mathpar}
	\resourcecontent{\unitt} = \resourcecontent{\bang\indexOne\typeOne} = \resourcecontent{\mcirct{\indexOne}{\icontextOne}{\btypeOne}{\btypeTwo}} = \imempty,
	\and
	\resourcecontent{\wtypeOne^{\indexOne}} = \iwire{\wtypeOne},
	\and
	\resourcecontent{\tensor{\typeOne}{\typeTwo}} = \ipar{\resourcecontent{\typeOne}}{\resourcecontent{\typeTwo}},
	\and
	\resourcecontent{\blistt{\ivarOne}{\indexOne}{\typeOne}} = \iparBounded{\ivarOne}{\indexOne}{\resourcecontent\typeOne},
	\and
	\resourcecontent{\arrowt{\typeOne}{\typeTwo}{\indexOne}{\btypeOne}} = \resourcecontent{\btypeOne}.
\end{mathpar}
Typing and label contexts can also contain wires, so we lift this definition to contexts in the natural way: $\resourcecontent{\contextOne;\lcOne}$ is thus the combined size of the wires occurring in $\contextOne$ and $\lcOne$.
The type size function plays a fundamental role in many rules. In \textsc{return}, it gives us the very size of the trivial circuit built by computation $\return{\valOne}$, i.e. the identity circuit on the wires in $\valOne$. In \textsc{let}, the wires in $\contextOne_2$ and $\lcOne_2$ that are used to type $\termTwo$ have to flow alongside the circuit of size $\indexOne$ built by $\termOne$, in parallel, and as such their size is taken into account in $\ipar{\indexOne}{\resourcecontent{\contextOne_2;\lcOne_2}}$. We encounter a similar scenario in the \textsc{box} rule, where the inputs to the lifted circuit-building function $\valOne$, which have size $\resourcecontent{\btypeOne}$, have to be routed around the circuit of size $\indexOne$ built when forcing $\valOne$ itself (see Section \ref{sec:pqr-operational-semantics}, Figure \ref{fig:operational-semantics}). The \textsc{fold} rule also makes extensive use of the type size function, as during the $i$-th iteration of the fold the wires in the remaining elements of the list, whose size amounts to $\iparBounded{\ivarTwo}{\indexThree-(\ivarOne+1)}{\resourcecontent{\typeOne}}$, have to flow alongside the circuit built by the step function, which has size $\iseq{(\ipar{\ipar{\indexOne}{\resourcecontent{\typeTwo}}}{\resourcecontent{\typeOne\isub{\indexThree-(\ivarOne+1)}{\ivarTwo}}})}{\indexTwo}$ due to the same reasons as \textsc{box}.

Overall, the definition of $\resourcecontent{\cdot}$ is fairly straightforward, save for the arrow type case. If we were working with regular arrow types, we would not be able to know how many wires are captured by a function from its environment, as the types of these wires are not guaranteed to occur in either the domain or the codomain of the arrow. This explains why we have to explicitly annotate the arrow type with this piece of information, as we anticipated in previous sections. 
But where is annotation $\btypeOne$ computed? Looking at the \textsc{abs} rule, we can see $\btypeOne=\abswirecontent{\contextOne}{\lcOne}{\termOne}$. Here, $\wirecontent{\cdot}$ is what we call the \emph{wire content function}, which is originally defined on types. What $\wirecontent{\typeOne}$ does is ``strip'' $\typeOne$ of anything but wire types, obtaining  a bundle type that represents all and only the wires that are contained in a value of type $\typeOne$:
\begin{mathpar}
	\wirecontent{\unitt} = \wirecontent{\bang\indexOne\typeOne} = \wirecontent{\mcirct{\indexOne}{\icontextOne}{\btypeOne}{\btypeTwo}} = \unitt,
	\and
  \wirecontent{\wtypeOne^{\indexOne}} = \wtypeOne^{\indexOne},
	\and
	\wirecontent{\tensor{\typeOne}{\typeTwo}} = \tensor{\wirecontent\typeOne}{\wirecontent\typeTwo},
	\and
	\wirecontent{\blistt{\ivarOne}{\indexOne}{\typeOne}} = \blistt{\ivarOne}{\indexOne}{\wirecontent{\typeOne}},
	\and
	\wirecontent{\arrowt{\typeOne}{\typeTwo}{\indexOne}{\btypeOne}} = \btypeOne.
\end{mathpar}

The $\wirecontent{\cdot}$ function is similar to $\resourcecontent{\cdot}$, but it preserves more information. In fact, $\resourcecontent{\wirecontent{\typeOne}} = \resourcecontent{\typeOne}$ for all $\typeOne$. In lifting $\wirecontent{\cdot}$ to contexts, instead of assuming a total ordering of label and variable names, we use the structure of $\termOne$ as a guide to determine the shape of the resulting bundle type. This may seem like an unnecessarily complicated step, but it is a good way of ensuring that the structure of $\btypeOne$ is preserved under variable substitution, something that is essential for the correctness proof of Section \ref{sec:correctness} and for a future denotational account of \PQRA.

\subsubsection{Boxed Circuits} Notice how the typing judgments in Figure \ref{fig:ites-rules} are parametrized on a CMI $\icsOne$. We use $\icsOne$ when typing boxed circuits, which we recall are little more than language-level wrappers for circuit objects. As a result, $\icsOne$ provides a sort of ``ground truth'' regarding the size and local metrics of the elementary gates and operations encoded as boxed circuits in \PQRA.
This happens in the \textsc{circ} rule: when circuit $\circuitOne$ has signature $\icsJudgment{\icsOne}{\icontextThree}{\lcOne}{\circuitOne}{\lcTwo}{\indexOne}$ and $\struct\labOne$ and $\struct\labTwo$ confer types $\btypeOne$ and $\btypeTwo$ to $\lcOne$ and $\lcTwo$, respectively, the boxed circuit $\boxedCirc{\bundle\labOne}{\circuitOne}{\bundle\labTwo}$ has type $\mcirct{\indexTwo}{\icontextThree}{\btypeOne}{\btypeTwo}$, where $\indexTwo = \imaxsug{\resourcecontent{\lcOne},\indexOne,\resourcecontent{\lcTwo}}$ for correctness purposes. A boxed circuit type therefore encodes almost all of the information that we can extract from circuit signatures using $\icsOne$ as the metric of reference.

That being said, it comes as no surprise that the \textsc{apply} rule is reminiscent of the \textsc{append} rule from Figure \ref{fig:racs}, especially with respect to local resources: the circuit type $\mcirct{\indexOne}{\icontextThree}{\btypeOne}{\btypeTwo}$ effectively acts as a general type scheme $(\icontextThree,\btypeOne,\btypeTwo)$, and the substitution $\isubThree\in\isemint{\icontextTwo}{\icontextThree}$ instantiates the local index variables in $\icontextThree$ with the actual variable names used to encode metrics in the current type derivation. 

\subsubsection{Subtyping}
Being based in part on type refinements, our type system unsurprisingly features two subsumption rules: \textsc{vsub} and \textsc{csub}. These rules rely on a subtyping judgment of the form $\itesSubJudgment{}{\icontextOne}{\icontextTwo}{\typeOne}{\typeTwo}$, which informally reads ``for all assignments of natural numbers to the index variables in $\icontextOne$ and $\icontextTwo$, $\typeOne$ is at least as refined as $\typeTwo$''.
In the context of resources analysis, this usually means that $\typeOne$ describes a resource consumption upper bound no greater than the one described by $\typeTwo$. More specifically, on the side of local resources,  $\wtypeOne^\indexOne$ is a subtype of $\wtypeOne^\indexTwo$ when $\indexOne$ is no greater than $\indexTwo$. On the side of global resources, we require the resource annotations on the $\bang{}{}$, $\arrowt{}{}{}{}$ and $\mathsf{Circ}$ constructors to be smaller in the subtype than they are in the supertype. Note that list types \emph{of equal length} are covariant in their type parameter.

\subsection{Resource Metric Interpretations}

Besides the CMI $\icsOne$, which provides the ground truth in the matter of the size of circuits, the typing judgments in Figure \ref{fig:ites-rules} are also parametrized on a second interpretation $\itesOne$, which we call a \emph{resource metric interpretation} (RMI) and which specifically interprets the abstract operators introduced with \PQRA. An RMI is therefore a quadruple $(\itesMempty{\itesOne},\itesWires{\itesOne}{\wtypeOne},\itesSeq{\itesOne},\itesPar{\itesOne})$, where $\itesMempty{\itesOne}\in\natSet$ interprets $\imempty$, $\itesWires{\itesOne}{\wtypeOne} \in \natSet$ interprets $\iwire{\wtypeOne}$, $\itesSeq{\itesOne} : \natSet \times \natSet \to \natSet$ interprets $\iseq{}{}$ and its bounded counterpart and $\itesPar{\itesOne} : \natSet \times \natSet \to \natSet$ interprets $\ipar{}{}$ and its bounded counterpart. Recall when in Section \ref{sec:overview} we discussed informally the size of wires and how different global metrics are combined in sequence and parallel. In the following examples, we can finally give a formal account of those intuitions.

\begin{example}[Gate Count RMI]
  \label{ex:gatecount-rmi}
  In the case of gate count we have 
  \begin{align*}
  \itesMempty{\itesGatecount} = \itesWires{\itesGatecount}{\wtypeOne}&=0,\\
  \itesSeq{\itesGatecount} = \itesPar{\itesGatecount} &= +.
  \end{align*}
\end{example}

\begin{example}[Width RMI]
  \label{ex:width-rmi}
  In the case of width we have \begin{align*}
    \itesMempty{\itesWidth}&=0, &
    \itesWires{\itesWidth}{\wtypeOne}&=1,\\
    \itesSeq{\itesWidth} &= \max, &
    \itesPar{\itesWidth} &= +.
  \end{align*}
\end{example}

The attentive reader might have noticed that RMIs do not handle local resources. This is because they do not need to: in moving from \crl\ to \PQRA, the way we handle circuits (and therefore \emph{global} metrics) changes, but the way we handle wires (and therefore \emph{local} metrics) remains fundamentally unchanged. If the inputs and outputs of the circuits that we box are soundly annotated with local resource metrics, then by instantiating the index variables occurring in them like we do in the \textsc{apply} rule we effectively propagate the local resource analysis defined in Section \ref{sec:quantumcircuitmetrics} to the circuit description language, at no extra cost.

At this point, a question arises naturally: should we consider any and all possible interpretations of the aforementioned resource operators? Or can we safely make some assumptions about the properties of the RMIs we work with, as to simplify reasoning? The answer is clearly that we \emph{do} want to impose some constraints on RMIs. We say that an RMI $\itesOne$ is \emph{well-behaved} when it satisfies the following properties:
\begin{enumerate}
  \item $(\natSet,\itesPar{\itesOne})$ is an ordered commutative monoid with identity $\itesMempty{\itesOne}$,
  \item $(\natSet,\itesSeq{\itesOne})$ is an ordered monoid with identity $\itesMempty{\itesOne}$,
  \item $\itesPar{\itesOne}(\itesWires{\itesOne}{\wtypeOne},\itesSeq{\itesOne}(n,m)) = \itesSeq{\itesOne}(\itesPar{\itesOne}(\itesWires{\itesOne}{\wtypeOne},n),\itesPar{\itesOne}(\itesWires{\itesOne}{\wtypeOne},m))$ for all $\wtypeOne\in\wtypeSet,n,m\in\natSet$.
  \item Either $\forall \wtypeOne.\itesWires{\itesOne}{\wtypeOne} = \itesMempty{\itesOne}$ or $\forall n,m . n\leq m \implies \itesSeq{\itesOne}(n,m) = \itesSeq{\itesOne}(m,n) = m$.
\end{enumerate}

Intuitively, well-behavedness describes what we expect from a ``reasonable'' resource analysis. In particular, the requirements that the monoids in constraints (1) and (2) be ordered reflects the idea that when we compose a circuit $\circuitOne$ with another circuit $\circuitTwo$, the result is \emph{at least as big} as either $\circuitOne$ or $\circuitTwo$, i.e. $\itesSeq{\itesOne}$ and $\itesPar{\itesOne}$ should be monotonically non-decreasing. The additional commutativity of $\itesPar{\itesOne}$ reflects the idea that that computations executing in parallel are independent, and thus their relative ordering is irrelevant for resource analysis. Constraint (3) is more subtle, and tells us that $\itesPar{\itesOne}$ distributes over $\itesSeq{\itesOne}$ when wires are involved. Finally, constraint (4) tells us that any metric that takes wires into consideration (e.g. width) should also take into account their reuse during sequential composition. Note that these properties are only meant to simplify the formal analysis of resources, and they \emph{not} guarantee the correctness of \PQRA\ (see Section \ref{sec:correctness}).
Naturally, the RMIs given in examples \ref{ex:gatecount-rmi} and \ref{ex:width-rmi} are easily proven to be well-behaved. In fact, we assume that from now on all the RMIs we deal with are well-behaved, unless otherwise stated.


\subsection{Operational Semantics}
\label{sec:pqr-operational-semantics}

We define the operational semantics of \PQRA\ by means of a big-step evaluation relation on \emph{configurations}. A configuration consists of a circuit $\circuitOne$ and a term $\termOne$ that can access $\circuitOne$'s output wires and thus extend the circuit with new operations. We write $\config{\circuitOne}{\termOne}\eval\config{\circuitTwo}{\valOne}$ when $\termOne$ evaluates to $\valOne$ and turns $\circuitOne$ into $\circuitTwo$ as a side effect. The rules used to derive this evaluation relation are shown in Figure \ref{fig:operational-semantics}.

\begin{figure}
	\centering
	\resizebox{\linewidth}{!}{
		\fbox{\begin{mathpar}
      \mprset {sep=1em}
			\inferrule[{app}]
			{\config{\circuitOne}{\termOne\sub{\valOne}{\varOne}}
				\eval \config{\circuitTwo}{\valTwo}}
			{\config{\circuitOne}{\app{(\abs{\varOne}{\typeOne}{\termOne})}{\valOne}}
				\eval \config{\circuitTwo}{\valTwo}}
			\and
			\inferrule[{dest}]
			{\config{\circuitOne}{\termOne\sub{\valOne}{\varOne}\sub{\valTwo}{\varTwo}}
				\eval \config{\circuitTwo}{\valThree}}
			{\config{\circuitOne}{\dest{\varOne}{\varTwo}{\tuple{\valOne}{\valTwo}}{\termOne}}
				\eval \config{\circuitTwo}{\valThree}}
			\and
			\inferrule[{force}]
			{\config{\circuitOne}{\termOne} \eval \config{\circuitTwo}{\valOne}}
			{\config{\circuitOne}{\force{(\lift{\termOne})}} \eval \config{\circuitTwo}{\valOne}}
			\and
			\inferrule[{apply}]
			{\config{\circuitThree}{\struct{\labFour}} = \append{\circuitOne}{\struct{\labThree}}{\boxedCirc{\struct\labOne}{\circuitTwo}{\struct{\labTwo}}}}
			{\config{\circuitOne}{\apply{\boxedCirc{\struct\labOne}{\circuitTwo}{\struct{\labTwo}}}{\struct{\labThree}}} \eval \config{\circuitThree}{\struct{\labFour}}}
			\and
			\inferrule[{box}]
			{(\lcOne,\struct{\labOne})=\freshlabels{\mtypeOne}
				\\
				\config{\cidentity{\lcOne}}{\termOne} \eval \config{\circuitTwo}{\valOne}
				\\
				\config{\circuitTwo}{\app{\valOne}{\struct{\labOne}}} \eval \config{\circuitThree}{\struct{\labTwo}}}
			{\config{\circuitOne}{\mboxt{\icontextThree}{\btypeOne}{(\lift{\termOne})}} \eval \config{\circuitOne}{\boxedCirc{\struct{\labOne}}{\circuitThree}{\struct{\labTwo}}}}
			\\
			\inferrule[{return}]
			{ }
			{\config{\circuitOne}{\return{\valOne}} \eval \config{\circuitOne}{\valOne}}
			\and
			\inferrule[{fold-end}]
			{ }
			{\config{\circuitOne}{\mfold{\ivarOne}{\valOne}{\valTwo}{\nil}} \eval \config{\circuitOne}{\valTwo}}
			\\
			\inferrule[{let}]
			{\config{\circuitOne}{\termOne} \eval \config{\circuitTwo}{\valOne}
				\\\\
				\config{\circuitTwo}{\termTwo\sub{\valOne}{\varOne}} \eval \config{\circuitThree}{\valTwo}}
			{\config{\circuitOne}{\letin{\varOne}{\termOne}{\termTwo}} \eval \config{\circuitThree}{\valTwo}}
			\and
			\inferrule[{fold-step}]
			{\config{\circuitOne}{\termOne\isub{0}{\ivarOne}} \eval \config{\circuitTwo}{\valFour}
				\\
				\config{\circuitTwo}{\app{\valFour}{\tuple{\valOne}{\valThree}}} \eval \config{\circuitThree}{\valFive}
				\\\\
				\config{\circuitThree}{\mfold{\ivarOne}{(\lift{\termOne\isub{\iplus{\ivarOne}{1}}{\ivarOne}})}{\valFive}{\valTwo}} \eval \config{\circuitFour}{\valSix}}
			{\config{\circuitOne}{\mfold{\ivarOne}{(\lift{\termOne})}{\valOne}{(\rcons{\valTwo}{\valThree})}} \eval \config{\circuitFour}{\valSix}}
	\end{mathpar}}
	}
	\caption{\PQRA's big-step operational semantics}
	\label{fig:operational-semantics}
\end{figure}

The \textsc{apply} rule describes how $\applyoperator$ extends the underlying circuit $\circuitOne$ with subcircuit $\circuitTwo$. The concatenation of $\circuitTwo$ to $\circuitOne$ is delegated to the $\appendfunction$ function, which is given in Definition \ref{def:append}. We write $\boxedCirc{\struct\labOne}{\circuitOne}{\struct\labTwo} \cong \boxedCirc{\struct\labThree}{\circuitTwo}{\struct\labFour}$ to say that two boxed circuits are \emph{equivalent}, i.e. that they are the same circuit up to a renaming of labels. Note that this amounts to a form of $\alpha$-equivalence for circuits.

\begin{definition}[$\appendfunction$]
	\label{def:append}
	We define \emph{the emission of $\boxedCirc{\struct\labOne}{\circuitTwo}{\struct{\labTwo}}$ to $\circuitOne$ on $\struct\labThree$}, written $\append{\circuitOne}{\struct\labThree}{\boxedCirc{\struct\labOne}{\circuitTwo}{\struct{\labTwo}}}$, as a pair of circuit and wire bundle computed as follows:
	\begin{enumerate}
		\item Find $\boxedCirc{\struct\labThree}{\circuitTwo'}{\struct{\labFour}} \cong \boxedCirc{\struct\labOne}{\circuitTwo}{\struct{\labTwo}}$ such that the labels shared by $\circuitOne$ and $\circuitTwo'$ are exactly those in $\struct\labThree$,
		\item Return $\config{\concat{\circuitOne}{\circuitTwo'}}{\struct{\labFour}}$.
	\end{enumerate}
\end{definition}

On the other hand, the semantics of a term of the form $\mboxt{\icontextThree}{\mtypeOne}{(\lift{\termOne})}$ relies on the $\freshlabelsfunction$ function, which takes as input a bundle type $\mtypeOne$ and instantiate fresh $\lcOne,\struct\labOne$ such that $\struct\labOne$ has type $\btypeOne$ under $\lcOne$. The wire bundle $\struct\labOne$ is then passed as an argument to the circuit-building function $\valOne$, and the resulting computation is evaluated in the context of the identity circuit $\cidentity{\lcOne}$. The result is a standalone circuit $\circuitThree$, together with its output labels $\struct{\labTwo}$. Eventually, $\struct{\labOne}$ and $\struct{\labTwo}$ become respectively the input and output interfaces of the boxed circuit $\boxedCirc{\struct{\labOne}}{\circuitThree}{\struct{\labTwo}}$, which is the result of the evaluation.

Note, at this point, that $\mtypeOne$ controls how many labels are initialized by the $\freshlabelsfunction$ function. For instance, if $\mtypeOne$ was equal to $\blistt{i}{4}{\qubitt}$, we would need to initialize $4$ new qubits. It follows that \PQRA\ indices are not only relevant to typing, but they have operational significance as well. For this reason, the aforementioned semantics are well-defined only on terms closed both in the sense of regular variables \textit{and} index variables.
The \textsc{fold-step} rule is especially interesting in this aspect. Earlier we said that $\foldoperator$ binds a variable name $\ivarOne$, which allows each iteration of the fold to contribute differently to the overall circuit being built. This mechanism is evident in this rule: when we run the first iteration, $\ivarOne$ is instantiated to $0$ and the resulting step function is applied to the accumulator. Next, the result of the iteration becomes the new accumulator, we increment $\ivarOne$ (effectively obtaining a different step function) and we recur on the rest of the list. Note that $\valOne,\valTwo,\valThree,\valFour,\valFive$ and $\valSix$ are all values.

\section{Correctness}
\label{sec:correctness}

In this section we define what it means for a \PQRA\ judgment to be \emph{correct}, and we prove that a \PQRA\ instance defined by an RMI $\itesOne$ and a CMI $\icsOne$ is correct by construction under reasonable assumptions about the relationship between $\itesOne$ and $\icsOne$.

\subsection{Defining Correctness}

Any property of a \PQRA\ judgment fundamentally depends on the chosen RMI and CMI: the former describes the high-level analysis carried out at the language level, whilst the second describes (an overapproximation of) the actual size of the circuits. For this reason, from now on, when we talk of correctness, we are really talking about the correctness \emph{of} an RMI \emph{with respect} to a CMI. If an RMI gives us an instance of \PQRA\ that is correct with respect to a CMI, and that CMI soundly overapproximates a ground truth recursive circuit metric, then we can rightfully say that the analyses carried out through the relevant \PQRA\ instance are correct.

Secondly, we must distinguish between the correctness of the analysis of global and local resources.
On the side of global metrics, correctness is fairly straightforward: a judgment $\itesCJudgment{\itesOne,\icsOne}{\emptyset}{\icontextTwo}{\emptycontext}{\lcOne}{\termOne}{\typeOne}{\indexOne}$ is correct when, for all $\circuitOne$ supplying the wires in $\lcOne$, $\termOne$ evaluates to $\valOne$, extending $\circuitOne$ with a subcircuit $\circuitTwo$ whose size (according to $\icsOne$) is at most $\indexOne$.
On the side of refinement types and local metrics, correctness is a bit more involved: let $\lcTwo$ be the label context corresponding to the labels output by $\circuitTwo$. Due to linearity, the labels in the domain of $\lcTwo$ must appear \emph{somewhere} in $\valOne$. Similarly, the wire types in the codomain of $\lcTwo$, alongside their local metric annotations, must appear \emph{somewhere} in $\typeOne$. Thus, correctness in the sense of local resources is a matter of matching the annotations in $\lcTwo$ with the annotations in $\typeOne$ using $\valOne$ as a bridge, and then checking that the annotations in $\typeOne$ soundly overapproximate the corresponding annotations in $\lcTwo$. This idea is formalized in the following definition of \emph{local resource correctness}.

\begin{definition}[Local Resource Correctness]
	We define $\vcorrect{\itesOne,\icsOne}{\icontextTwo}{\lcOne}{\valOne}{\typeOne}$ as the smallest relation such that the following conditions hold:
  \begin{align*}
		\vcorrect{\itesOne,\icsOne}{\icontextTwo}{\emptycontext}{\valOne}{\unitt},
    \qquad
    \vcorrect{\itesOne,\icsOne}{\icontextTwo}{\emptycontext}{\valOne}{\bang{\indexOne}{\typeOne}},&
    \qquad
    \vcorrect{\itesOne,\icsOne}{\icontextTwo}{\emptycontext}{\valOne}{\mcirct{\indexOne}{\icontextThree}{\btypeOne}{\btypeTwo}},
		\\
		\vcorrect{\itesOne,\icsOne}{\icontextTwo}{\emptycontext}{\nil}{\blistt{\ivarOne}{\indexOne}{\typeOne}} \iff &\; \iEqJudgment{}{}{\indexOne}{0},
		\\
		\vcorrect{\itesOne,\icsOne}{\icontextTwo}{\labOne:\wtypeOne^{\indexOne}}{\labOne}{\wtypeOne^{\indexTwo}} \iff &\; \iLeqJudgment{\itesOne,\icsOne}{\icontextTwo}{\indexOne}{\indexTwo},
		\\
		\vcorrect{\itesOne,\icsOne}{\icontextTwo}{\lcOne}{\abs{\varOne}{\typeOne}{\termOne}}{\arrowt{\typeOne}{\typeTwo}{\indexOne}{\btypeOne}} \iff &\; \vcorrect{\itesOne,\icsOne}{\icontextTwo}{\lcOne}{\labels(\termOne)}{\btypeOne},
		\\
		\vcorrect{\itesOne,\icsOne}{\icontextTwo}{\lcOne\uplus\lcTwo}{\tuple{\valOne}{\valTwo}}{\tensor{\typeOne}{\typeTwo}} \iff &\; \vcorrect{\itesOne,\icsOne}{\icontextTwo}{\lcOne}{\valOne}{\typeOne} \wedge \vcorrect{\itesOne,\icsOne}{\icontextTwo}{\lcTwo}{\valTwo}{\typeTwo},
		\\
		\vcorrect{\itesOne,\icsOne}{\icontextTwo}{\lcOne\uplus\lcTwo}{\rcons{\valOne}{\valTwo}}{\blistt{\ivarOne}{\indexOne}{\typeOne}} \iff &\; \exists \indexTwo.(\iEqJudgment{}{}{\indexOne}{\indexTwo+1} \wedge \vcorrect{\itesOne,\icsOne}{\icontextTwo}{\lcOne}{\valOne}{\blistt{\ivarOne}{\indexTwo}{\typeOne}}\\
      &\wedge \vcorrect{\itesOne,\icsOne}{\icontextTwo}{\lcTwo}{\valTwo}{\typeOne\isub{\indexTwo}{\ivarOne}})
	\end{align*}
\end{definition}
Here $\labels : \termSet \cup \valSet \to \bvalSet$ is a function that strips a language expression of everything that is not a wire bundle, in the same way that $\wirecontent{\cdot}$ strips a type of anything that is not a wire type. The resulting wire bundle encodes all and only the occurrences of labels within the expression's syntax tree, which have to be correct with respect to $\btypeOne$. Now that we have a formal definition, we can define what it means for a \PQRA\ judgment, and by extension for an RMI, to be correct.


\begin{definition}[Correct Judgment]
	We say that a judgment $\itesCJudgment{\itesOne,\icsOne}{\emptyset}{\icontextTwo}{\emptycontext}{\lcOne}{\termOne}{\typeOne}{\indexOne}$ is \emph{correct} when for all $\circuitOne$ such that $\icsJudgment{\icsOne}{\icontextTwo}{\lcTwo}{\circuitOne}{\lcOne,\lcThree}{\indexTwo}$ there exist $\circuitTwo,\valOne,\lcFour,\indexThree$ such that
  \begin{equation*}
		\config{\circuitOne}{\termOne} \eval \config{\concat{\circuitOne}{\circuitTwo}}{\valOne} \wedge \icsJudgment{\icsOne}{\icontextTwo}{\lcOne}{\circuitTwo}{\lcFour}{\indexThree} \wedge \iLeqJudgment{\itesOne,\icsOne}{}{\indexThree}{\indexOne} \wedge \vcorrect{\itesOne,\icsOne}{\icontextTwo}{\lcFour}{\valOne}{\typeOne}
	\end{equation*}
\end{definition}

\begin{definition}[Correct RMI]
	We say that an RMI $\itesOne$ is \emph{correct} with respect to a CMI $\icsOne$ when every judgment of the form $\itesCJudgment{\itesOne,\icsOne}{\emptyset}{\icontextTwo}{\emptycontext}{\lcOne}{\termOne}{\typeOne}{\indexOne}$ derivable in \PQRA\ is correct.
\end{definition}

For simplicity, we say that the \PQRA\ instance defined by RMI $\itesOne$ and CMI $\icsOne$ is correct when $\itesOne$ is correct with respect to $\icsOne$.  

\subsection{Semantic Interpretation of Types}

In order to prove correctness, we rely on the logical relations technique \cite{logical-relations}.
The first thing we have to do is provide a \emph{semantic interpretation} of types and effects. This interpretation characterizes what it means for a value or term to \emph{semantically} be of a certain type and/or have a certain effect. For the sake of clarity, when reasoning in these terms, we explicitly keep track of which free labels occur in a value or term, by means of a label context $\lcOne$. The formal definition of the semantic interpretation of types and effects is thus as follows.

\begin{definition}[Semantic Interpretation of Types]
	We define $\vsemint{\itesOne,\icsOne}{\icontextTwo}{\typeOne}\subseteq\lcSet\times\valSet$ and $\tsemint{\itesOne,\icsOne}{\icontextTwo}{\typeOne}{\indexOne}\subseteq\lcSet\times\termSet$ as the smallest type-indexed relations such that the following conditions hold:
  \begin{align*}
		(\emptycontext,\unitv) \in \vsemint{\itesOne,\icsOne}{\icontextTwo}{\unitt},\phantom\iff & \\
		(\emptycontext,\nil) \in \vsemint{\itesOne,\icsOne}{\icontextTwo}{\blistt{\ivarOne}{\indexOne}{\typeOne}} \iff &\; \iEqJudgment{}{}{\indexOne}{0},
    \\
		(\labOne:\wtypeOne^{\indexOne},\labOne) \in \vsemint{\itesOne,\icsOne}{\icontextTwo}{\wtypeOne^{\indexTwo}} \iff &\; \iLeqJudgment{\itesOne,\icsOne}{\icontextTwo}{\indexOne}{\indexTwo},
		\\
		(\emptycontext,\lift\termOne) \in \vsemint{\itesOne,\icsOne}{\icontextTwo}{\bang{\indexOne}\typeOne} \iff &\; (\emptycontext,\termOne) \in \tsemint{\itesOne,\icsOne}{\icontextTwo}{\typeOne}{\indexOne},
		\\
		(\lcOne\uplus\lcTwo,\tuple{\valOne}{\valTwo}) \in \vsemint{\itesOne,\icsOne}{\icontextTwo}{\tensor{\typeOne}{\typeTwo}} \iff &\; ((\lcOne,\valOne) \in \vsemint{\itesOne,\icsOne}{\icontextTwo}{\typeOne}) \wedge ((\lcTwo,\valTwo) \in \vsemint{\itesOne,\icsOne}{\icontextTwo}{\typeTwo}),
		\\
		(\lcOne\uplus\lcTwo,\rcons{\valOne}{\valTwo}) \in \vsemint{\itesOne,\icsOne}{\icontextTwo}{\blistt{\ivarOne}{\indexOne}{\typeOne}} \iff &\; \exists \indexTwo. ((\iEqJudgment{}{}{\indexOne}{J+1}) \wedge ((\lcOne,\valOne) \in\vsemint{\itesOne,\icsOne}{\icontextTwo}{\blistt{\ivarOne}{\indexTwo}{\typeOne}})\\
      &\quad\wedge ((\lcTwo,\valTwo)\in\vsemint{\itesOne,\icsOne}{\icontextTwo}{\typeOne\isub{\indexTwo}{\ivarOne}})),
		\\
		(\lcOne,\abs{\varOne}{\typeOne}{\termOne}) \in \vsemint{\itesOne,\icsOne}{\icontextTwo}{\arrowt{\typeOne}{\typeTwo}{\indexOne}{\btypeOne}} \iff &\; ((\lcOne,\labels(\termOne)) \in \vsemint{\itesOne,\icsOne}{\icontextTwo}{\btypeOne}) \wedge \forall\lcTwo,\valOne. (((\lcTwo,\valOne)\in\vsemint{\itesOne,\icsOne}{\icontextTwo}{\typeOne})\\
      &\quad\implies ((\lcOne\uplus\lcTwo,\termOne\sub{\valOne}{\varOne})\in\tsemint{\itesOne,\icsOne}{\icontextTwo}{\typeTwo}{\indexOne})),
		\\
		(\emptycontext,\boxedCirc{\bundle\labOne}{\circuitOne}{\bundle\labTwo}) \in \vsemint{\itesOne,\icsOne}{\icontextTwo}{\mcirct{\indexOne}{\icontextThree}{\btypeOne}{\btypeTwo}} \iff &\; (\icsJudgment{\icsOne}{\icontextThree}{\lcOne}{\circuitOne}{\lcTwo}{\indexTwo})\\
      &\quad\wedge (\iLeqJudgment{\itesOne,\icsOne}{\phantom{}}{\indexTwo}{\indexOne}) \wedge (\iLeqJudgment{\itesOne,\icsOne}{\phantom{}}{\resourcecontent{\lcOne}}{\indexOne}) \wedge (\iLeqJudgment{\itesOne,\icsOne}{\phantom{}}{\resourcecontent{\lcTwo}}{\indexOne})\\
      &\quad\wedge ((\lcOne,\bundle\labOne) \in \vsemint{\itesOne,\icsOne}{\icontextThree}{\btypeOne}) \wedge ((\lcTwo,\bundle\labTwo) \in \vsemint{\itesOne,\icsOne}{\icontextThree}{\btypeTwo}),
		\\
		(\lcOne,\termOne) \in \tsemint{\itesOne,\icsOne}{\icontextTwo}{\typeOne}{\indexOne} \iff &\; \forall\circuitOne. ((\icsJudgment{\icsOne}{\icontextTwo}{\lcTwo}{\circuitOne}{\lcOne,\lcThree}{\indexTwo}) \implies \exists\circuitTwo,\valOne,\lcFour,\indexThree.\\
      &\quad((\config{\circuitOne}{\termOne} \eval \config{\concat{\circuitOne}{\circuitTwo}}{\valOne}) \wedge (\icsJudgment{\icsOne}{\icontextTwo}{\lcOne}{\circuitTwo}{\lcFour}{\indexThree})\\
      &\quad\wedge (\iLeqJudgment{\itesOne,\icsOne}{\phantom{}}{\indexThree}{\indexOne})\wedge (\iLeqJudgment{\itesOne,\icsOne}{\phantom{}}{\resourcecontent{\lcOne}}{\indexOne}) \wedge (\iLeqJudgment{\itesOne,\icsOne}{\phantom{}}{\resourcecontent{\lcFour}}{\indexOne})\\
      &\quad\wedge ((\lcFour,\valOne)\in\vsemint{\itesOne,\icsOne}{\icontextTwo}{\typeOne}))).
	\end{align*}
\end{definition}

Note that because $\lcOne$ and $\typeOne$ may naturally contain free variables encoding local metrics, the interpretation relation is annotated with the index context $\icontextTwo$ from which these variables come from. This specific approach to logical relations, where a semantic interpretation is naturally given to non-ground types, is known as open logical relations \cite{open-logical-relations}.
Predicate $\vsemint{\itesOne,\icsOne}{\icontextTwo}{\typeOne}$ mostly follows the structure of types. Predicate $\tsemint{\itesOne,\icsOne}{\icontextTwo}{\typeOne}{\indexOne}$, on the other hand, tells us what we expect from the execution of a term $\termOne$ of type $\typeOne$ and effect annotation $\indexOne$. Intuitively, it tells us that whenever we pair $\termOne$ with a circuit $\circuitOne$ that supplies the wires in $\lcOne$, $\termOne$ evaluates to $\valOne$ and extends $\circuitOne$ with a subcircuit $\circuitTwo$ with specific properties. Namely, $\circuitTwo$'s size, as well as the size of its inputs and outputs, are upper-bounded by $\indexOne$. Furthermore, the labels output by $\circuitTwo$, i.e. those in $\lcFour$, are all and only the free labels occurring in $\valOne$, and together with these labels $\valOne$ is semantically of type $\typeOne$.

Note that these definitions only apply to expressions that are closed in the sense of values. In order to deal with open expressions, we extend the semantic interpretation of types to typing contexts. Informally, a typing context $\contextOne$ is interpreted by a value substitution $\vsubOne$ which substitutes each $\varOne$ in $\contextOne$ with a value $\valOne$ that is semantically of type $\contextOne(\varOne)$. The formal definition is as follows.

\begin{definition}[Semantic Interpretation of Typing Contexts]
	We define $\csemint{\itesOne,\icsOne}{\icontextTwo}{\contextOne}\subseteq\lcSet\times\vsubSet$ as the smallest context-indexed relation such that the following conditions hold:
	\begin{equation*}
		(\emptycontext,\emptysub) \in \csemint{\itesOne,\icsOne}{\icontextTwo}{\emptycontext},
		\qquad
		(\lcOne\uplus\lcTwo,\vsubOne\extension{\varOne}{\valOne}) \in \csemint{\itesOne,\icsOne}{\icontextTwo}{\contextOne,\varOne:\typeOne} \iff (\lcOne,\vsubOne)\in\csemint{\itesOne,\icsOne}{\icontextTwo}{\contextOne} \wedge (\lcTwo,\valOne)\in\vsemint{\itesOne,\icsOne}{\icontextTwo}{\typeOne},
	\end{equation*}
	where, for all $e \in \valSet \cup \termSet$, $\emptysub(e) = e$ and $\vsubOne\extension{\varOne}{\valOne}(e) = \vsubOne(e\sub{\valOne}{\varOne})$.
\end{definition}

Now we have all the ingredients to define the semantic well-typedness of (possibly open) values and terms. Recall that $\isubOne\in\isemint{\emptyset}{\icontextOne}$ means that index substitution $\isubOne$ substitutes each index variable in $\icontextOne$ with a closed index. We have the following definition.

\begin{definition}[Semantic Well-typedness] If for all $\isubOne\in\isemint{\emptyset}{\icontextOne}$ and all $(\lcTwo,\vsubOne)\in\csemint{\itesOne,\icsOne}{\icontextTwo}{\isubOne(\contextOne)}$ we have
	\begin{enumerate}
		\item $(\lcOne\uplus\lcTwo,\vsubOne(\isubOne(\termOne)))\in \tsemint{\itesOne,\icsOne}{\icontextTwo}{\isubOne(\typeOne)}{\isubOne(\indexOne)}$, then we write $\semCJudgment{\itesOne,\icsOne}{\icontextOne}{\icontextTwo}{\contextOne}{\lcOne}{\termOne}{\typeOne}{\indexOne}$,
		\item $(\lcOne\uplus\lcTwo,\vsubOne(\isubOne(\valOne)))\in \vsemint{\itesOne,\icsOne}{\icontextTwo}{\isubOne(\typeOne)}$, then we write $\semVJudgment{\itesOne,\icsOne}{\icontextOne}{\icontextTwo}{\contextOne}{\lcOne}{\valOne}{\typeOne}$.
	\end{enumerate}
\end{definition}

\subsection{Proving Correctness}

Next, we show that \PQRA\ typing entails semantic typing, and that semantic typing in turn entails correctness. Naturally, this proof would be impossible to carry out without imposing some constraints on the RMI defining the derivable judgments. We therefore identify a small set of constraints on RMIs, which we call \emph{local coherence}, that are sufficient to ensure that an RMI is correct. These constraints relate an RMI to an underlying CMI, and ensure that the former is indeed a sound overapproximation of the latter.

\begin{definition}[Local Coherence]
	We say that an RMI $\itesOne$ is \emph{locally coherent} with a CMI $\icsOne$ when
	\begin{enumerate}
		\item $\icsId{\icsOne}{\emptyset} \leq \itesMempty{\itesOne}$ and $\icsId{\icsOne}{\multiset{\lcOne}+\{\wtypeOne\}} \leq \itesPar{\itesOne}(\icsId{\icsOne}{\multiset{\lcOne}},\itesWires{\itesOne}{\wtypeOne})$, for all $\lcOne\in\lcSet,\wtypeOne\in\{\qubitt,\bitt\}$,
		\item $\icsAppend{\icsOne}{\gateOne}(\itesPar{\itesOne}(\itesWires{\itesOne}{\wtypeOne},n),\itesPar{\itesOne}(\itesWires{\itesOne}{\wtypeOne},m),i,o) \leq \itesPar{\itesOne}(\itesWires{\itesOne}{\wtypeOne},\icsAppend{\icsOne}{\gateOne}(n,m,i,o))$, for all $\gateOne\in\gateSet,\wtypeOne\in\wtypeSet,n,m,i,o\in\natSet$,
		\item $\icsAppend{\icsOne}{\gateOne}(\itesSeq{\itesOne}(p,n),m,i,o) \leq \itesSeq{\itesOne}(p,\icsAppend{\icsOne}{\gateOne}(n,m,i,o))$, for all $\gateOne\in\gateSet,p,n,m,i,o\in\natSet$.
	\end{enumerate}
\end{definition}

The first constraint ensures that the type size function induced by $\itesOne$ correctly overapproximates the size of identity circuits, i.e. that for all $\lcOne\in\lcSet$ we have $\iLeqJudgment{\itesOne,\icsOne}{}{\rcount{\lcOne}}{{\multiset{\lcOne}}}$. Constraint (2) is a sort of lax distributive law, telling us that factoring $\itesPar{\itesOne}$ over $\icsAppend{\icsOne}{\gateOne}$ yields a sound overapproximation. Conversely, constraint (3) tells us that if a gate is appended to the composition in sequence of two circuits $\circuitOne$ and $\circuitTwo$, we can overapproximate the overall size by considering the composition in sequence of $\circuitOne$ and $\gateOne$ appended to $\circuitTwo$.
For simplicity, we say that a \PQRA\ instance is locally coherent when its RMI is locally coherent with its CMI. Under this condition, we can show that well-typedness in \PQRA\ implies semantic well-typedness.

\begin{lemma}
	\label{lem:syntactic-wt-to-semantic-wt}
	Let $\itesOne$ be an RMI which is locally coherent with a CMI $\icsOne$. Let $\Pi$ be a type derivation. We have that
	\begin{align*}
		\itesCJudgment{\itesOne,\icsOne}{\icontextOne}{\icontextTwo}{\contextOne}{\lcOne}{\termOne}{\typeOne}{\indexOne} &\implies \semCJudgment{\itesOne,\icsOne}{\icontextOne}{\icontextTwo}{\contextOne}{\lcOne}{\termOne}{\typeOne}{\indexOne},
		\\
		\itesVJudgment{\itesOne,\icsOne}{\icontextOne}{\icontextTwo}{\contextOne}{\lcOne}{\valOne}{\typeOne} &\implies \semVJudgment{\itesOne,\icsOne}{\icontextOne}{\icontextTwo}{\contextOne}{\lcOne}{\valOne}{\typeOne}.
	\end{align*}
\end{lemma}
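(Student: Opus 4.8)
Here I sketch the proof I would give; the statement is the standard ``fundamental lemma'' for the logical relation.

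The plan is to prove both implications simultaneously by induction on the structure of the type derivation $\Pi$, with one case per rule of Figure~\ref{fig:ites-rules}. In each case I would unfold the definition of semantic well-typedness: fix an arbitrary $\isubOne\in\isemint{\emptyset}{\icontextOne}$ and a pair $(\lcTwo,\vsubOne)\in\csemint{\itesOne,\icsOne}{\icontextTwo}{\isubOne(\contextOne)}$, apply the induction hypothesis to the immediate sub-derivations (after pushing $\isubOne$ and $\vsubOne$ through the judgments they conclude), and then repackage the result into the appropriate clause of $\vsemint{\itesOne,\icsOne}{\icontextTwo}{-}$ or $\tsemint{\itesOne,\icsOne}{\icontextTwo}{-}{-}$. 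A handful of routine \emph{substitution lemmas} is needed for this --- that $\vsubOne$ and $\isubOne$ commute with typing, that the interpretation of indices commutes with index substitution, and that a split context $\contextOne_1,\contextOne_2$ receives compatible interpretations --- but with these in hand the cases \textsc{var}, \textsc{lab}, \textsc{nil}, \textsc{rcons}, \textsc{tuple}, \textsc{lift} and \textsc{force} are immediate matches of definitions. For \textsc{vsub} and \textsc{csub} I would first establish an auxiliary \emph{subtyping soundness} lemma --- if $\itesSubJudgment{\itesOne,\icsOne}{\icontextOne}{\icontextTwo}{\typeOne}{\typeTwo}$ then $(\lcOne,\valOne)\in\vsemint{\itesOne,\icsOne}{\icontextTwo}{\typeOne}$ entails $(\lcOne,\valOne)\in\vsemint{\itesOne,\icsOne}{\icontextTwo}{\typeTwo}$, and likewise for $\tsemint$ (additionally invoking monotonicity of the effect bound) --- by a separate easy induction on the subtyping derivation.

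The closure-carrying cases come next. For \textsc{abs}, showing $(\lcOne,\abs{\varOne}{\typeOne}{\termOne})\in\vsemint{\itesOne,\icsOne}{\icontextTwo}{\arrowt{\typeOne}{\typeTwo}{\indexOne}{\btypeOne}}$ splits into: (i) $\labels(\termOne)$ is semantically of type $\btypeOne$, which follows from the side condition $\btypeOne=\abswirecontent{\contextOne}{\lcOne}{\termOne}$ together with a structural fact relating $\wirecontent{\cdot}$ lifted to contexts with $\labels$ acting on the body; and (ii) for any semantically well-typed argument $\valTwo$ the term $\termOne\sub{\valTwo}{\varOne}$ is semantically of type $\typeTwo$ with effect $\indexOne$, which is exactly the induction hypothesis instantiated with the extended interpretation $\vsubOne\extension{\varOne}{\valTwo}$. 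The \textsc{app} case then plugs the semantic argument directly into the semantic function. For \textsc{return}, evaluation leaves the ambient circuit untouched, so the subcircuit built is an identity $\cidentity{\lcFour}$ on the free labels $\lcFour$ of $\valOne$, whose RACS carries the global index ${\multiset{\lcFour}}$; the required bound $\iLeqJudgment{\itesOne,\icsOne}{}{{\multiset{\lcFour}}}{\resourcecontent{\typeOne}}$ comes from iterating the second clause of local coherence condition~(1) over the wire types of $\typeOne$, which by linearity of $\valOne$ are in bijection with $\lcFour$.

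The heart of the proof is circuit construction, in the cases \textsc{circ}, \textsc{apply}, \textsc{box} and \textsc{let}. The shared ingredient I would prove first is a \emph{composition lemma for RACSs}: appending to a circuit of global size $a$ a second circuit of global size at most $b$ built on its outputs, with a bundle of wires of size $c$ flowing alongside untouched, again yields a circuit that admits a RACS, of global size at most $\iseq{(\ipar{a}{c})}{b}$ and with the local annotations of its outputs soundly over-approximated. This is shown by induction on the appended circuit, splicing in one elementary operation $\gateOne$ at a time, and it is precisely here that local coherence conditions~(2) and~(3) are consumed: (3) commutes $\iseq{}{}$ (the already-built part) past an $\iappend{\gateOne}$ step and (2) commutes $\ipar{}{}$ (the wires flowing alongside) past it, while the monoid, commutativity and ordering laws of well-behavedness absorb the remaining rearrangements. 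Granted this lemma, \textsc{circ} is essentially immediate, since the $\imaxsug{\cdot}$ in the boxed circuit's index trivialises the three size inequalities in the clause for $\mcirct{\indexOne}{\icontextThree}{\btypeOne}{\btypeTwo}$ and the memberships $(\lcOne,\struct\labOne)\in\vsemint{\itesOne,\icsOne}{\icontextThree}{\btypeOne}$ follow from a small lemma that bundle judgments of Figure~\ref{fig:racs} entail semantic membership of bundle types; \textsc{apply} combines the semantic boxed-circuit hypothesis (noting that $\appendfunction$ only renames labels, and RACSs are stable under renaming) with the substitution $\isubThree\in\isemint{\icontextTwo}{\icontextThree}$ and the composition lemma; \textsc{box} additionally needs a \emph{fresh-labels lemma}, that $\freshlabels{\btypeOne}$ produces $(\lcOne,\struct\labOne)$ with $(\lcOne,\struct\labOne)\in\vsemint{\itesOne,\icsOne}{\icontextThree}{\btypeOne}$ and $\cidentity{\lcOne}$ admitting a trivial RACS, after which one evaluates the forced step function (semantically typed by the induction hypothesis) on $\struct\labOne$ over $\cidentity{\lcOne}$, boxes the result and checks the bound $\indexThree=\iseq{(\ipar{\indexOne}{\resourcecontent{\btypeOne}})}{\indexTwo}$ using well-behavedness (the effect of $\boxoperator$ itself being $\imempty$, which is sound because evaluation does not touch the ambient circuit); \textsc{let} threads the circuit built by $\termOne$ into the evaluation of $\termTwo\sub{\valOne}{\varOne}$, with $\resourcecontent{\contextOne_2,\lcOne_2}$ supplying the $\ipar{}{}$ argument and the two subcircuits in sequence supplying the $\iseq{}{}$.

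I expect the \textsc{fold} case to be the main obstacle. Since $\foldoperator$ is the only genuinely recursive construct, its semantic soundness cannot be read off the induction hypothesis in one step; instead, once $\isubOne$ has closed the list length $\indexThree$ to a concrete $N\in\natSet$, I would run an \emph{inner induction on $N$} mirroring the \textsc{fold-end}/\textsc{fold-step} rules. The base case produces only the identity circuit on the accumulator's wires, bounded by the first argument $\resourcecontent{\typeTwo\isub{0}{\ivarOne}}$ of the outer $\imaxsug{\cdot}$ (again via condition~(1)). In the step case one instantiates the binder $\ivarOne$ to the current iteration index, obtains a semantic arrow by forcing the semantically typed step function, applies it to the accumulator and list head, promotes the result to the new accumulator, and uses the composition lemma to splice the step circuit onto the one accumulated so far; the delicate part is the arithmetic, namely showing that after all $N$ iterations the accumulated circuit respects $\indexFour$, which requires peeling the bounded operators $\iseqBounded{\ivarOne}{\indexThree}{(\cdots)}$ and $\iparBounded{\ivarTwo}{\cdots}{\resourcecontent{\typeOne}}$ one iteration at a time and repeatedly appealing to associativity, commutativity of $\ipar{}{}$, monotonicity, and well-behavedness condition~(3) (distributivity of $\ipar{}{}$ over $\iseq{}{}$ along wires). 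Once the composition, subtyping-soundness, bundle-judgment, fresh-labels and substitution lemmas are in place, the remaining cases go through routinely.
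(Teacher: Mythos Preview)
Your proposal is correct and takes essentially the same approach as the paper, which records the proof as a single line: ``By induction on the size of $\Pi$.'' Your sketch is a faithful and considerably more detailed elaboration of that induction, correctly identifying the auxiliary lemmas (substitution, subtyping soundness, RACS composition, fresh labels) and the points at which well-behavedness and the three local-coherence conditions are consumed.
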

\begin{proof}
	By induction on the size of $\Pi$.
\end{proof}

Next, it is easy to show that semantic well-typedness implies correctness both in the sense of global resources and of local resources, by which we conclude the main correctness claim.

\begin{lemma}
	\label{lem:semantic-wt-to-correctness}
	If $\semCJudgment{\itesOne,\icsOne}{\emptyset}{\icontextTwo}{\emptycontext}{\lcOne}{\termOne}{\typeOne}{\indexOne}$, then for every $\circuitOne$ such that $\icsJudgment{\itesOne,\icsOne}{{\icontextTwo}}{\lcTwo}{\circuitOne}{\lcOne,\lcThree}{\indexTwo}$ there exist $\circuitTwo,\valOne,\lcFour,\indexThree$ such that
  \begin{equation*}
		\config{\circuitOne}{\termOne} \eval \config{\concat{\circuitOne}{\circuitTwo}}{\valOne} \wedge \icsJudgment{\itesOne,\icsOne}{\icontextTwo}{\lcOne}{\circuitTwo}{\lcFour}{\indexThree} \wedge \iLeqJudgment{\itesOne,\icsOne}{}{\indexThree}{\indexOne} \wedge \vcorrect{\itesOne,\icsOne}{\icontextTwo}{\lcFour}{\valOne}{\typeOne}.
	\end{equation*}
\end{lemma}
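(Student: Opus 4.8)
The plan is to show that the statement is, almost literally, the unfolding of the definitions of $\semCJudgment{}{}{}{}{}{}{}{}$, $\tsemint{}{}{}{}$ and $\vsemint{}{}{}$, the only genuine content being an auxiliary lemma that passes from the value interpretation $\vsemint{\itesOne,\icsOne}{\icontextTwo}{\typeOne}$ to the local-resource-correctness relation $\vcorrect{\itesOne,\icsOne}{\icontextTwo}{\lcOne}{\valOne}{\typeOne}$. First I would specialize the hypothesis $\semCJudgment{\itesOne,\icsOne}{\emptyset}{\icontextTwo}{\emptycontext}{\lcOne}{\termOne}{\typeOne}{\indexOne}$: since the operational index context is $\emptyset$, the only $\isubOne\in\isemint{\emptyset}{\emptyset}$ is the empty substitution, and since the typing context is $\emptycontext$, the only pair in $\csemint{\itesOne,\icsOne}{\icontextTwo}{\emptycontext}$ is $(\emptycontext,\emptysub)$. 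Hence the hypothesis reduces to $(\lcOne,\termOne)\in\tsemint{\itesOne,\icsOne}{\icontextTwo}{\typeOne}{\indexOne}$.

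Next I would unfold $\tsemint{\itesOne,\icsOne}{\icontextTwo}{\typeOne}{\indexOne}$ against the circuit $\circuitOne$ from the statement, for which $\icsJudgment{\icsOne}{\icontextTwo}{\lcTwo}{\circuitOne}{\lcOne,\lcThree}{\indexTwo}$ holds by assumption. This immediately yields $\circuitTwo,\valOne,\lcFour,\indexThree$ such that $\config{\circuitOne}{\termOne}\eval\config{\concat{\circuitOne}{\circuitTwo}}{\valOne}$, $\icsJudgment{\icsOne}{\icontextTwo}{\lcOne}{\circuitTwo}{\lcFour}{\indexThree}$, $\iLeqJudgment{\itesOne,\icsOne}{}{\indexThree}{\indexOne}$ (and, as a bonus we do not need, $\iLeqJudgment{\itesOne,\icsOne}{}{\resourcecontent{\lcOne}}{\indexOne}$ and $\iLeqJudgment{\itesOne,\icsOne}{}{\resourcecontent{\lcFour}}{\indexOne}$), together with $(\lcFour,\valOne)\in\vsemint{\itesOne,\icsOne}{\icontextTwo}{\typeOne}$. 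Three of the four conjuncts of the conclusion are thus already in hand; the remaining obligation is $\vcorrect{\itesOne,\icsOne}{\icontextTwo}{\lcFour}{\valOne}{\typeOne}$.

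For this I would establish the auxiliary lemma $(\lcOne,\valOne)\in\vsemint{\itesOne,\icsOne}{\icontextTwo}{\typeOne}\implies\vcorrect{\itesOne,\icsOne}{\icontextTwo}{\lcOne}{\valOne}{\typeOne}$, by induction on the structure of the value $\valOne$ — observing that $\vsemint{}{}{}$ restricted to values and $\vcorrect{}{}{}{}$ are defined by essentially the same clauses, with $\vcorrect{}{}{}{}$ simply discarding those parts of $\vsemint{}{}{}$ that refer to the term interpretation. Concretely: when $\valOne$ is $\unitv$, $\lift{\termOne}$ or $\boxedCirc{\bundle\labOne}{\circuitOne}{\bundle\labTwo}$, membership in $\vsemint{}{}{}$ forces $\lcOne=\emptycontext$ and the corresponding $\vcorrect{}{}{}{}$ clause (at $\unitt$, $\bang{\indexOne}{\typeOne}$, $\mcirct{\indexOne}{\icontextThree}{\btypeOne}{\btypeTwo}$) holds unconditionally; when $\valOne$ is a label the side condition $\iLeqJudgment{\itesOne,\icsOne}{\icontextTwo}{\indexOne}{\indexTwo}$ is literally the same in both relations; when $\valOne$ is $\nil$ both sides reduce to $\iEqJudgment{}{}{\indexOne}{0}$ (with empty label context); when $\valOne$ is $\tuple{\valTwo}{\valThree}$ or $\rcons{\valTwo}{\valThree}$ we split the ambient label context and apply the induction hypothesis to $\valTwo$ and $\valThree$ (the latter at the suitably substituted element type); and when $\valOne$ is $\abs{\varOne}{\typeOne}{\termOne}$, the relevant $\vcorrect{}{}{}{}$ clause asks only for $\vcorrect{\itesOne,\icsOne}{\icontextTwo}{\lcOne}{\labels(\termOne)}{\btypeOne}$, which follows by the induction hypothesis (with $\labels(\termOne)$ strictly simpler than $\valOne$) from the conjunct $(\lcOne,\labels(\termOne))\in\vsemint{\itesOne,\icsOne}{\icontextTwo}{\btypeOne}$ of the arrow clause, the universally quantified ``function'' half of that clause playing no role. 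Instantiating this auxiliary lemma at $(\lcFour,\valOne)\in\vsemint{\itesOne,\icsOne}{\icontextTwo}{\typeOne}$ supplies the last conjunct and closes the proof.

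I do not expect a genuine obstacle here: by design the semantic interpretation was tailored precisely so that its unfolding matches the correctness statement, and the passage from $\vsemint{}{}{}$ to $\vcorrect{}{}{}{}$ is a clause-by-clause check. The only point deserving a moment's care is the sized-list case of the auxiliary lemma, where the element type gets substituted to $\typeOne\isub{\indexTwo}{\ivarOne}$ as one peels off an element; the induction should therefore be organized on the value (a proper subvalue in the $\rcons{\valTwo}{\valThree}$ case) rather than on the type. Everything else is definitional bookkeeping, which matches the informal claim made just before the lemma.
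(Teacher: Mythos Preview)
Your proposal is correct and follows essentially the same route as the paper: unfold the definition of semantic well-typedness (trivially, since both the index context $\icontextOne$ and the typing context are empty) to obtain $(\lcOne,\termOne)\in\tsemint{\itesOne,\icsOne}{\icontextTwo}{\typeOne}{\indexOne}$, instantiate the term interpretation at $\circuitOne$ to get the first three conjuncts together with $(\lcFour,\valOne)\in\vsemint{\itesOne,\icsOne}{\icontextTwo}{\typeOne}$, and then derive $\vcorrect{\itesOne,\icsOne}{\icontextTwo}{\lcFour}{\valOne}{\typeOne}$ by a routine induction. The paper phrases the last step as an induction on the proof of $(\lcFour,\valOne)\in\vsemint{\itesOne,\icsOne}{\icontextTwo}{\typeOne}$ rather than on the value, which sidesteps the need to argue that $\labels(\termOne)$ is ``strictly simpler'' than $\abs{\varOne}{\typeOne}{\termOne}$; either organization works, since at the arrow case the recursive call lands at a bundle type, where both relations are purely structural on wire bundles.
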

\begin{proof}
	By the definition of $\semCJudgment{\itesOne,\icsOne}{\emptyset}{\icontextTwo}{\emptycontext}{\lcOne}{\termOne}{\typeOne}{\indexOne}$ we get the first three clauses and $(\lcFour,\valOne)\in\vsemint{\itesOne,\icsOne}{\icontextTwo}{\typeOne}$. It remains to show that $(\lcFour,\valOne)\in\vsemint{\itesOne,\icsOne}{\icontextTwo}{\typeOne}$ implies $\vcorrect{\itesOne,\icsOne}{\icontextTwo}{\lcFour}{\valOne}{\typeOne}$, which is easy by induction on the proof of $(\lcFour,\valOne)\in\vsemint{\itesOne,\icsOne}{\icontextTwo}{\typeOne}$.
\end{proof}

\begin{theorem}[Total Correctness]
	Every locally coherent \PQRA\ instance is correct.
\end{theorem}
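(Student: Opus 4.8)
The plan is to read the theorem off the two lemmas proved above, after unwinding the relevant definitions. Let the \PQRA\ instance be given by an RMI $\itesOne$ and a CMI $\icsOne$, and suppose it is locally coherent; by definition this means exactly that $\itesOne$ is locally coherent with $\icsOne$. By the definition of a correct RMI, it then suffices to fix an arbitrary derivable closed judgment $\itesCJudgment{\itesOne,\icsOne}{\emptyset}{\icontextTwo}{\emptycontext}{\lcOne}{\termOne}{\typeOne}{\indexOne}$, with derivation $\Pi$, and show that it is correct in the sense of the definition of a correct judgment.

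First I would invoke Lemma~\ref{lem:syntactic-wt-to-semantic-wt}: its hypothesis (local coherence of $\itesOne$ with $\icsOne$) holds, so from $\Pi$ we obtain the semantic judgment $\semCJudgment{\itesOne,\icsOne}{\emptyset}{\icontextTwo}{\emptycontext}{\lcOne}{\termOne}{\typeOne}{\indexOne}$. Feeding this into Lemma~\ref{lem:semantic-wt-to-correctness} yields, for every $\circuitOne$ with $\icsJudgment{\icsOne}{\icontextTwo}{\lcTwo}{\circuitOne}{\lcOne,\lcThree}{\indexTwo}$, witnesses $\circuitTwo,\valOne,\lcFour,\indexThree$ such that $\config{\circuitOne}{\termOne}\eval\config{\concat{\circuitOne}{\circuitTwo}}{\valOne}$, $\icsJudgment{\icsOne}{\icontextTwo}{\lcOne}{\circuitTwo}{\lcFour}{\indexThree}$, $\iLeqJudgment{\itesOne,\icsOne}{}{\indexThree}{\indexOne}$ and $\vcorrect{\itesOne,\icsOne}{\icontextTwo}{\lcFour}{\valOne}{\typeOne}$ --- which is verbatim the definition of the judgment being correct. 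Since the judgment was arbitrary, $\itesOne$ is correct with respect to $\icsOne$, i.e.\ the instance is correct.

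All the substance has therefore been relegated to the two lemmas, and the main obstacle is Lemma~\ref{lem:syntactic-wt-to-semantic-wt}, the fundamental lemma of the logical relation. I would prove it by induction on the size of $\Pi$, with one case per rule of Figure~\ref{fig:ites-rules}. The clerical cases (\textsc{var}, \textsc{lab}, \textsc{nil}, \textsc{rcons}, \textsc{return}, \textsc{let}, and the two subsumption rules) just unfold the definitions of $\vsemint{\itesOne,\icsOne}{\icontextTwo}{\typeOne}$ and $\tsemint{\itesOne,\icsOne}{\icontextTwo}{\typeOne}{\indexOne}$ and do some arithmetic with the type size function $\resourcecontent{\cdot}$, with \textsc{csub} also using transitivity of the index ordering $\leq$. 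The hard cases are \textsc{circ}, \textsc{box}, \textsc{apply} and, worst of all, \textsc{fold}: there one must track how the locally-scoped variables of $\icontextThree$ are instantiated along a substitution $\isubThree\in\isemint{\icontextTwo}{\icontextThree}$ so that local-metric annotations are propagated soundly, and one must bound the CMI-level size of the circuit actually emitted by the RMI-level effect annotation in the conclusion. This last step is where the hypotheses earn their keep: it needs both well-behavedness of $\itesOne$ (the ordered-monoid laws for $\iseq{}{}$ and $\ipar{}{}$, commutativity of $\ipar{}{}$, the distributivity clause, and the wire-reuse clause) and the three clauses of local coherence relating $\icsId{\icsOne}{\emptyset}$ and $\icsAppend{\icsOne}{\gateOne}$ to $\itesMempty{\itesOne}$, $\itesWires{\itesOne}{\wtypeOne}$, $\itesSeq{\itesOne}$ and $\itesPar{\itesOne}$. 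The \textsc{fold} case is the most painful because its effect index nests bounded $\iseqBounded{\ivarOne}{\indexThree}{\indexTwo}$ and $\iparBounded{\ivarOne}{\indexThree}{\indexTwo}$ sums that have to be reassociated and re-bounded iteration by iteration using exactly these algebraic laws. Lemma~\ref{lem:semantic-wt-to-correctness}, by contrast, is easy: three of its four conjuncts come straight out of the definition of $\tsemint{\itesOne,\icsOne}{\icontextTwo}{\typeOne}{\indexOne}$, and the passage $(\lcFour,\valOne)\in\vsemint{\itesOne,\icsOne}{\icontextTwo}{\typeOne}\implies\vcorrect{\itesOne,\icsOne}{\icontextTwo}{\lcFour}{\valOne}{\typeOne}$ is a short induction on the membership proof, since $\vsemint{\itesOne,\icsOne}{\icontextTwo}{\typeOne}$ and $\vcorrect{\itesOne,\icsOne}{\icontextTwo}{\lcOne}{\valOne}{\typeOne}$ are defined by mirror-image clauses.
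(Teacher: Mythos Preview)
Your proposal is correct and follows essentially the same route as the paper's own proof: instantiate Lemma~\ref{lem:syntactic-wt-to-semantic-wt} on the given closed judgment to get semantic well-typedness, then apply Lemma~\ref{lem:semantic-wt-to-correctness} to conclude correctness. Your additional commentary on how the two lemmas themselves are established (induction on the derivation for the first, a short structural induction for the second, with the local-coherence and well-behavedness clauses doing the real work in the \textsc{apply}, \textsc{box}, \textsc{circ} and \textsc{fold} cases) is accurate and matches what the paper sketches.
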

\begin{proof}
	Let $\itesOne$ be an RMI that is locally coherent with a CMI $\icsOne$ and let $\itesCJudgment{\itesOne,\icsOne}{\emptyset}{\icontextTwo}{\emptycontext}{\lcOne}{\termOne}{\typeOne}{\indexOne}$ be a derivable judgment in \PQRA. By Lemma \ref{lem:syntactic-wt-to-semantic-wt}, since $\emptysub\in\isemint{\emptyset}{\emptyset}$ and $(\emptycontext,\emptysub) \in \csemint{\itesOne,\icsOne}{\icontextTwo}{\emptycontext} = \csemint{\itesOne,\icsOne}{\icontextTwo}{\emptysub(\emptycontext)}$, we get $\semCJudgment{\itesOne,\icsOne}{\emptyset}{\icontextTwo}{\emptycontext}{\lcOne}{\termOne}{\typeOne}{\indexOne}$. In turn, by Lemma \ref{lem:semantic-wt-to-correctness}, $\semCJudgment{\itesOne,\icsOne}{\emptyset}{\icontextTwo}{\emptycontext}{\lcOne}{\termOne}{\typeOne}{\indexOne}$ entails that $\itesCJudgment{\itesOne,\icsOne}{\emptyset}{\icontextTwo}{\emptycontext}{\lcOne}{\termOne}{\typeOne}{\indexOne}$ is correct. Since we imposed no constraints on $\itesCJudgment{\itesOne,\icsOne}{\emptyset}{\icontextTwo}{\emptycontext}{\lcOne}{\termOne}{\typeOne}{\indexOne}$, we get that every judgment of that form derivable in the given \PQRA\ instance is correct, and thus the instance is itself correct.
\end{proof}
\section{Implementing \PQRA\ within \qura}
\label{sec:implementation}

\qura\ is a tool for the analysis of the size of the circuits produced by quantum circuit description programs. \qura\ is implemented in \Haskell\ and is originally based on the work of Colledan and Dal Lago \cite{proto-quipper-r}. As such, it is originally limited to the estimation of only the width of the quantum circuits produced by the programs. More specifically, \qura\ takes as input a program written in a concrete dialect of the \PQ\ language and returns the type of the program, alongside the width of the circuit it builds. \qura\ relies on a type inference algorithm and on SMT-solving \cite{smt-solvers} to infer this information automatically, with minimal program annotations.

Our contribution involves a substantial generalization and extension of \qura's global resource estimation framework, based on the theoretical work described in the previous sections. More in detail, our contribution is twofold: first, we use \PQRA's type-and-effect system as a basis for the generalization of \qura's inference algorithm: at each inference step, instead of synthesizing an index describing \emph{specifically} the width of the constructed circuit, we synthesize an index based on abstract resource composition operations. Whenever we need to perform checks on indices, these operations are translated into different concrete arithmetic operators depending on the definition of the resource under analysis. This definition is independent from the main inference logic, being implemented in an external module as a record of interpretation functions that closely mimics an RMI.

Next, starting from this abstract framework, we reimplemented width estimation as a resource definition module. In addition to width, we also implemented a variety of other resource definition modules, thus extending \qura\ with the ability of analysing global metrics such as gate count, T-count, number of qubits and number of bits.
As a testament to the degree of flexibility guaranteed by \PQRA, the implementation of the aforementioned resource definition modules took less than half an hour to complete overall. The implementation of each of these analyses is a simple \Haskell\ module of less than 40 LOC, including the assignment of ground truth size to elementary gates and operations. The source code of our contribution is available in the \qura\ repository.\footnote{\url{https://github.com/andreacolledan/qura/tree/multimetric-analysis}}

Using our extension of \qura, we were able to analyse a number of quantum algorithms for their global resource consumption, including Grover's \cite{grover} and the Quantum Fourier Transform (QFT) algorithm \cite{qft}, which we use as an example in the next section.

The extension of \qura\ with the ability to analyze \emph{local} metrics is the subject of ongoing efforts. However, because local resource analysis requires more fundamental changes to happen both at the level of \qura's input language and at the level of type inference, progress in this sense is still in the early stages.

\subsection{Global Metrics and the Quantum Fourier Transform, Automatically}
\label{subsec:global-examples}

We take advantage of \qura's implementation of \PQRA\ to illustrate what kind of analysis can be carried out through our generic approach to resource estimation. Consider the program in Listing \ref{lst:qft-program}. This program implements the QFT algorithm, a fundamental subroutine in several quantum algorithms. The programming language used is the input language to \qura, and it is not too dissimilar from the language described in Section \ref{sec:pqra}, with a few notable differences: the syntax is \Haskell-like, indices are explicitly scoped (\texttt{@n.} binds the index \texttt{n} in its syntactical scope) and instantiated (\texttt{(@n.e) @ I} instantiates \texttt{n} with \texttt{I} in \texttt{e}), and index annotations are enclosed within square brackets (e.g. \texttt{List[j<n] Qubit} corresponds to $\blistt{\ivarTwo}{n}{\qubitt}$). Note that no local resource analysis is happening, so wire types are unannotated.
We assume that \texttt{qrev} is a function that reverses a list of qubits, while \texttt{cR @i} applies a controlled rotation (whose magnitude depends on \texttt{i}) to a pair of qubits. The full code of the QFT is available in the repository under \texttt{examples}.

\begin{lstlisting}[language=Haskell, frame=single, style=myStyle, float, label=lst:qft-program,
  caption={An implementation of the QFT algorithm}]
-- Apply the k-th rotation of the m-th iteration
let rotate = @m. lift @k.\((ctrls, trg), ctrl)::((List[j<k] Qubit, Qubit), Qubit).
    let rotation = force cR @ (m+1-k) in 
    let (ctrl, trg) = rotation ctrl trg in
    (ctrls:ctrl, trg)
in
let qft = @n.\reg :: List[j<n] Qubit. -- Apply the QFT to a register of n qubits
  let qftStep = lift @m.\(ctrls, trg)::(List[j<m] Qubit, Qubit). -- m-th iteration
      let revctrls = (force qrev @m) ctrls in
      let (ctrls, trg) = fold(rotate @m, ([], trg), revctrls) in
      let trg = force hadamard trg in
      ctrls:trg
  in fold(qftStep, [], reg)
in qft -- Analyze the function
\end{lstlisting}

When the program is fed to \qura\, we can decide to just type-check it, or to also estimate the global metrics associated with the circuit it builds. For example, if we are interested in the width of the circuits produced by \texttt{qft}, we run

\begin{lstlisting}[language=sh, style=myStyleEmbeddedPlain]
$ qura examples/qft.pqr -g width
Inferred type: n ->[0,0] List[j<n] Qubit -o[n,0] List[j<n] Qubit
\end{lstlisting}

This tells us that the \texttt{qft} function takes as input a list of $n$ qubits and builds a circuit of width at most $n$ which outputs $n$ qubits. On the other hand, if we are interested in the gate count of the circuits produced by \texttt{qft}, we run

\begin{lstlisting}[language=sh, style=myStyleEmbeddedPlain]
$ qura examples/qft.pqr -g gatecount
Inferred type: n ->[0,0] List[j<n] Qubit -o[sum[m<n] m+1, 0] List[j<n] Qubit
\end{lstlisting}

This tells us that on an input of size $n$, the \texttt{qft} function builds a circuit comprising of at most $\sum_{m=0}^{n-1} m+1$ gates. These bounds are not only sound, but also exact, as can be seen in the case of the QFT circuit of input size $3$ shown in Figure \ref{fig:qft-circuit}. Note also that different metrics can be analyzed \emph{without changing the program}, as \qura\ requires few annotations. This is not always the case: higher-order functions, for example, require the programmer to slightly tweak the arrow annotations depending on the underlying resource being analyzed.
\section{Local Metrics and the Quantum Fourier Transform}
\label{sec:examples}
We conclude the exposition of our work by providing an example local resource estimation. Specifically, we use \PQRA\ to perform a pen-and-paper analysis of the \emph{depth} of the QFT program from Section \ref{subsec:global-examples}. Although \qura\ does not yet support the analysis of local resources, in this section we borrow its syntax for ease of exposition and extend it with wire annotations. Let \texttt{Qubit\{i\}} correspond to $\qubitt^\ivarOne$, i.e. the type of a qubit wire at depth $\ivarOne$. The program in Listing \ref{lst:qft-program-local} is then a version of the QFT program in which wire types are decorated, as to allow for the analysis of the depth local metric.

\begin{lstlisting}[language=Haskell, frame=single, style=myStyle, float, label=lst:qft-program-local,
  caption={An implementation of the QFT algorithm supporting local resource analysis}]
let rotate = @i. @m. lift @k.\((ctrls, trg), ctrl)::((List[j<k] Qubit{i+m+j+1}, Qubit{i+m+k}), Qubit{i+m+k}).
    let rotation = force cR @ m+1-k @ i+m+k in 
    let (ctrl, trg) = rotation ctrl trg in -- (Qubit{i+m+k+1}, Qubit{i+m+k+1})
    (ctrls:ctrl, trg) -- (List[j<k+1] Qubit{i+m+j+1}, Qubit{i+m+k+1})
in
let qft = @n. @i.\reg :: List[j<n] Qubit{i}.
  let qftStep = lift @m.\(ctrls, trg)::(List[j<m] Qubit{i+m+j}, Qubit{i}).
      let revctrls = (force qrev @m) ctrls in -- List[j<m] Qubit{i+2m-(j+1)}
      let (ctrls, trg) = fold(rotate @m, ([], trg), revctrls) in
      -- (ctrls, trg) :: (List[j<m] Qubit{i+m+j+1}, Qubit{i+2m})
      let trg = force hadamard trg in -- Qubit{i+2m+1}
      ctrls:trg -- List[i<m+1] Qubit{i+m+1+j}
  in fold(qftStep, [], reg) -- List[j<n] Qubit{i+n+j}
in qft :: n -> i -> List[j<n] Qubit{i} -o List[j<n] Qubit{i+n+j}
\end{lstlisting}

To ease the exposition, we add comments to keep track of the type of key intermediate expressions. We also ignore global resource annotations, and consistently use the same index variable names to denote the same quantities: $n$ is the input size, $i$ is the depth of the input qubits, $m$ is the current QFT iteration, $k$ indexes rotations and $j$ is the position of a qubit within a list.

Despite appearances, most inference steps are straightforward, with the exception of the typing of the two folds. The first fold (line 9) is well typed on account of the fact that $\qubitt^{i+2m-(j+1)}\isub{m-(k+1)}{j}=\qubitt^{i+m+k}$, that $\itesSubJudgment{\mathit{Depth}}{\{i,m\}}{\emptyset}{\qubitt^i}{\qubitt^{i+m} = \qubitt^{i+m+k}\isub{0}{k}}$, and that $k$ increases by $1$ each time \texttt{rotate} is applied. On the other hand, the second and main fold (line 13) is well-typed on the account that $\qubitt^i\isub{n-(m+1)}{j}=\qubitt^i$ and that $m$ increases by one after each application of \texttt{qftStep}.
The result of the analysis is that the \texttt{qft} function has type $\arrowt{\blistt{j}{n}{\qubitt^i}}{\blistt{j}{n}{\qubitt^{i+n+j}}}{}{\unitt}$ for all $i,n\in\natSet$. This means that, given as input a list of $n$ qubits at depth $i$, the QFT circuit outputs a list of $n$ qubits, where the $j$-th qubit is at depth $i+n+j$. Once again, this upper bound is exact, as can be seen in the case of the QFT circuit of Figure \ref{fig:qft-circuit}.

\begin{figure}
	\centering
	\fbox{
	\begin{quantikz}[column sep=2.5em, row sep=2pt]
		\\
		\lstick{$q_0:\qubitt^i$} & &&& \gate{R_3} \wire[r][2]["i+3"{above,pos=0.2}]{a} & \gate{R_2} \wire[r][2]["i+4"{above,pos=0.2}]{a} & \gate{H} & \rstick{$q_2:\qubitt^{i+5}$}\\
		\lstick{$q_1:\qubitt^i$} && \gate{R_2} \wire[r][2]["i+2"{above,pos=0.2}]{a} & \gate{H} \wire[r][2]["i+3"{above,pos=0.2}]{a} && \ctrl{-1} \wire[r][2]["i+4"{above,pos=0.2}]{a} && \rstick{$q_1:\qubitt^{i+4}$}\\
		\lstick{$q_2:\qubitt^i$} & \gate{H}  \wire[r][2]["i+1"{above,pos=0.2}]{a} & \ctrl{-1} \wire[r][2]["i+2"{above,pos=0.2}]{a} && \ctrl{-2} \wire[r][2]["i+3"{above,pos=0.2}]{a} &&& \rstick{$q_0:\qubitt^{i+3}$}\\
	\end{quantikz}
	}
	\caption{The QFT circuit on a register of $3$ qubits. Each wire segment is annotated with its depth.}
	\label{fig:qft-circuit}
\end{figure}
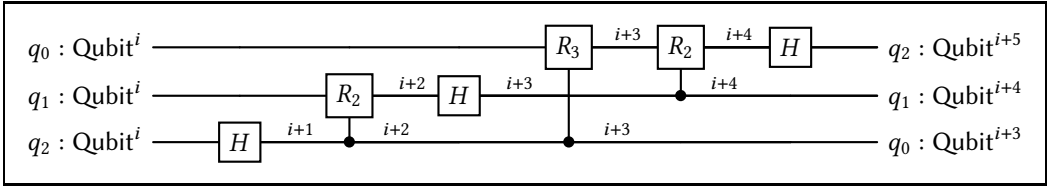
\section{Related Work}
\label{sec:related-work}

\PQRA\ is by no means the first paradigmatic language derived from \Quipper. Some members of the \PQ\ family model the basic features of the language \cite{proto-quipper-s,proto-quipper-m}, while others model its extension with dynamic lifting \cite{proto-quipper-l,proto-quipper-dyn,proto-quipper-k} or dependent types \cite{proto-quipper-d}. The family member that has influenced this work the most is certainly Colledan and Dal Lago's \PQR\ \cite{proto-quipper-r}. As already mentioned several times, however, the work in question deals only with circuit width, a metric which, although very important, is certainly not the only measurement of size.

Resource analysis for quantum programs has been advocated as crucial by the security community, where quantum attacks are considered a threat and it is therefore crucial to understand, for instance, how many qubits are needed to implement certain attacks (see, e.g., \cite{Grassl+2015,Amy+2017}). This type of analysis, however, is usually performed on an individual circuit and not parametrically on the size of the input, which is what our work instead achieves.

The complexity analysis of quantum programs has been the subject of recent investigation, with tools coming for example from the so-called implicit computational complexity \cite{algebra-of-functions,quantum-implicit-complexity}, but also through tools such as weakest precondition calculi \cite{quantum-weakest,quantum-transformer}.
Overall, however, the literature is much more sparse in the quantum case than it is in the classical, deterministic, but even probabilistic case, where existing contributions range from type systems \cite{probabilistic-session-types} to amortized analysis \cite{Hoffmann}, from weakest precondition calculi \cite{runtime-analysis-probabilistic} to abstract interpretation \cite{costa}.
On the other hand, the verification of \emph{functional properties} of quantum programs has received more attention by the research community \cite{quantum-verification-survey}. Various kinds of verification problems have been considered, including termination \cite{DBLP:journals/pacmpl/LiY18,DBLP:journals/acta/LiYY14}, robustness \cite{DBLP:conf/cav/GuanFY21,DBLP:journals/pacmpl/HungHZYHW19}, equivalence checking~\cite{DBLP:journals/jcss/WangLY21}, and compiler optimizations \cite{DBLP:journals/pacmpl/Hietala0HW021} using tools like Hoare logics \cite{hoare-optimization}, interactive theorem proving \cite{DBLP:conf/itp/Hietala0HL021}, and type systems~\cite{quantum-verification-types}. Most of these works are concerned with imperative programs.

On the side of classical programs, it is worth mentioning that refinement types have been succesfully used to perform resource analysis in the context of \Haskell \cite{liquidate-your-assets}.

\section{Conclusion and Future Work}
\label{sec:conclusion}

In this work we presented for the first time a type system capable of deriving upper bounds on the size of the quantum circuits produced by programs in the \Quipper\ language. Crucially, the upper bounds we get are not constant, but rather parametric on classical circuit-building parameters, such as the size of the input. Furthermore, the metric used to measure the size of circuits is not fixed, but can in fact be any metric whose semantic interpretation satisfies some reasonable conditions.

We also used our theoretical framework to massively generalize and extend the capabilities of the \qura\ tool, demonstrating how several parametric global metric analyses can be defined easily and carried out in a fundamentally automatic fashion.

On the side of \qura, benchmarks of the introduced technique, which so far are limited to a modest set of programs and do not consider local circuit metrics, are something we plan to work on in the near future. Specifically, the implementation of local metric analysis into the tool is the object of ongoing efforts, and we believe that results in this sense are not far away.

On the metatheory side, there is certainly an open question regarding whether sufficient conditions exist that can ensure the derivability of a suitable interpretation for a certain metric. In other words, it would be interesting to know that certain circuit metrics can be analyzed in our type system without having to explicitly construct a corresponding CMI or RMI. 
Lastly, a denotational account of the semantics of \PQRA\ is also a topic of future work.

\begin{acks}
The research leading to these results has received funding from the European Union - NextGenerationEU through the Italian Ministry of University and Research under PNRR - M4C2 - I1.4 Project CN00000013 “National Centre for HPC, Big Data and Quantum Computing”
\end{acks}

\bibliographystyle{ACM-Reference-Format}
\bibliography{bibliography}


\begin{thebibliography}{59}


\ifx \showCODEN    \undefined \def \showCODEN     #1{\unskip}     \fi
\ifx \showDOI      \undefined \def \showDOI       #1{#1}\fi
\ifx \showISBNx    \undefined \def \showISBNx     #1{\unskip}     \fi
\ifx \showISBNxiii \undefined \def \showISBNxiii  #1{\unskip}     \fi
\ifx \showISSN     \undefined \def \showISSN      #1{\unskip}     \fi
\ifx \showLCCN     \undefined \def \showLCCN      #1{\unskip}     \fi
\ifx \shownote     \undefined \def \shownote      #1{#1}          \fi
\ifx \showarticletitle \undefined \def \showarticletitle #1{#1}   \fi
\ifx \showURL      \undefined \def \showURL       {\relax}        \fi
\providecommand\bibfield[2]{#2}
\providecommand\bibinfo[2]{#2}
\providecommand\natexlab[1]{#1}
\providecommand\showeprint[2][]{arXiv:#2}

\bibitem[Albert et~al\mbox{.}(2008)]%
        {costa}
\bibfield{author}{\bibinfo{person}{Elvira Albert}, \bibinfo{person}{Puri Arenas}, \bibinfo{person}{Samir Genaim}, \bibinfo{person}{German Puebla}, {and} \bibinfo{person}{Damiano Zanardini}.} \bibinfo{year}{2008}\natexlab{}.
\newblock \showarticletitle{COSTA: Design and Implementation of a Cost and Termination Analyzer for Java Bytecode}. In \bibinfo{booktitle}{\emph{Proc. of FMCO 2007}}.
\newblock
\showISBNx{978-3-540-92188-2}


\bibitem[Amy et~al\mbox{.}(2017)]%
        {Amy+2017}
\bibfield{author}{\bibinfo{person}{Matthew Amy}, \bibinfo{person}{Olivia Di~Matteo}, \bibinfo{person}{Vlad Gheorghiu}, \bibinfo{person}{Michele Mosca}, \bibinfo{person}{Alex Parent}, {and} \bibinfo{person}{John Schanck}.} \bibinfo{year}{2017}\natexlab{}.
\newblock \showarticletitle{Estimating the Cost of Generic Quantum Pre-image Attacks on SHA-2 and SHA-3}. In \bibinfo{booktitle}{\emph{Proc. of SAC 2016}}.
\newblock


\bibitem[Avanzini et~al\mbox{.}(2022)]%
        {quantum-transformer}
\bibfield{author}{\bibinfo{person}{Martin Avanzini}, \bibinfo{person}{Georg Moser}, \bibinfo{person}{Romain Pechoux}, \bibinfo{person}{Simon Perdrix}, {and} \bibinfo{person}{Vladimir Zamdzhiev}.} \bibinfo{year}{2022}\natexlab{}.
\newblock \showarticletitle{Quantum Expectation Transformers for Cost Analysis}. In \bibinfo{booktitle}{\emph{Proc. of LICS 2022}}.
\newblock


\bibitem[Barthe et~al\mbox{.}(2020)]%
        {open-logical-relations}
\bibfield{author}{\bibinfo{person}{Gilles Barthe}, \bibinfo{person}{Rapha{\"e}lle Crubill{\'e}}, \bibinfo{person}{Ugo~Dal Lago}, {and} \bibinfo{person}{Francesco Gavazzo}.} \bibinfo{year}{2020}\natexlab{}.
\newblock \showarticletitle{On the Versatility of Open Logical Relations}. In \bibinfo{booktitle}{\emph{Proc. of ESOP 2020}}.
\newblock
\showISBNx{978-3-030-44914-8}


\bibitem[Biere et~al\mbox{.}(2021)]%
        {smt-solvers}
\bibfield{author}{\bibinfo{person}{Armin Biere}, \bibinfo{person}{Marijn Heule}, \bibinfo{person}{Hans van Maaren}, {and} \bibinfo{person}{Toby Walsh}.} \bibinfo{year}{2021}\natexlab{}.
\newblock \bibinfo{booktitle}{\emph{Handbook of Satisfiability - Second Edition}}.
\newblock \bibinfo{publisher}{{IOS} Press}.
\newblock
\showISBNx{978-1-64368-160-3}


\bibitem[Colledan and Dal~Lago(2023)]%
        {proto-quipper-k}
\bibfield{author}{\bibinfo{person}{Andrea Colledan} {and} \bibinfo{person}{Ugo Dal~Lago}.} \bibinfo{year}{2023}\natexlab{}.
\newblock \showarticletitle{{On Dynamic Lifting and Effect Typing in Circuit Description Languages}}. In \bibinfo{booktitle}{\emph{Proc. of TYPES 2022}}.
\newblock
\showISBNx{978-3-95977-285-3}
\showISSN{1868-8969}


\bibitem[Colledan and Dal~Lago(2024)]%
        {proto-quipper-r}
\bibfield{author}{\bibinfo{person}{Andrea Colledan} {and} \bibinfo{person}{Ugo Dal~Lago}.} \bibinfo{year}{2024}\natexlab{}.
\newblock \showarticletitle{Circuit Width Estimation via Effect Typing and Linear Dependency}. In \bibinfo{booktitle}{\emph{Proc. of ESOP 2024}}.
\newblock


\bibitem[Collins and Nay(2022)]%
        {osprey}
\bibfield{author}{\bibinfo{person}{Hugh Collins} {and} \bibinfo{person}{Chris Nay}.} \bibinfo{year}{2022}\natexlab{}.
\newblock \bibinfo{title}{IBM Unveils 400 Qubit-Plus Quantum Processor and Next-Generation IBM Quantum System Two}.
\newblock
\newblock
\urldef\tempurl%
\url{https://is.gd/WPV7lO}
\showURL{%
\tempurl}
\newblock
\shownote{Retrieved on Jul. 11, 2024}.


\bibitem[Conover(2020)]%
        {jiuzhang}
\bibfield{author}{\bibinfo{person}{Emily Conover}.} \bibinfo{year}{2020}\natexlab{}.
\newblock \bibinfo{title}{Light-based Quantum Computer Jiuzhang achieves quantum supremacy}.
\newblock
\newblock
\urldef\tempurl%
\url{https://is.gd/zIgFzK}
\showURL{%
\tempurl}
\newblock
\shownote{retrieved on Jul. 11, 2024}.


\bibitem[Coppersmith(2002)]%
        {qft}
\bibfield{author}{\bibinfo{person}{D. Coppersmith}.} \bibinfo{year}{2002}\natexlab{}.
\newblock \bibinfo{title}{An approximate Fourier transform useful in quantum factoring, IBM Research Report RC19642}.
\newblock
\newblock
\showeprint[arxiv]{quant-ph/0201067}


\bibitem[Cross et~al\mbox{.}(2022)]%
        {qasm}
\bibfield{author}{\bibinfo{person}{Andrew Cross}, \bibinfo{person}{Ali Javadi-Abhari}, \bibinfo{person}{Thomas Alexander}, \bibinfo{person}{Niel De~Beaudrap}, \bibinfo{person}{Lev~S. Bishop}, \bibinfo{person}{Steven Heidel}, \bibinfo{person}{Colm~A. Ryan}, \bibinfo{person}{Prasahnt Sivarajah}, \bibinfo{person}{John Smolin}, \bibinfo{person}{Jay~M. Gambetta}, {and} \bibinfo{person}{Blake~R. Johnson}.} \bibinfo{year}{2022}\natexlab{}.
\newblock \showarticletitle{OpenQASM 3: A Broader and Deeper Quantum Assembly Language}.
\newblock \bibinfo{journal}{\emph{TQC}} \bibinfo{volume}{3}, \bibinfo{number}{3} (\bibinfo{year}{2022}).
\newblock


\bibitem[Dal~Lago and Gaboardi(2011)]%
        {linear-dependent-types-relative-completeness}
\bibfield{author}{\bibinfo{person}{Ugo Dal~Lago} {and} \bibinfo{person}{Marco Gaboardi}.} \bibinfo{year}{2011}\natexlab{}.
\newblock \showarticletitle{Linear Dependent Types and Relative Completeness}. In \bibinfo{booktitle}{\emph{Proc. of LICS 2011}}.
\newblock
\showISBNx{9780769544120}


\bibitem[{Dal Lago} et~al\mbox{.}(2010)]%
        {quantum-implicit-complexity}
\bibfield{author}{\bibinfo{person}{Ugo {Dal Lago}}, \bibinfo{person}{Andrea Masini}, {and} \bibinfo{person}{Margherita Zorzi}.} \bibinfo{year}{2010}\natexlab{}.
\newblock \showarticletitle{Quantum implicit computational complexity}.
\newblock \bibinfo{journal}{\emph{TCS}} \bibinfo{volume}{411}, \bibinfo{number}{2} (\bibinfo{year}{2010}).
\newblock
\showISSN{0304-3975}


\bibitem[Dal~lago and Petit(2012)]%
        {linear-dependent-types-cbv}
\bibfield{author}{\bibinfo{person}{Ugo Dal~lago} {and} \bibinfo{person}{Barbara Petit}.} \bibinfo{year}{2012}\natexlab{}.
\newblock \showarticletitle{Linear Dependent Types in a Call-by-Value Scenario}. In \bibinfo{booktitle}{\emph{Proc. of PPDP 2012}}.
\newblock
\showISBNx{9781450315227}


\bibitem[Dal~lago and Petit(2013)]%
        {geometry-of-types}
\bibfield{author}{\bibinfo{person}{Ugo Dal~lago} {and} \bibinfo{person}{Barbara Petit}.} \bibinfo{year}{2013}\natexlab{}.
\newblock \showarticletitle{The Geometry of Types}. In \bibinfo{booktitle}{\emph{Proc. of POPL 2013}}.
\newblock
\showISBNx{9781450318327}


\bibitem[Das et~al\mbox{.}(2023)]%
        {probabilistic-session-types}
\bibfield{author}{\bibinfo{person}{Ankush Das}, \bibinfo{person}{Di Wang}, {and} \bibinfo{person}{Jan Hoffmann}.} \bibinfo{year}{2023}\natexlab{}.
\newblock \showarticletitle{Probabilistic Resource-Aware Session Types}. In \bibinfo{booktitle}{\emph{Proc. of POPL 2023}}.
\newblock


\bibitem[Developers(2024)]%
        {cirq}
\bibfield{author}{\bibinfo{person}{Cirq Developers}.} \bibinfo{year}{2024}\natexlab{}.
\newblock \bibinfo{booktitle}{\emph{Cirq}}.
\newblock
\urldef\tempurl%
\url{https://doi.org/10.5281/zenodo.11398048}
\showDOI{\tempurl}


\bibitem[Freeman and Pfenning(1991)]%
        {refinement-ml}
\bibfield{author}{\bibinfo{person}{Tim Freeman} {and} \bibinfo{person}{Frank Pfenning}.} \bibinfo{year}{1991}\natexlab{}.
\newblock \showarticletitle{Refinement Types for ML}. In \bibinfo{booktitle}{\emph{Proc. of PLDI 1991}}.
\newblock


\bibitem[Fu et~al\mbox{.}(2023)]%
        {proto-quipper-dyn}
\bibfield{author}{\bibinfo{person}{Peng Fu}, \bibinfo{person}{Kohei Kishida}, \bibinfo{person}{Neil~J. Ross}, {and} \bibinfo{person}{Peter Selinger}.} \bibinfo{year}{2023}\natexlab{}.
\newblock \showarticletitle{Proto-Quipper with Dynamic Lifting}. In \bibinfo{booktitle}{\emph{Proc. of POPL 2023}}.
\newblock


\bibitem[Fu et~al\mbox{.}(2020)]%
        {proto-quipper-d}
\bibfield{author}{\bibinfo{person}{Peng Fu}, \bibinfo{person}{Kohei Kishida}, {and} \bibinfo{person}{Peter Selinger}.} \bibinfo{year}{2020}\natexlab{}.
\newblock \showarticletitle{Linear Dependent Type Theory for Quantum Programming Languages: Extended Abstract}. In \bibinfo{booktitle}{\emph{Proc. of LICS 2020}}.
\newblock


\bibitem[Gaboardi et~al\mbox{.}(2013)]%
        {linear-dependent-types-privacy}
\bibfield{author}{\bibinfo{person}{Marco Gaboardi}, \bibinfo{person}{Andreas Haeberlen}, \bibinfo{person}{Justin Hsu}, \bibinfo{person}{Arjun Narayan}, {and} \bibinfo{person}{Benjamin~C. Pierce}.} \bibinfo{year}{2013}\natexlab{}.
\newblock \showarticletitle{Linear Dependent Types for Differential Privacy}. In \bibinfo{booktitle}{\emph{Proc. of POPL 2013}}.
\newblock
\showISBNx{9781450318327}


\bibitem[Gay(2006)]%
        {survey-gay}
\bibfield{author}{\bibinfo{person}{Simon~J. Gay}.} \bibinfo{year}{2006}\natexlab{}.
\newblock \showarticletitle{Quantum Programming Languages: Survey and Bibliography}.
\newblock \bibinfo{journal}{\emph{Math. Struct. Comput. Sci.}} \bibinfo{volume}{16}, \bibinfo{number}{4} (\bibinfo{year}{2006}).
\newblock
\showISSN{0960-1295}


\bibitem[Gill et~al\mbox{.}(2024)]%
        {quantum-challenges}
\bibfield{author}{\bibinfo{person}{Sukhpal~Singh Gill}, \bibinfo{person}{Oktay Cetinkaya}, \bibinfo{person}{Stefano Marrone}, \bibinfo{person}{Daniel Claudino}, \bibinfo{person}{David Haunschild}, \bibinfo{person}{Leon Schlote}, \bibinfo{person}{Huaming Wu}, \bibinfo{person}{Carlo Ottaviani}, \bibinfo{person}{Xiaoyuan Liu}, \bibinfo{person}{Sree~Pragna Machupalli}, \bibinfo{person}{Kamalpreet Kaur}, \bibinfo{person}{Priyansh Arora}, \bibinfo{person}{Ji Liu}, \bibinfo{person}{Ahmed Farouk}, \bibinfo{person}{Houbing~Herbert Song}, \bibinfo{person}{Steve Uhlig}, {and} \bibinfo{person}{Kotagiri Ramamohanarao}.} \bibinfo{year}{2024}\natexlab{}.
\newblock \bibinfo{title}{Quantum Computing: Vision and Challenges}.
\newblock
\newblock
\showeprint{2403.02240}


\bibitem[Grassl et~al\mbox{.}(2016)]%
        {Grassl+2015}
\bibfield{author}{\bibinfo{person}{Markus Grassl}, \bibinfo{person}{Brandon Langenberg}, \bibinfo{person}{Martin Roetteler}, {and} \bibinfo{person}{Rainer Steinwandt}.} \bibinfo{year}{2016}\natexlab{}.
\newblock \showarticletitle{Applying Grover's Algorithm to AES: Quantum Resource Estimates}. In \bibinfo{booktitle}{\emph{Proc. of PQCrypto 2016}}.
\newblock


\bibitem[Green et~al\mbox{.}(2013a)]%
        {quipper}
\bibfield{author}{\bibinfo{person}{Alexander~S. Green}, \bibinfo{person}{Peter~LeFanu Lumsdaine}, \bibinfo{person}{Neil~J. Ross}, \bibinfo{person}{Peter Selinger}, {and} \bibinfo{person}{Benoît Valiron}.} \bibinfo{year}{2013}\natexlab{a}.
\newblock \showarticletitle{Quipper}. In \bibinfo{booktitle}{\emph{Proc. of PLDI}}.
\newblock
\showISSN{1558-1160}


\bibitem[Green et~al\mbox{.}(2013b)]%
        {quipper-scalable}
\bibfield{author}{\bibinfo{person}{Alexander~S. Green}, \bibinfo{person}{Peter~LeFanu Lumsdaine}, \bibinfo{person}{Neil~J. Ross}, \bibinfo{person}{Peter Selinger}, {and} \bibinfo{person}{Beno\^{\i}t Valiron}.} \bibinfo{year}{2013}\natexlab{b}.
\newblock \showarticletitle{Quipper: a scalable quantum programming language}. In \bibinfo{booktitle}{\emph{Proc. of PLDI 2013}}.
\newblock
\showISBNx{9781450320146}


\bibitem[Grover(1996)]%
        {grover}
\bibfield{author}{\bibinfo{person}{Lov~K. Grover}.} \bibinfo{year}{1996}\natexlab{}.
\newblock \bibinfo{title}{A fast quantum mechanical algorithm for database search}.
\newblock
\newblock
\showeprint{arXiv:quant-ph/9605043}


\bibitem[Guan et~al\mbox{.}(2021)]%
        {DBLP:conf/cav/GuanFY21}
\bibfield{author}{\bibinfo{person}{Ji Guan}, \bibinfo{person}{Wang Fang}, {and} \bibinfo{person}{Mingsheng Ying}.} \bibinfo{year}{2021}\natexlab{}.
\newblock \showarticletitle{Robustness Verification of Quantum Classifiers}. In \bibinfo{booktitle}{\emph{Proc. of CAV 2021}}.
\newblock


\bibitem[Handley et~al\mbox{.}(2019)]%
        {liquidate-your-assets}
\bibfield{author}{\bibinfo{person}{Martin Handley}, \bibinfo{person}{Niki Vazou}, {and} \bibinfo{person}{Graham Hutton}.} \bibinfo{year}{2019}\natexlab{}.
\newblock \showarticletitle{Liquidate your assets: reasoning about resource usage in liquid Haskell}.
\newblock \bibinfo{journal}{\emph{PACMPL}}  \bibinfo{volume}{4} (\bibinfo{year}{2019}).
\newblock


\bibitem[H\"{a}ner et~al\mbox{.}(2020)]%
        {hoare-optimization}
\bibfield{author}{\bibinfo{person}{Thomas H\"{a}ner}, \bibinfo{person}{Torsten Hoefler}, {and} \bibinfo{person}{Matthias Troyer}.} \bibinfo{year}{2020}\natexlab{}.
\newblock \showarticletitle{Assertion-based optimization of Quantum programs}. In \bibinfo{booktitle}{\emph{Proc. of OOPSLA 2020}}.
\newblock


\bibitem[Hietala et~al\mbox{.}(2021a)]%
        {DBLP:conf/itp/Hietala0HL021}
\bibfield{author}{\bibinfo{person}{Kesha Hietala}, \bibinfo{person}{Robert Rand}, \bibinfo{person}{Shih{-}Han Hung}, \bibinfo{person}{Liyi Li}, {and} \bibinfo{person}{Michael Hicks}.} \bibinfo{year}{2021}\natexlab{a}.
\newblock \showarticletitle{Proving Quantum Programs Correct}. In \bibinfo{booktitle}{\emph{Proc. of ITP 2021}}.
\newblock


\bibitem[Hietala et~al\mbox{.}(2021b)]%
        {DBLP:journals/pacmpl/Hietala0HW021}
\bibfield{author}{\bibinfo{person}{Kesha Hietala}, \bibinfo{person}{Robert Rand}, \bibinfo{person}{Shih{-}Han Hung}, \bibinfo{person}{Xiaodi Wu}, {and} \bibinfo{person}{Michael Hicks}.} \bibinfo{year}{2021}\natexlab{b}.
\newblock \showarticletitle{A verified optimizer for Quantum circuits}. In \bibinfo{booktitle}{\emph{Proc. of POPL 2021}}.
\newblock


\bibitem[Hoffmann and Jost(2022)]%
        {Hoffmann}
\bibfield{author}{\bibinfo{person}{Jan Hoffmann} {and} \bibinfo{person}{Steffen Jost}.} \bibinfo{year}{2022}\natexlab{}.
\newblock \showarticletitle{Two decades of automatic amortized resource analysis}.
\newblock \bibinfo{journal}{\emph{Math. Struct. Comput. Sci.}} \bibinfo{volume}{32}, \bibinfo{number}{6} (\bibinfo{year}{2022}).
\newblock


\bibitem[Hung et~al\mbox{.}(2019)]%
        {DBLP:journals/pacmpl/HungHZYHW19}
\bibfield{author}{\bibinfo{person}{Shih{-}Han Hung}, \bibinfo{person}{Kesha Hietala}, \bibinfo{person}{Shaopeng Zhu}, \bibinfo{person}{Mingsheng Ying}, \bibinfo{person}{Michael Hicks}, {and} \bibinfo{person}{Xiaodi Wu}.} \bibinfo{year}{2019}\natexlab{}.
\newblock \showarticletitle{Quantitative robustness analysis of quantum programs}. In \bibinfo{booktitle}{\emph{Proc. of POPL 2019}}.
\newblock


\bibitem[Javadi-Abhari et~al\mbox{.}(2024)]%
        {qiskit}
\bibfield{author}{\bibinfo{person}{Ali Javadi-Abhari}, \bibinfo{person}{Matthew Treinish}, \bibinfo{person}{Kevin Krsulich}, \bibinfo{person}{Christopher~J. Wood}, \bibinfo{person}{Jake Lishman}, \bibinfo{person}{Julien Gacon}, \bibinfo{person}{Simon Martiel}, \bibinfo{person}{Paul~D. Nation}, \bibinfo{person}{Lev~S. Bishop}, \bibinfo{person}{Andrew~W. Cross}, \bibinfo{person}{Blake~R. Johnson}, {and} \bibinfo{person}{Jay~M. Gambetta}.} \bibinfo{year}{2024}\natexlab{}.
\newblock \bibinfo{title}{Quantum computing with {Q}iskit}.
\newblock
\newblock
\showeprint{2405.08810}


\bibitem[Kaminski et~al\mbox{.}(2016)]%
        {runtime-analysis-probabilistic}
\bibfield{author}{\bibinfo{person}{Benjamin~Lucien Kaminski}, \bibinfo{person}{Joost-Pieter Katoen}, \bibinfo{person}{Christoph Matheja}, {and} \bibinfo{person}{Federico Olmedo}.} \bibinfo{year}{2016}\natexlab{}.
\newblock \showarticletitle{Weakest Precondition Reasoning for Expected Run--Times of Probabilistic Programs}. In \bibinfo{booktitle}{\emph{Proc. of ESOP 2016}}.
\newblock
\showISBNx{978-3-662-49498-1}


\bibitem[Knill(2022)]%
        {qram}
\bibfield{author}{\bibinfo{person}{E. Knill}.} \bibinfo{year}{2022}\natexlab{}.
\newblock \bibinfo{title}{Conventions for Quantum Pseudocode}.
\newblock
\newblock
\showeprint{2211.02559}


\bibitem[Lee et~al\mbox{.}(2021)]%
        {proto-quipper-l}
\bibfield{author}{\bibinfo{person}{Dongho Lee}, \bibinfo{person}{Valentin Perrelle}, \bibinfo{person}{Beno\^{i}t Valiron}, {and} \bibinfo{person}{Zhaowei Xu}.} \bibinfo{year}{2021}\natexlab{}.
\newblock \showarticletitle{{Concrete Categorical Model of a Quantum Circuit Description Language with Measurement}}. In \bibinfo{booktitle}{\emph{Proc. of FSTTCS 2021}}.
\newblock
\showISBNx{978-3-95977-215-0}
\showISSN{1868-8969}


\bibitem[Lewis et~al\mbox{.}(2023)]%
        {quantum-verification-survey}
\bibfield{author}{\bibinfo{person}{Marco Lewis}, \bibinfo{person}{Sadegh Soudjani}, {and} \bibinfo{person}{Paolo Zuliani}.} \bibinfo{year}{2023}\natexlab{}.
\newblock \showarticletitle{Formal Verification of Quantum Programs: Theory, Tools, and Challenges}.
\newblock \bibinfo{journal}{\emph{TQC}} \bibinfo{volume}{5}, \bibinfo{number}{1} (\bibinfo{year}{2023}).
\newblock


\bibitem[Li and Ying(2018)]%
        {DBLP:journals/pacmpl/LiY18}
\bibfield{author}{\bibinfo{person}{Yangjia Li} {and} \bibinfo{person}{Mingsheng Ying}.} \bibinfo{year}{2018}\natexlab{}.
\newblock \showarticletitle{Algorithmic analysis of termination problems for quantum programs}. In \bibinfo{booktitle}{\emph{Proc. of POPL 2018}}.
\newblock


\bibitem[Li et~al\mbox{.}(2014)]%
        {DBLP:journals/acta/LiYY14}
\bibfield{author}{\bibinfo{person}{Yangjia Li}, \bibinfo{person}{Nengkun Yu}, {and} \bibinfo{person}{Mingsheng Ying}.} \bibinfo{year}{2014}\natexlab{}.
\newblock \showarticletitle{Termination of nondeterministic quantum programs}.
\newblock \bibinfo{journal}{\emph{Acta Informatica}} \bibinfo{volume}{51}, \bibinfo{number}{1} (\bibinfo{year}{2014}).
\newblock


\bibitem[Liu et~al\mbox{.}(2022)]%
        {quantum-weakest}
\bibfield{author}{\bibinfo{person}{Junyi Liu}, \bibinfo{person}{Li Zhou}, \bibinfo{person}{Gilles Barthe}, {and} \bibinfo{person}{Mingsheng Ying}.} \bibinfo{year}{2022}\natexlab{}.
\newblock \showarticletitle{Quantum Weakest Preconditions for Reasoning about Expected Runtimes of Quantum Programs}. In \bibinfo{booktitle}{\emph{Proc. of LICS 2022}}.
\newblock
\showISBNx{9781450393515}


\bibitem[Martinis(2019)]%
        {sycamore}
\bibfield{author}{\bibinfo{person}{John Martinis}.} \bibinfo{year}{2019}\natexlab{}.
\newblock \bibinfo{title}{Quantum supremacy using a programmable superconducting processor}.
\newblock
\newblock
\urldef\tempurl%
\url{https://is.gd/v3VXFi}
\showURL{%
\tempurl}
\newblock
\shownote{Retrieved on Jul. 11, 2023}.


\bibitem[Mycroft et~al\mbox{.}(2016)]%
        {effect-systems-revisited}
\bibfield{author}{\bibinfo{person}{Alan Mycroft}, \bibinfo{person}{Dominic Orchard}, {and} \bibinfo{person}{Tomas Petricek}.} \bibinfo{year}{2016}\natexlab{}.
\newblock \showarticletitle{Effect Systems Revisited---Control-Flow Algebra and Semantics}.
\newblock In \bibinfo{booktitle}{\emph{Semantics, Logics, and Calculi: Essays Dedicated to Hanne Riis Nielson and Flemming Nielson on the Occasion of Their 60th Birthdays}}.
\newblock
\showISBNx{978-3-319-27810-0}


\bibitem[Nielson and Nielson(1999)]%
        {types-and-effects}
\bibfield{author}{\bibinfo{person}{Flemming Nielson} {and} \bibinfo{person}{Hanne~Riis Nielson}.} \bibinfo{year}{1999}\natexlab{}.
\newblock \showarticletitle{Type and Effect Systems}.
\newblock In \bibinfo{booktitle}{\emph{Correct System Design: Recent Insights and Advances}}. \bibinfo{publisher}{Springer Berlin Heidelberg}.
\newblock
\showISBNx{978-3-540-48092-1}


\bibitem[Preskill(2018)]%
        {nisq}
\bibfield{author}{\bibinfo{person}{John Preskill}.} \bibinfo{year}{2018}\natexlab{}.
\newblock \showarticletitle{Quantum {C}omputing in the {NISQ} era and beyond}.
\newblock \bibinfo{journal}{\emph{{Quantum}}}  \bibinfo{volume}{2} (\bibinfo{year}{2018}).
\newblock
\showISSN{2521-327X}


\bibitem[Regev(2009)]%
        {regev}
\bibfield{author}{\bibinfo{person}{Oded Regev}.} \bibinfo{year}{2009}\natexlab{}.
\newblock \showarticletitle{On lattices, learning with errors, random linear codes, and cryptography}.
\newblock \bibinfo{journal}{\emph{JACM}} \bibinfo{volume}{56}, \bibinfo{number}{6} (\bibinfo{year}{2009}).
\newblock


\bibitem[Rios and Selinger(2017)]%
        {proto-quipper-m}
\bibfield{author}{\bibinfo{person}{Francisco Rios} {and} \bibinfo{person}{Peter Selinger}.} \bibinfo{year}{2017}\natexlab{}.
\newblock \showarticletitle{A categorical model for a quantum circuit description language}. In \bibinfo{booktitle}{\emph{Proc. of QPL 2017}}.
\newblock


\bibitem[Ross(2015)]%
        {proto-quipper-s}
\bibfield{author}{\bibinfo{person}{Neil Ross}.} \bibinfo{year}{2015}\natexlab{}.
\newblock \emph{\bibinfo{title}{Algebraic and Logical Methods in Quantum Computation}}.
\newblock \bibinfo{thesistype}{Ph.\,D. Dissertation}. \bibinfo{school}{Dalhousie University}.
\newblock


\bibitem[Scherer and Hoffmann(2013)]%
        {closure-types}
\bibfield{author}{\bibinfo{person}{Gabriel Scherer} {and} \bibinfo{person}{Jan Hoffmann}.} \bibinfo{year}{2013}\natexlab{}.
\newblock \showarticletitle{Tracking Data-Flow with Open Closure Types}. In \bibinfo{booktitle}{\emph{Proc. of LPAR 2013}}.
\newblock


\bibitem[Schlosshauer(2007)]%
        {quantum-decoherence}
\bibfield{author}{\bibinfo{person}{Maximilian Schlosshauer}.} \bibinfo{year}{2007}\natexlab{}.
\newblock \bibinfo{booktitle}{\emph{Decoherence and the Quantum-To-Classical Transition}}.
\newblock \bibinfo{publisher}{Springer Berlin Heidelberg}.
\newblock


\bibitem[Selinger(2004)]%
        {survey-selinger}
\bibfield{author}{\bibinfo{person}{Peter Selinger}.} \bibinfo{year}{2004}\natexlab{}.
\newblock \showarticletitle{A Brief Survey of Quantum Programming Languages}. In \bibinfo{booktitle}{\emph{Proc. of FLOPS 2004}}.
\newblock


\bibitem[Shor(1994)]%
        {shor}
\bibfield{author}{\bibinfo{person}{P.W. Shor}.} \bibinfo{year}{1994}\natexlab{}.
\newblock \showarticletitle{Algorithms for quantum computation: discrete logarithms and factoring}. In \bibinfo{booktitle}{\emph{Proc. of FOCS 1994}}.
\newblock


\bibitem[Skorstengaard(2019)]%
        {logical-relations}
\bibfield{author}{\bibinfo{person}{Lau Skorstengaard}.} \bibinfo{year}{2019}\natexlab{}.
\newblock \bibinfo{title}{An Introduction to Logical Relations}.
\newblock
\newblock
\showeprint{1907.11133}


\bibitem[Sundaram et~al\mbox{.}(2022)]%
        {quantum-verification-types}
\bibfield{author}{\bibinfo{person}{Aarthi Sundaram}, \bibinfo{person}{Robert Rand}, \bibinfo{person}{Kartik Singhal}, {and} \bibinfo{person}{Brad Lackey}.} \bibinfo{year}{2022}\natexlab{}.
\newblock \bibinfo{title}{A Rich Type System for Quantum Programs}.
\newblock
\newblock
\showeprint{2101.08939}


\bibitem[Vazou et~al\mbox{.}(2014)]%
        {refinement-haskell}
\bibfield{author}{\bibinfo{person}{Niki Vazou}, \bibinfo{person}{Eric~L. Seidel}, \bibinfo{person}{Ranjit Jhala}, \bibinfo{person}{Dimitrios Vytiniotis}, {and} \bibinfo{person}{Simon Peyton-Jones}.} \bibinfo{year}{2014}\natexlab{}.
\newblock \showarticletitle{Refinement types for {Haskell}}. In \bibinfo{booktitle}{\emph{Proc. of ICFP 2014}}.
\newblock
\showISBNx{978-1-4503-2873-9}


\bibitem[Wang et~al\mbox{.}(2021)]%
        {DBLP:journals/jcss/WangLY21}
\bibfield{author}{\bibinfo{person}{Qisheng Wang}, \bibinfo{person}{Junyi Liu}, {and} \bibinfo{person}{Mingsheng Ying}.} \bibinfo{year}{2021}\natexlab{}.
\newblock \showarticletitle{Equivalence checking of quantum finite-state machines}.
\newblock \bibinfo{journal}{\emph{J. Comput. Syst. Sci.}}  \bibinfo{volume}{116} (\bibinfo{year}{2021}).
\newblock


\bibitem[Wittek(2014)]%
        {wittek}
\bibfield{author}{\bibinfo{person}{Peter Wittek}.} \bibinfo{year}{2014}\natexlab{}.
\newblock \bibinfo{booktitle}{\emph{Quantum machine learning: what quantum computing means to data mining}}.
\newblock \bibinfo{publisher}{Academic Press}.
\newblock


\bibitem[Yamakami(2020)]%
        {algebra-of-functions}
\bibfield{author}{\bibinfo{person}{Tomoyuki Yamakami}.} \bibinfo{year}{2020}\natexlab{}.
\newblock \showarticletitle{A Schematic Definition of Quantum Polynomial Time Computability}.
\newblock \bibinfo{journal}{\emph{J. Symb. Log.}} \bibinfo{volume}{85}, \bibinfo{number}{4} (\bibinfo{year}{2020}).
\newblock


\end{thebibliography}


\end{document}